\definecolor{darkred}  {rgb}{0.5,0,0}
\definecolor{darkblue} {rgb}{0,0,0.5}
\definecolor{darkgreen}{rgb}{0,0.5,0}
\pgfmathsetmacro\MathAxis{height("$\vcenter{}$")}
\renewcommand{\poly}{\mathrm{poly}}
\newtheorem{theorem}{Theorem}[section]
\newtheorem{corollary}[theorem]{Corollary}
\newtheorem{definition}{Definition}[section]
\newtheorem{lemma}[theorem]{Lemma}
\newtheorem{fact}{Fact}[section]
\newtheorem{claim}[theorem]{Claim}
\newcommand{\atg}{\mathrm{ATG}}
\newcommand{\ghz}{\mathrm{GHZ}}
\newcommand{\under}[2]{\underbrace{#1}_{\substack{#2}}}
\newcommand{\rec}{\mathsf{Rec}}
\newcommand{\rep}{\mathsf{Rep}}
\newcommand{\pframe}{\mathsf{Frame}}
\newcommand{\res}{\mathsf{Res}}
\newcommand{\cnot}{\mathsf{CNOT}}
\begin{document}
% \title{Single-shot logical state preparation for \\ arbitrary quantum LDPC codes}
\title{On fault tolerant single-shot logical state preparation \\ and robust long-range entanglement}

\author{Thiago Bergamaschi\thanks{UC Berkeley. \href{mailto:thiagob@berkeley.edu}{thiagob@berkeley.edu}}
\and
Yunchao Liu\thanks{UC Berkeley and Harvard University. \href{mailto:yunchaoliu@berkeley.edu}{yunchaoliu@berkeley.edu}}
}

\date{}

\maketitle

\begin{abstract}

% \yunchao{Fault tolerant single-shot logical state preparation and robust long-range entanglement}

Preparing encoded logical states is the first step in a fault-tolerant quantum computation. Standard approaches based on concatenation or repeated measurement incur a significant time overhead. The Raussendorf-Bravyi-Harrington cluster state~\cite{Raussendorf2005Longrange} offers an alternative: a single-shot preparation of encoded states of the surface code, by means of a constant depth quantum circuit, followed by a single round of measurement and classical feedforward~\cite{Bravyi2020Quantum}. In this work we generalize this approach and prove that single-shot logical state preparation can be achieved for arbitrary quantum LDPC codes. Our proof relies on a minimum-weight decoder and is based on a generalization of Gottesman's clustering-of-errors argument~\cite{Gottesman2013FaulttolerantQC}. As an application, we also prove single-shot preparation of the encoded GHZ state in arbitrary quantum LDPC codes. This shows that adaptive noisy constant depth quantum circuits are capable of generating generic robust long-range entanglement.
\end{abstract}

\section{Introduction}

Single-shot logical state preparation is a procedure in which a code state (such as the logical $\ket{0}$ or $\ket{+}$ state) of a quantum error correcting code is prepared by a constant depth quantum circuit, followed by one round of single qubit measurements, and an adaptive  Pauli correction using classical feedforward. In this paper, we prove that every CSS quantum LDPC code admits a single-shot state preparation procedure with logarithmic space overhead that is fault tolerant against local stochastic noise, generalizing prior work by \cite{Bravyi2020Quantum} for the surface code. 
%Our proof is based on a clustering-of-errors argument by \cite{Gottesman2013FaulttolerantQC} and \cite{Kovalev2012FaultTO}.

The fact that single-shot logical state preparation is feasible is non-trivial: after all, logical states of quantum LDPC codes have circuit lower bounds and cannot be prepared by a constant depth quantum circuit~\cite{Bravyi2006Lieb-Robinson,Haah2016Invariant,aharonov2018quantum}. This issue is circumvented via the additional measurement and classical feedforward. Due to the simplicity of the quantum circuit, there has been a recent theoretical and experimental interest in preparing physically interesting quantum states using these adaptive shallow circuits~\cite{Lu2022Measurement,lee2022decoding,lee2024symmetry,Bravyi2022Adaptive,Tantivasadakarn2023Hierarchy,chen2023realizing,Zhu2023Nishimori,Iqbal2024Topological,Tantivasadakarn2024LongRange}. Fault tolerance is a key aspect of a faithful experimental preparation of these states, where one must argue that the state is still interesting even though the shallow circuit is noisy. Our result provides a general recipe for fault tolerant state preparation, which can be used for state initialization in fault tolerant quantum computation, as well as a candidate for an experimental demonstration of long-range entangled states using adaptive shallow circuits that is robust to noise.

The intuition behind our result is to realize \emph{repeated measurement} via \emph{measurement-based quantum computation}. The standard approach to initialize a quantum LDPC code in the logical $\ket{+}$ state is as follows: first initialize all physical qubits in a code block in the $\ket{+}$ state, then perform repeated $Z$ and $X$ syndrome measurements, and finally perform a Pauli error correction using classical feedforward based on the syndrome measurement outcomes. A key observation here is that the repeated measurements are non-adaptive, thus we can attempt to simulate this process using a cluster state, achieving a space-time trade-off. 

This naturally leads to a construction which we refer to as the Alternating Tanner Graph state $\ket{\atg}$ (\cref{fig:atg}, also known as a foliated quantum code~\cite{Bolt2016Foliated}): there are $2T+1$ copies of the code block. A copy of the $Z$ Tanner graph is placed on each odd layer (simulating $Z$ syndrome measurements), and a copy of the $X$ Tanner graph is placed on each even layer (simulating $X$ syndrome measurements). Vertical connections are added between code qubits of neighboring layers, which is used to propagate quantum computation in the time direction. To prepare this cluster state in constant depth, we initialize all qubits in the $\ket{+}$ state and apply a $\mathrm{CZ}$ gate on each edge. When applied to the surface code, the $\ket{\atg}$ is equivalent to the Raussendorf-Bravyi-Harrington (RBH) cluster state~\cite{Raussendorf2005Longrange}.

\begin{figure}[btp]

%\begin{subfigure}[b]{0.4\textwidth}
%    \includegraphics[width=0.5\linewidth]{}
%    \caption{The Tanner Graph of a qLDPC code.}
%\end{subfigure}
\begin{subfigure}[b]{0.5\textwidth}
\centering
    \includegraphics[width = 0.65\linewidth]{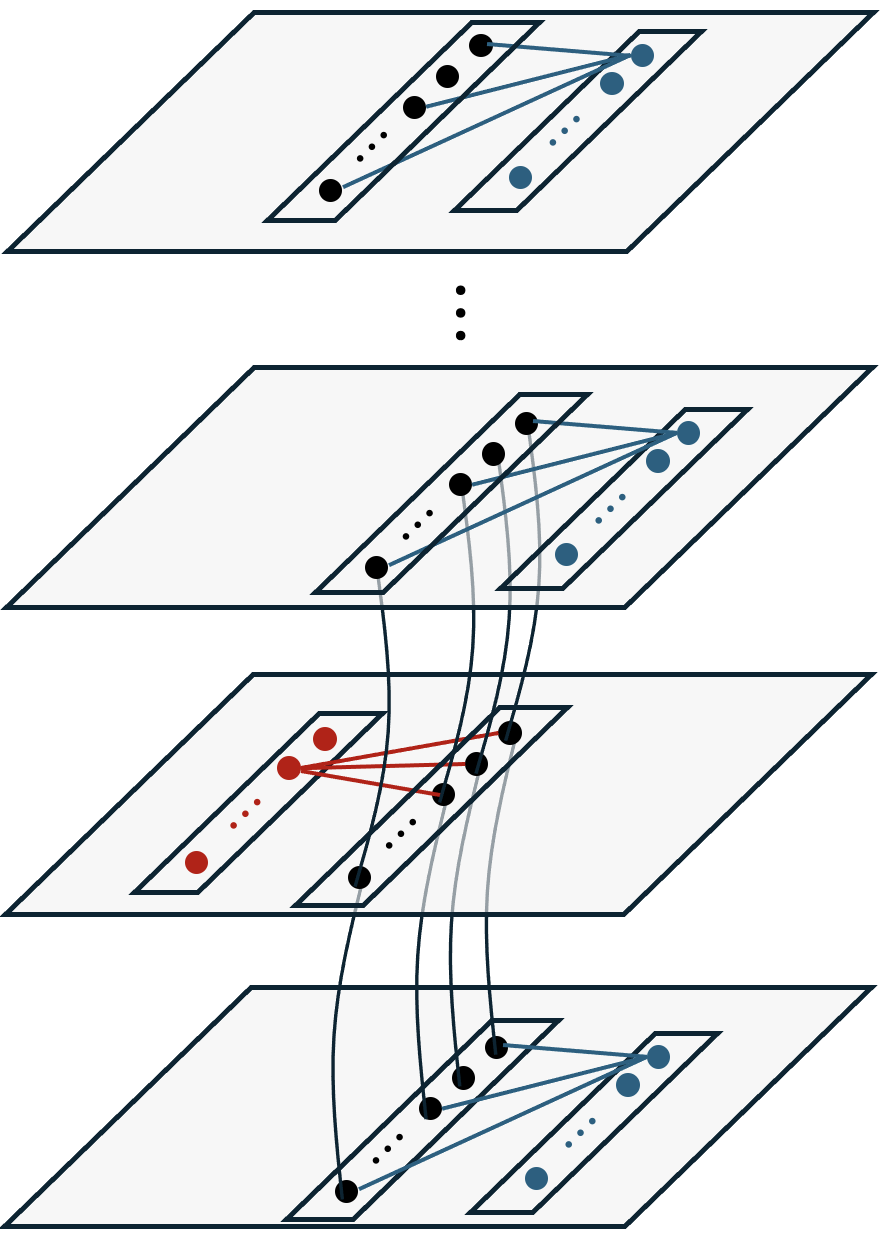}
    \caption{The Alternating Tanner Graph state $\ket{\atg}$}
\end{subfigure}
\begin{subfigure}[b]{0.5\textwidth}
\centering
    \includegraphics[width = 0.65\linewidth]{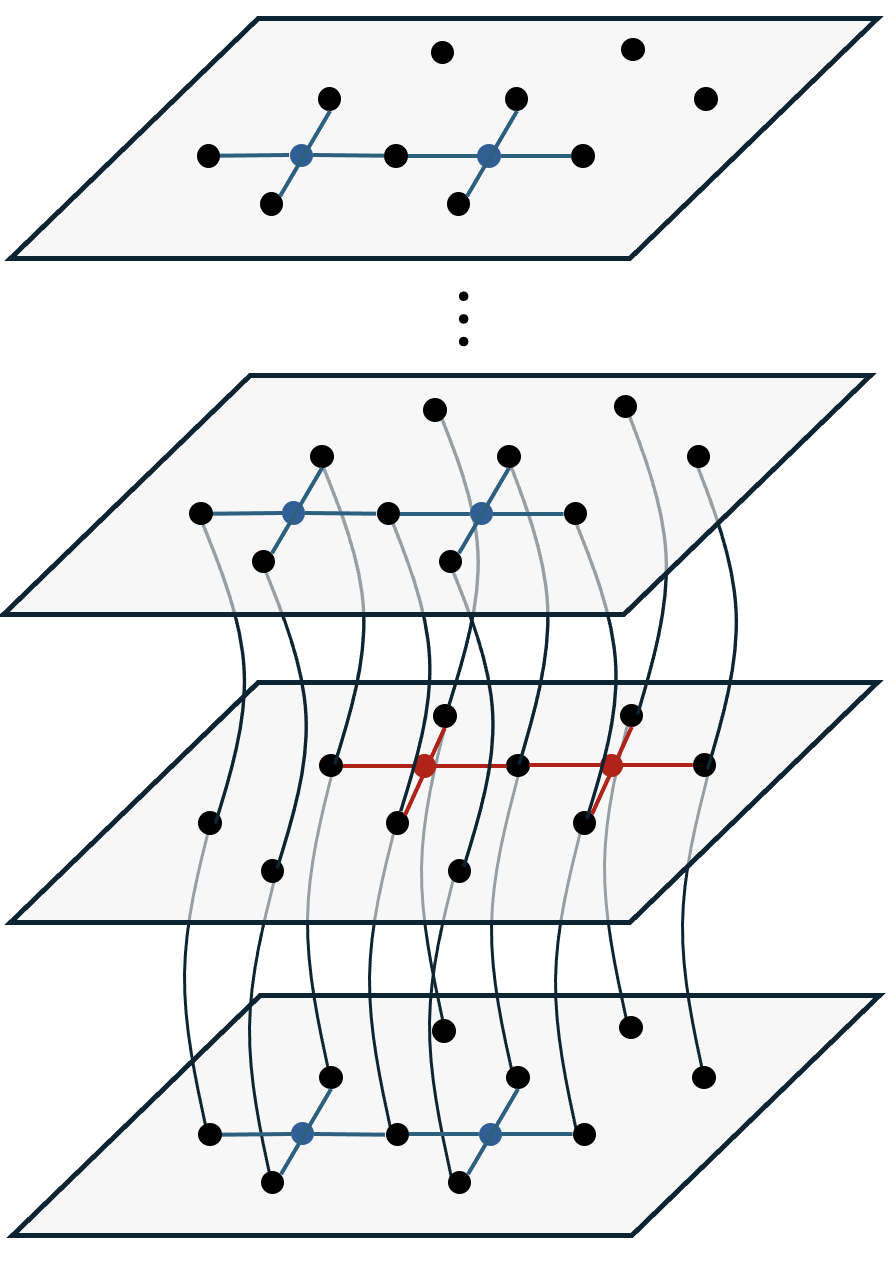}
    \caption{$\ket{\atg}$ of the surface code}
\end{subfigure}
\caption{(a) The ATG is a graph state, defined on a vertex set comprised of layers of copies of the code block and the $Z$ (resp. $X$) Tanner graph (in blue, resp. red) of the qLDPC code. (b) The RBH cluster state~\cite{Raussendorf2005Longrange}, which prepares code states of the surface code in a 3D cubic lattice arrangement, is a special case of this construction.} 
\label{fig:atg}
\end{figure}

\subsection{Results}
We start with an informal description of single-shot logical state preparation. Consider a set of noisy qubits divided into two subsystems $A$ and $B$. Single-shot logical state preparation is performed via the following procedure:
\begin{enumerate}
    \item Initialize all qubits in the (noisy) $\ket{+}$ state.
    \item Perform a (noisy) constant depth quantum circuit $W$.
    \item Measure all qubits in $A$ in the Hadamard ($\ket{+},\ket{-}$) basis, obtaining a bitstring $s\in\{0,1\}^{|A|}$. In the absence of noise, the unnormalized post-measurement state equals $\ketbra{\pm_s}_A\cdot W\ket{+}_{AB}$.
    \item Use a classical computer to calculate a Pauli operator $P(s)$ (supported on $B$) based on $s$, and apply $P(s)$ to the post-measurement state.\footnote{Classical computation is assumed to be performed instantly, such that there is no additional noise on the quantum state during this step of the computation.} 
\end{enumerate}
In the absence of noise, the final (unnormalized) state is given by
\begin{equation}
    \bigg(\ketbra{\pm_s}_A\otimes P(s)_B\bigg)\cdot W\ket{+}_{AB}.
\end{equation}

Let us begin by understanding what this procedure can achieve in the absence of noise. With a suitable choice of $W$, one can imagine that step 3 is measuring the stabilizers/parity checks of a quantum LDPC code (where $B$ contains code qubits and $A$ contains ancilla qubits for syndrome measurement), which collapses the state into a random syndrome subspace $s$ (some eigenspace of the code Hamiltonian). Step 4 then performs syndrome decoding and corrects the Pauli error, thus preparing an encoded code state.\footnote{What code state are we preparing? For example, assume there is only one logical qubit. Notice that the logical $X$ operator of the CSS LDPC code is stabilized by the initial state and commutes with the measurement. Therefore we are able to prepare the encoded $\ket{+}$ state. The encoded $\ket{0}$ state can be prepared similarly by initializing all qubits in $\ket{0}$.}

The key question is what can the above procedure achieve in the noisy setting, where every initial qubit, gate and measurement is subject to some small constant amount of noise (throughout this section, we assume the noise rate is below some constant threshold). Ideally, we would like to achieve \emph{fault-tolerant} single-shot logical state preparation, which means that the final state in the presence of noise is the desired encoded logical state, up to some residual stochastic noise. This residual noise is benign in the sense that it is correctable with the LDPC code; and the final state (with residual noise) can be used as the initial state of a fault tolerant quantum computation.

%The conceptual challenge to achieve fault tolerance here is that the noisy quantum circuit only has constant depth, therefore we cannot hope to correct errors during the state preparation procedure using standard means. The RBH cluster state~\cite{Raussendorf2005Longrange} is a carefully designed construction on a 3D lattice, where the bulk is measured and the surfaces encode a logical state in the surface code. At a high level, the fault tolerance of RBH is achieved by embedding redundancy into the ancilla qubits. Here we give a general construction that works for arbitrary quantum LDPC codes. 

% Here, we generalize their construction to arbitrary quantum LDPC codes.

The conceptual challenge to achieve fault tolerance here is that the noisy quantum circuit only has constant depth, therefore we cannot hope to correct errors during the state preparation procedure using standard means. The RBH cluster state and its generalization, the ATG \cite{Raussendorf2005Longrange, Bolt2016Foliated}, is a carefully designed construction where after the bulk of the state is measured, the remaining boundaries/surfaces encode a logical state in the quantum error correcting code up to a Pauli correction. The intuition that this scheme might be fault tolerant lies in the redundancy embedded into the bulk ancilla qubits. Here, we give a general fault-tolerance proof that works for arbitrary quantum LDPC codes. 

\begin{theorem}\label{thm:stateprepinformal} Fix an integer $T$. There exists a fault tolerant, single shot logical state preparation procedure for the encoded $\ket{\overline{0}}, \ket{\overline{+}}$ states\footnote{Note that there are $k$ logical qubits, and we can prepare either $\ket{\overline{0}^{\otimes k}}$ or $\ket{\overline{+}^{\otimes k}}$.} of an arbitrary $[[n,k,d]]$ CSS LDPC code, using $O(n\cdot T)$ ancilla qubits, with success probability at least $1-nT\cdot 2^{-\Omega(\min(d, T))}$.
\end{theorem}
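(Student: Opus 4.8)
The plan is to view the $\ket{\atg}$ cluster state --- which uses $O(nT)$ qubits, since the LDPC property makes each of the $2T+1$ layers carry only $O(n)$ check qubits --- as a CSS ``space-time'' code whose stabilizer group is generated by the cluster stabilizers $S_v = X_v \prod_{u\sim v} Z_u$, and to argue correctness in two stages: first that, noiselessly, Hadamard-basis measurement of the bulk together with the feedforward Pauli $P(s)$ prepares the encoded state exactly, and then that under local stochastic noise a minimum-weight decoder leaves only benign residual noise. For the noiseless stage I would observe that, because odd layers carry the $Z$-Tanner graph and even layers the $X$-Tanner graph, measuring the check qubits on a given layer reveals --- up to a sign fixed by the measurement outcomes of the adjacent layers --- the value of a $Z$- or $X$-parity check applied to an effective code state propagated along the vertical edges; collapsing all $2T+1$ layers thus exactly simulates $T$ rounds of non-adaptive $Z$ and $X$ syndrome measurement on the code block supported on the top boundary. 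Since the logical $\overline{X}$ (resp.\ $\overline{Z}$) operators are stabilized by the all-$\ket{+}$ (resp.\ all-$\ket{0}$) initialization and commute with every measurement, the post-correction boundary state is $\ket{\overline{+}}$ (resp.\ $\ket{\overline{0}}$).

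For fault tolerance, I would model all preparation, $\mathrm{CZ}$-gate and measurement faults as local stochastic noise and push them through the constant-depth circuit. Because $\ket{\atg}$ has bounded degree (the LDPC property) and the preparation circuit has $O(1)$ depth, each elementary fault maps to a Pauli error on $O(1)$ qubits and violates $O(1)$ of the space-time checks. Define a \emph{fault cluster} as a connected component of the graph in which two faults are joined when their propagated supports, together with the checks they violate, intersect; this graph again has bounded degree, so the number of size-$\ell$ connected clusters through a fixed vertex is at most $c^{\ell}$ for a constant $c$. A union bound then gives that, below a constant noise threshold, the probability that some cluster has size $\ge \ell$ is at most $nT\cdot 2^{-\Omega(\ell)}$; this is the quantitative form of Gottesman's clustering-of-errors argument.

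Conditioning on the event --- of probability $1 - nT\cdot 2^{-\Omega(\min(d,T))}$ --- that every fault cluster has size below $c'\cdot\min(d,T)$, I would argue that minimum-weight decoding succeeds up to benign residual noise. The combined physical-plus-decoder error $E\cdot P(s)$ has trivial syndrome in the space-time code, and since the clusters are small and pairwise non-adjacent, $E\cdot P(s)$ factors as a product of locally supported, individually trivial-syndrome errors; restricted to the top boundary each factor is either a product of LDPC-code stabilizers or a Pauli error of weight far below $d$, so the boundary state equals the intended encoded state up to a residual Pauli correctable by the LDPC code. The quantity $\min(d,T)$ is exactly what is needed: a genuine logical fault of the space-time code is either a spatial string of weight $\ge d$ (a code logical at a fixed time) or a timelike string spanning all $\sim T$ layers (a persistently mismeasured syndrome), and forbidding clusters of size $\min(d,T)$ excludes both.

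I expect the main obstacle to be the adaptation of the clustering argument from the native repeated-measurement setting to measurement-based computation on an arbitrary Tanner graph: one must define the space-time decoding graph and the cluster-adjacency relation so that single faults are local, clusters of a given size are exponentially rare, and --- the delicate part --- small-cluster, trivial-syndrome errors on $\ket{\atg}$ provably restrict to correctable errors on the output boundary. Unlike the surface-code case there is no geometric lattice to appeal to, so this last property must be extracted purely from the LDPC code's distance together with the combinatorics of the foliated check structure, and getting the constants to line up so that the same $\min(d,T)$ bounds both the spatial and the temporal failure modes is where the care is required.
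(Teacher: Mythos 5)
Your overall architecture matches the paper's: build the ATG on $O(nT)$ qubits, read off syndromes and meta-check redundancies from the bulk measurement, run a minimum-weight decoder, and control failure by a percolation/union-bound over connected clusters on a bounded-degree space-time graph, with $\min(d,T)$ separating spatial (logical) from temporal (boundary-spanning) failure modes. The noiseless stage and the $2^{-\Omega(\min(d,T))}$ bookkeeping are fine.

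The genuine gap is in the step where you condition on every \emph{fault} cluster being small and then assert that the combined error $E\cdot P(s)$ ``factors as a product of locally supported, individually trivial-syndrome errors.'' Your clusters are connected components of the \emph{actual faults} $E$, but the minimum-weight correction $P(s)$ is a global object determined only by the syndrome: nothing a priori confines its support to the neighborhoods of the true faults, so two small, well-separated fault clusters could in principle be joined by a long min-weight correction string, and the factorization you need does not follow. The paper resolves this by running the percolation argument over connected components of the \emph{mismatch} $E\cdot P(s)$ on the syndrome adjacency graph, and supplying the missing bridge as a separate lemma (\cref{lemma:cc_weight}): within any maximal mismatch component $K$, the inferred error has weight at most that of the true error, proved by a swapping argument (if not, replace the decoder's guess by the truth on $K$; maximality of $K$ guarantees the swap stays syndrome-consistent, contradicting minimality). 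This yields that a mismatch cluster of size $s$ contains at least $s/2$ (resp.\ $s/4$ once the boundary residual is included) genuine faults, which is exactly what converts the combinatorial count $z^s 4^s$ of clusters into a convergent series $(\mathrm{poly}(z)\sqrt{p})^s$. Without this lemma your union bound attaches probability weight $p^{\Omega(s)}$ to the wrong objects, and the local-stochastic bound on the residual error (which requires summing over all mismatch-cluster configurations covering a fixed set $S\subset\partial$, as in \cref{lemma:residual_stochastic}) is not established. You correctly flag this as ``the delicate part'' at the end, but the swapping/minimality idea is the content that is missing.
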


As discussed earlier, the Alternating Tanner Graph (ATG) state (\cref{fig:atg}) naturally gives rise to redundancies akin to those in a repeated syndrome measurements protocol. Our proof of fault-tolerance leverages an information-theoretic minimum-weight decoder, and is based on the clustering-of-errors argument by \cite{Gottesman2013FaulttolerantQC, Kovalev2012FaultTO} for decoding LDPC codes from repeated measurements (see below for a more detailed overview).

%As discussed earlier, the Alternating Tanner Graph (ATG) state draws inspiration from repeated syndrome measurements, and naturally gives rise to redundancies between adjacent syndrome measurements. Our proof of fault-tolerance leverages an information-theoretic minimum-weight decoder, and is based on the clustering-of-errors argument by \cite{Gottesman2013FaulttolerantQC, Kovalev2012FaultTO} for decoding LDPC codes from repeated measurements (see below for a more detailed overview). 

%As discussed earlier, our construction called the Alternating Tanner Graph (ATG) state (\cref{fig:atg}) is designed by drawing inspiration from repeated syndrome measurements, which is the standard approach for quantum error correction in a LDPC code. Similar to RBH, the fault tolerance of the ATG construction comes from redundancy in the ancilla qubits. The redundancy in the ATG is achieved by repeating the code block many times, while connecting neighboring blocks and alternatively attaching $Z$ and $X$ Tanner graphs. To achieve \cref{thm:stateprepinformal}, we prepare the graph state that corresponds to the ATG in constant depth, measure all check (blue and red) qubits, and measure all code (black) qubits except for the first layer, and perform a suitable decoding and feedforward. We give a proof sketch for the fault tolerance of this construction at the end of this section.

% \newpage

\paragraph{Efficient decoding.} In our single-shot logical state preparation result discussed above, the classical decoding procedure (computing $P(s)$ from $s$) relies on an information-theoretic minimum weight decoder, and is not necessarily computationally efficient.\footnote{By dynamic programming, the decoding algorithm can be made to run in $\leq T\cdot \exp(n)$ time.} A natural question is whether an efficient decoder that runs in $\poly(n)$ time exists for single-shot logical state preparation, especially if the code is known to have an efficient decoder for quantum error correction.\footnote{We emphasize the distinction between a decoder for error-correction, and a decoder for state preparation. } Here, we complement our results by showing a
simple reduction to the efficient decoding of logical state preparation based on repeated syndrome measurements.

\begin{theorem}\label{thm:stateprepmbqcinformal}
    If a family of $[[n, k, d]]$ CSS LDPC codes admits an efficient decoder for fault-tolerant logical state preparation based on $T$ rounds of repeated syndrome measurements, then it further admits an efficient decoder for a fault-tolerant single shot logical state preparation procedure using $O(n\cdot T)$ ancilla qubits. 
\end{theorem}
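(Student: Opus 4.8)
The plan is to establish \Cref{thm:stateprepmbqcinformal} by a direct \emph{simulation} argument: show that the fault-tolerant single-shot procedure on the ATG state, restricted to the task of decoding, is equivalent to a fault-tolerant logical state preparation based on $T$ rounds of repeated syndrome measurements, up to a relabeling of errors. Concretely, I would first recall the ``circuit-to-cluster-state'' dictionary underlying measurement-based quantum computation: the $\ket{\atg}$ state is exactly the cluster state that, when all bulk qubits are measured in the $\ket{+},\ket{-}$ basis, teleports the identity-initialized $\ket{+}^{\otimes n}$ code block through $T$ rounds in which odd layers enact $Z$-syndrome extraction and even layers enact $X$-syndrome extraction (this is the content of the discussion around \Cref{fig:atg}). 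In the noiseless case this is an exact identity of quantum channels; the point of the reduction is that it remains an identity \emph{at the level of the decoding problem} once noise is added.

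Second, I would set up the noise correspondence. Local stochastic noise on the ATG preparation circuit — faulty $\ket{+}$ initializations, faulty $\mathrm{CZ}$ gates, faulty single-qubit measurements — propagates, through the constant-depth Clifford circuit $W$ and the Pauli frame of the measurement-based simulation, to a Pauli error pattern on the simulated repeated-measurement history: data errors on each of the $T$ layers and measurement (syndrome-bit) errors on each syndrome round. The key structural fact is that this propagation is \emph{local}: because $W$ has constant depth and the ATG has bounded degree, each elementary fault in the ATG maps to an error supported on $O(1)$ data/measurement locations of the repeated-measurement model, and the induced distribution is again local stochastic with a comparable (constant-factor-degraded) rate. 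Hence the syndrome $s$ obtained from the ATG measurements is, after the deterministic affine post-processing dictated by the MBQC Pauli frame, \emph{the same syndrome bitstring} that the repeated-measurement protocol would see under the image noise model. Therefore the assumed efficient decoder for the $T$-round repeated-measurement protocol, precomposed with this linear-time syndrome transformation, yields a Pauli correction $P(s)$ on $B$ that is guaranteed (by the hypothesis) to return the code to the correct logical state up to residual correctable noise — and it runs in $\poly(n)$ time since the transformation is computable in $O(nT)$ time and the ATG has $O(nT)$ qubits.

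Third, I would close the loop on correctness and resources. The ancilla count is $O(n\cdot T)$ by construction of the ATG ($2T+1$ layers of $n$ code qubits plus the Tanner-graph check vertices, which number $O(n)$ per layer by the LDPC property). Fault tolerance follows because we have reduced to the hypothesis: the residual-noise guarantee of the repeated-measurement decoder transfers through the (noise-model-preserving, syndrome-preserving) simulation, so the output of the single-shot procedure is the desired $\ket{\overline{0}}$ or $\ket{\overline{+}}$ up to correctable local stochastic noise. One subtlety worth spelling out is that ``decoder for logical state preparation'' must be interpreted as outputting a correction for \emph{both} the logical-coset ambiguity and the residual physical error, matching exactly what the ATG post-processing step $P(s)$ needs; I would make sure the two notions of decoder are formally aligned in a preliminary definition.

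The main obstacle I expect is the bookkeeping of the Pauli-frame / measurement-outcome propagation: verifying that the map from ATG fault patterns to repeated-measurement error patterns is genuinely local and that the induced syndrome (after the affine correction from byproduct operators) coincides on the nose with the repeated-measurement syndrome, rather than merely being ``close.'' This requires carefully tracking how a single faulty $\mathrm{CZ}$ or measurement on a given edge/vertex of the ATG contributes to $X$- and $Z$-type errors across at most two adjacent layers, and confirming that no error is amplified into a growing cluster by the circuit $W$ — which is where the constant-depth and bounded-degree assumptions do the work. Once that locality lemma is in hand, the rest of the reduction is essentially definitional.
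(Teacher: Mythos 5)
Your proposal follows essentially the same route as the paper's proof (Appendix~\ref{section:mbqc}, \cref{theorem:mbqc_mapping}): build the ATG as the MBQC simulation of the repeated-measurement circuit, reduce bulk faults to $Z$-errors (i.e.\ bit-flips on the measured string), and recover the repeated-measurement syndromes by subtracting the cumulative teleportation-byproduct syndromes $s_{\mathsf{Z}}^j = z_{\mathsf{Z}}^j + \mathsf{Syn}_Z\big(\sum_i^j z_{\mathsf{Code}}^{2i-1}\big)$ so that the assumed decoder $(\pframe,\res)$ applies verbatim. The ``affine post-processing'' and ``locality of the fault correspondence'' you flag as the main obstacles are precisely the content of the paper's recurrence check, so the proposal is correct and matches the paper's argument.
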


In other words, one can tradeoff time for space in logical state preparation. As discussed earlier, this reduction stems from mapping the repeated measurements circuit to a constant depth circuit via measurement based quantum computation, and as we illustrate naturally gives rise to the alternating tanner graph state $\ket{\atg}$.

\begin{figure}[t]
    \centering
    \begin{subfigure}[b]{0.49\textwidth}
\centering
\includegraphics[width=0.9\linewidth]{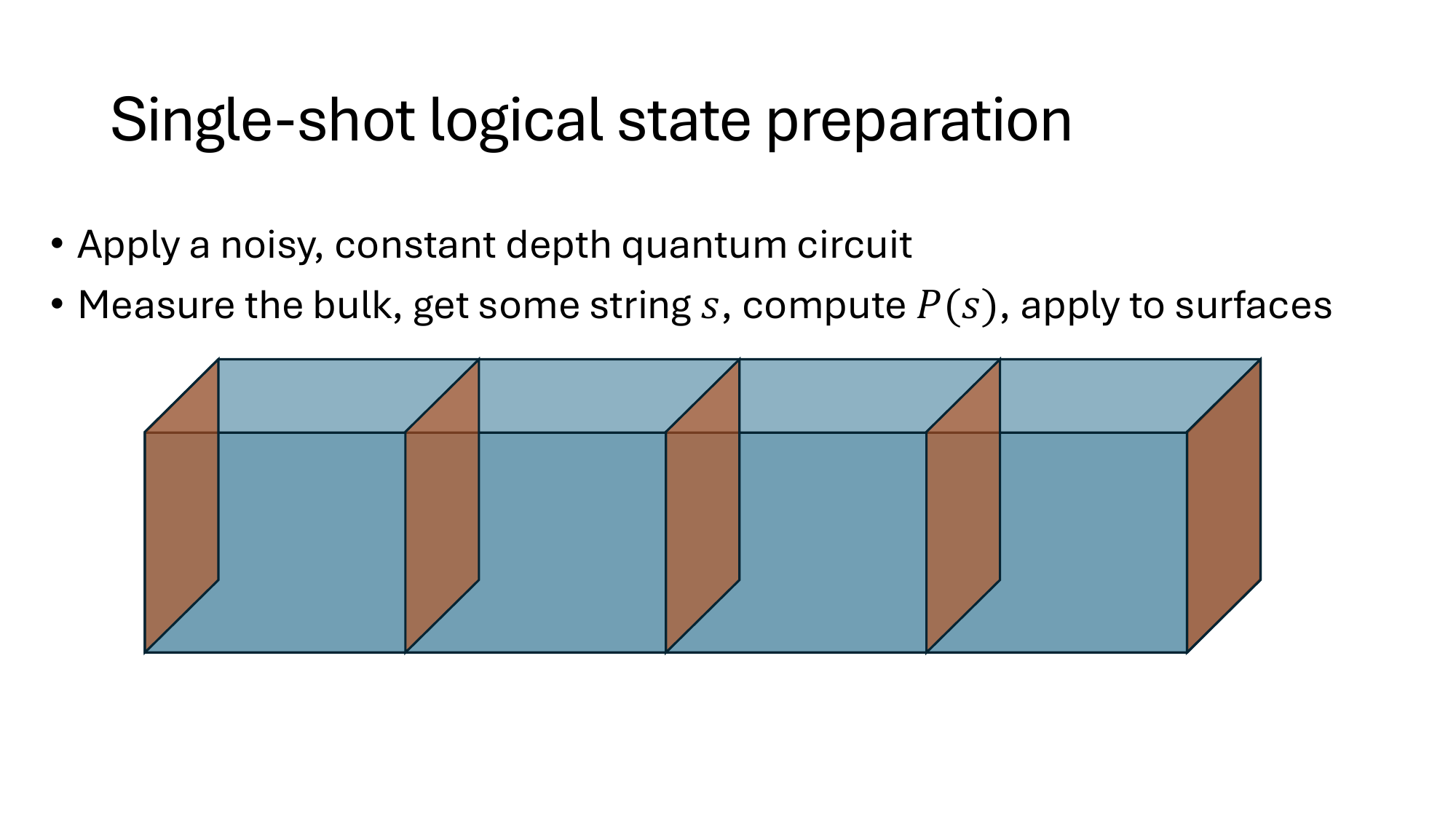}
    \caption{Measurement pattern of $\ket{\atg}$}
    \label{fig:ghz1}
\end{subfigure}
\begin{subfigure}[b]{0.49\textwidth}
\centering
\includegraphics[width=0.9\linewidth]{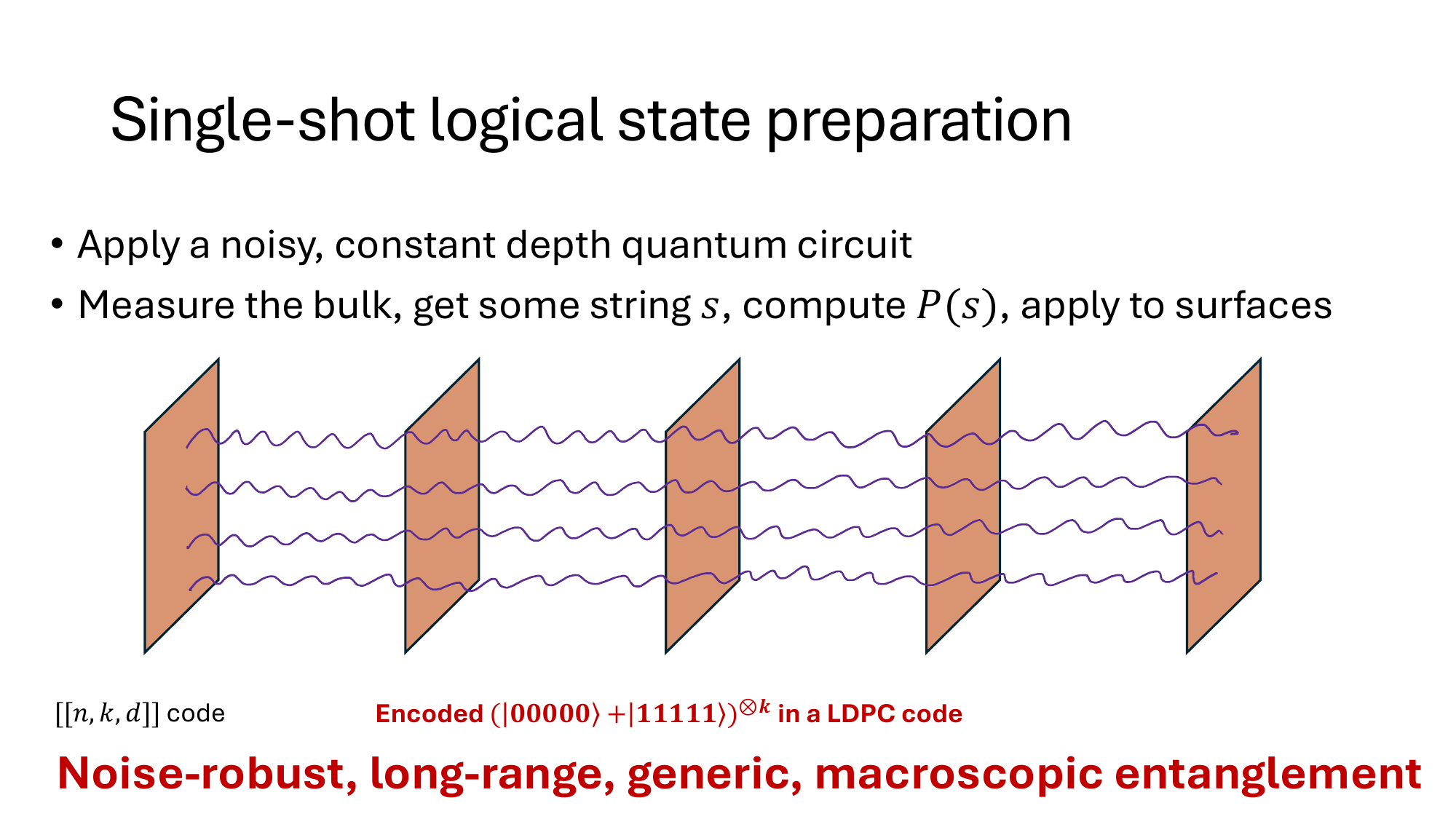}
    \caption{Encoded GHZ state}
    \label{fig:ghz2}
\end{subfigure}
    \caption{(a) We prepare $\ket{\atg}$ of an arbitrary LDPC code using a noisy shallow circuit and measure the bulk (blue), leaving $m$ surfaces (orange) unmeasured. (b) After decoding and feedforward, the state becomes the encoded $\ghz$ state across the $m$ surfaces, up to residual stochastic noise.}
    \label{fig:ghz}
\end{figure}

\paragraph{Robust long-range entanglement.} Next we show that the ATG provides a powerful framework that goes beyond the preparation of encoded $\ket{\overline{0}}$ and $\ket{\overline{+}}$ as in logical state preparation based on repeated measurements, due to the flexibility in \emph{choosing different measurement patterns} in the ATG. As an example, we show fault tolerant single-shot preparation of the encoded GHZ state 
\begin{equation}
    \ket{\ghz_m} = \frac{1}{\sqrt{2}}\left(\ket{0}^{\otimes m}+\ket{1}^{\otimes m}\right)
\end{equation}
for any $m\geq 2$ in an arbitrary quantum LDPC code. Here we summarize the algorithm for GHZ state preparation, and then discuss this result from the perspective of robust long-range entanglement and noisy shallow circuits.

To prepare the GHZ state, we measure the ATG in the following pattern, depicted (horizontally) in \cref{fig:ghz1}: we measure all the check qubits of all code blocks (blue and red in \cref{fig:atg}), as well as all code qubits in the bulk (blue in \cref{fig:ghz1}), with the exception of a careful choice of $m$ layers of $n$ code qubits (in orange). We prove that the resulting state, after minimum-weight decoding and feedforward, is in fact the encoded GHZ state across the $m$ blocks (\cref{fig:ghz2}) up to residual stochastic noise.

%We start by presenting the ATG in a different way in \cref{fig:ghz1}, where the repetition of the code block is in the horizontal direction instead of vertical. Similar to \cref{fig:atg}, there are many copies of the code block (horizontally), where most are colored in blue (bulk) with $m$ copies colored in orange. Consider the following measurement pattern: we measure all check qubits of all code blocks (blue and red in \cref{fig:atg}), as well as all code qubits in the bulk (blue in \cref{fig:ghz1}). There are $m\cdot n$ unmeasured qubits in total, which can be viewed as $m$ copies of the code block (\cref{fig:ghz2}). We show that the resulting state after decoding and feedforward is in fact the encoded GHZ state across the $m$ blocks (\cref{fig:ghz2}). Note that there are $k$ logical qubits in each block; the first logical qubit of the $m$ blocks are in an encoded GHZ state while the second logical qubit of the $m$ blocks are in a separate GHZ state, and so on.

\begin{theorem}\label{thm:ghzinformal}
    Fix integers $m, T$. There exists a fault tolerant, single shot logical state preparation procedure for the encoded $\ket{\overline{\ghz^{\otimes k}_m}}$ state of an arbitrary $[[n,k,d]]$ CSS LDPC code, using $O(n\cdot T)$ ancilla qubits, with success probability at least $1-nT\cdot 2^{-\Omega(\min(d, T/m))}$.
\end{theorem}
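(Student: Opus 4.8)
The plan is to follow the same two-part strategy used to establish \cref{thm:stateprepinformal}, now instantiated with the GHZ measurement pattern of \cref{fig:ghz1}: first verify correctness of the construction in the noiseless setting, and then upgrade to fault tolerance via a clustering-of-errors analysis together with an information-theoretic minimum-weight decoder, the only genuinely new ingredient being that the relevant ``effective distance'' becomes $\min(d, T/m)$ rather than $\min(d,T)$.

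For the noiseless correctness step, I would take $B$ to be the $mn$ code qubits on the $m$ unmeasured surfaces and $A$ to be all remaining qubits of $\ket{\atg}$ (all check qubits of every layer, together with all bulk code qubits), with $W$ the constant-depth circuit preparing $\ket{\atg}$. It then suffices to compute the stabilizer group of the post-measurement state on $B$ as a function of the Hadamard-basis outcomes $s$: starting from the graph-state stabilizers $X_v\prod_{u\sim v}Z_u$ of $\ket{\atg}$, each $\ket{\pm}$-measurement in $A$ contributes its outcome sign and forces passage to the subgroup of stabilizers supported away from that qubit. The claim to verify is that what survives on $B$ generates exactly the stabilizer group of the encoded $\ket{\overline{\ghz^{\otimes k}_m}}$ state --- the $X$- and $Z$-checks of each of the $m$ surfaces, together with, for every logical index $j\in[k]$, the global correlation $\overline X^{(1)}_j\cdots\overline X^{(m)}_j$ (coming from the $\ket{+}$ initialization propagated through the $X$-Tanner layers) and the consecutive-surface correlations $\overline Z^{(i)}_j\overline Z^{(i+1)}_j$ (coming from the vertical $\mathrm{CZ}$ chains linking neighbouring surfaces once the intervening bulk has been measured out) --- up to a Pauli $P(s)$ on $B$ fixed by $s$. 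Applying $P(s)^{-1}$ by feedforward then prepares the state. Equivalently, one recognizes this pattern as the measurement-based implementation of the circuit that runs syndrome extraction on a single code block and, at $m$ scheduled times spaced $\Theta(T/m)$ rounds apart, transversally ``forks'' a fresh $\ket{\overline 0}$ code block; this is also the route that lets one invoke \cref{thm:stateprepmbqcinformal} for the efficient-decoding version.

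For fault tolerance, I would model every faulty initial state, $\mathrm{CZ}$ gate and measurement by an adversarial local stochastic Pauli error, so the observed outcome string is $s = s_0 \oplus \sigma(E)$ with $s_0$ the noiseless value, $E$ the cumulative Pauli error, and $\sigma(\cdot)$ its syndrome. The minimum-weight decoder returns a correction $C$ with $\sigma(C)=\sigma(E)$, and I must show that except with the claimed probability the residual Pauli $R=E\,C$ acts on the $m$ surfaces as a stabilizer of the encoded GHZ state, i.e.\ it violates no surface check and flips none of the correlations $\overline X^{(1)}_j\cdots\overline X^{(m)}_j$ or $\overline Z^{(i)}_j\overline Z^{(i+1)}_j$. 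Since $R$ is syndrome-free it decomposes into syndrome-free connected clusters in the bounded-degree Tanner graph of $\ket{\atg}$, each no heavier than its complement within the same connected error chain by the minimum-weight property. The combinatorial core is to show that any cluster able to produce a logical GHZ fault has weight $\Omega(\min(d,T/m))$: a fault localized to a single surface needs a chain of weight $\ge d$ by the code distance, whereas a fault that changes an inter-surface $Z$-correlation (or the global $X$-correlation) needs a ``timelike'' chain traversing the $\Theta(T/m)$ bulk layers separating two neighbouring surfaces. A standard Peierls/contour union bound --- at most $\exp(O(w))$ clusters of weight $w$ through a fixed location, each of probability at most $p^{\Omega(w)}$, over $O(nT)$ locations --- then yields failure probability $nT\cdot 2^{-\Omega(\min(d,T/m))}$.

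I expect this combinatorial core of the fault-tolerance step to be the main obstacle: pinning down \emph{all} mechanisms by which a low-weight residual error can create a logical GHZ fault, and certifying that the minimum-weight decoder cannot introduce one below weight $\Omega(\min(d,T/m))$. Relative to the $\ket{\overline+}$ case of \cref{thm:stateprepinformal}, the GHZ pattern is less symmetric and carries $m$ distinct surfaces plus $m-1$ inter-surface correlations to protect, and one must verify that the minimum spacing $\Theta(T/m)$ between the chosen surfaces is exactly what governs the timelike distance --- this is what produces the $\min(d,T/m)$ in the exponent. The remaining ingredients (the noiseless stabilizer bookkeeping, the reduction to local stochastic Pauli noise, and the Peierls counting) should be routine adaptations of the machinery already developed for \cref{thm:stateprepinformal,thm:stateprepmbqcinformal}.
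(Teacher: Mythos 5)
Your proposal is correct and follows essentially the same route as the paper: the noiseless step identifies the same post-measurement stabilizers (surface checks, the global logical $\overline{X}$ product, and consecutive $\overline{Z}\,\overline{Z}$ correlations, all arising as products of graph-state stabilizers that factor into $X$ on the measured bulk), and the fault-tolerance step is the same minimum-weight-decoder clustering argument, with the $\min(d,T/m)$ exponent arising exactly as you say --- clusters spanning two adjacent unmeasured surfaces must have size $\Omega(T/m)$, while clusters confined near one surface must have size $\ge d$ to cause a logical fault. The one implementation detail you flag as the ``main obstacle'' is indeed where the paper does its only new work: it modifies the $Z$ syndrome adjacency graph by attaching the unmeasured odd layers as extra boundary vertices so that the cluster-weight and residual-error lemmas from the Bell-state case carry over.
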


This generalizes the RBH construction~\cite{Raussendorf2005Longrange} for the surface code with $m=2$ and $k=1$, that is, the encoded Bell state $\frac{1}{\sqrt{2}}\left(\ket{\overline{00}}+\ket{\overline{11}}\right)$ in the surface code. Let us review the features of the entanglement in the resulting GHZ state:
\begin{itemize}
    \item \textbf{Noise robust.} The preparation procedure uses a noisy constant depth circuit (with noise rate below a constant threshold). Yet, the resulting GHZ state is encoded in a LDPC code and is subject to residual local stochastic noise that is correctable with the code. 
    \item \textbf{Generic.} The construction works for arbitrary quantum LDPC codes.
    \item \textbf{Macroscopic.} The final state encodes $m\cdot k$ logical qubits into $m\cdot n$ physical qubits, which can achieve a constant encoding rate using high-rate codes such as hypergraph product codes~\cite{Tillich2014Quantum,Kovalev2012Improved}. 
    \item \textbf{Long range.} There are two aspects of long range entanglement: the first is the entanglement among physical qubits in each code block (which can exhibit topological order), the second is the logical GHZ entanglement that spans $m$ code blocks.  
\end{itemize}

Finally, we discuss this result from the perspective of noisy shallow quantum circuits. The first step of this state preparation algorithm is to apply a constant depth circuit on a volume of qubits. At this point, there is no long-range entanglement at all. However, after the bulk (blue in \cref{fig:ghz1}) is measured, long-range entanglement is created across the orange blocks (\cref{fig:ghz2}): the subsequent decoding and feedforward operations are only used to correct the state into the desired GHZ state, but cannot create entanglement. Therefore, the GHZ-type entanglement is created by the teleportation effect of measurement. Interestingly, this teleportation effect is robust to noise.

We remark that \cref{thm:ghzinformal} implies \cref{thm:stateprepinformal} as a special case. An interesting future direction is to explore whether there exists other families of interesting states that can be prepared using the ATG framework.

\paragraph{Proof of fault tolerance.} The fault-tolerance of our single-shot logical state preparation scheme has two main components. First, we identify the stabilizer structure of $\ket{\atg}$. As remarked earlier, the fault tolerance arises from redundancies in the ancilla qubits. Concretely, that manifests in the redundancies of the stabilizers of $\ket{\atg}$, which contains certain ``meta-checks''. Second, we adapt the clustering arguments by \cite{Kovalev2012FaultTO, Gottesman2013FaulttolerantQC} for decoding quantum LDPC codes from random errors to our setting, and develop analogous techniques to argue that the meta-checks contain enough redundancy to be robust against random errors.\\

% the meta-check structure of the 2D surface code scheme by , with the clustering arguments by \cite{Kovalev2012FaultTO, Gottesman2013FaulttolerantQC} for decoding quantum LDPC codes from random errors. We sketch how we combine these ideas below.\\

\noindent \textit{The meta-check information}. Following \cite{Bravyi2020Quantum}, inside the $\ket{\atg}$ (which is a stabilizer state), one can define stabilizers which cleanly factorize into a product of $X$ Pauli's on the bulk of the graph state. We refer to these stabilizers as ``meta-checks", since \\

\noindent 1. After measuring the bulk in the $X$ basis, they remain stabilizers of the post-measurement state,\\

\noindent 2. They inbuild redundancies (``syndrome of syndromes") into the measured string $s$.\\

These redundancies will later allow us to use $s$ to infer errors that occur on the bulk, and in turn, correctly decode the error $P(s)$ for the qubits on the boundary. Here we give an example and defer a rigorous exposition to \cref{sec:stabilizers}. Recall that on every even layer $t$ of the $\atg$, there is a copy of the $Z$-Tanner graph lying on the layers $t-1$ and $t+1$ adjacent to it. Now, consider a $Z$-check of the code denoted as $c$, and let $\mathsf{Supp}(c)\subseteq[n]$ be the set of qubits it acts on. We show that the following operator 
\begin{equation}
    X_{(c, t-1)}\otimes X_{(c, t+1)}\bigotimes_{i\in \mathsf{Supp}(c)} X_{(i, t)} \quad (\text{see \cref{fig:meta-check}}),
\end{equation}
which applies a Pauli-$X$ on the copies of the check qubit $c$ in layers $t-1$ and $t+1$, as well as the qubits supported in the check in layer $t$, is a stabilizer of $\ket{\atg}$.

%of weight $(\ell+2)$ is a stabilizer of $\ket{\atg}$, that is, it applies a Pauli-$X$ on the copies of the check qubit $c$ in layers $t-1$ and $t+1$, as well as the qubits supported in the check in layer $t$.

%Consider a $Z$-check of the code, denoted as $c$, with weight $\ell$, and let $\mathsf{Supp}(c)\subseteq[n]$ be the set of qubits it acts on. Recall every even layer $t$, there is a copy of the $Z$-Tanner graph in layer $t-1$ and $t+1$ in the ATG. We show that the following operator 

% , to , for each $X$-check $c\in [m_x]$ of the LDPC code $Q$, and each even layer $t$, the operator which computes the parity of the check $c$ in adjacent layers, and with the support the check $c$ on code qubits at the $t$-th layer,

% \begin{equation}
%     X_{(c, t-1)}\otimes X_{(c, t+1)}\bigotimes_{i\sim c} X_{(i, t)} \quad (\text{See \cref{fig:meta-check}}), \text{ arises as a meta-check.}
% \end{equation}

While perhaps a bit abstract at first, this operator captures the change in the $Z$ syndrome after a layer of data-errors on the code and measurement errors on the checks: akin to the redundancy between adjacent syndrome measurements that appear in a repeated syndrome measurement protocol. The fact that these stabilizers factorize cleanly into products of $X$ Paulis is non-trivial and depends carefully on the structure of $\ket{\atg}$, which we discuss further in \cref{sec:stabilizers}.\\

\noindent \textit{The clustering argument.} \cite{Kovalev2012FaultTO, Gottesman2013FaulttolerantQC} showed that any quantum LDPC code can correct random Pauli errors (below a constant threshold noise rate), even in the presence of syndrome measurement errors, by repeating the syndrome measurements $T$ times and running an information-theoretic minimum weight decoder. Their proof reasoned that the data-errors and syndrome-errors could be arranged in a low degree graph known as the \textit{syndrome adjacency graph}, wherein the locations of mismatch errors (where the output of the decoder does not match the true error) clusters into connected components. By a percolation argument on this low-degree graph, one can argue these clusters are unlikely to ``contrive" into configurations that induce logical errors on the code. Roughly speaking, the probability of a logical error is upper bounded by
\begin{equation}
    \sum_{s\geq d} \under{\bigg( \text{\# of Connected Components }|K|=s \bigg)}{\text{A union bound over clusters of size $s$}} \times \under{\quad \bigg(p^{\Omega(s)}\bigg)\quad}{\text{Odds of such a cluster}} \approx \sum_{s\geq d} (p\cdot z)^{\Omega(s)} \approx (pz)^d,
\end{equation}

\noindent where only clusters of size larger than the code distance $d$ can incur logical errors, and $z$ is the degree of the graph.\footnote{A function only of the locality of the LDPC code. We assumed the noise rate is bounded by a threshold $p<z^{-1}$ for the geometric series to converge.}

Here, we broadly follow this strategy and construct an analogous low-degree syndrome adjacency graph which captures where our min-weight decoder incorrectly guesses a data or check error. Unfortunately, making this argument rigorous is easier said than done. Arguably the main challenge lies in relating the size of clusters of mismatch errors, to the number of true physical errors which lie within the cluster, to properly claim the decay rate $p^{\Omega(s)}$ with the cluster size. As we discuss in \cref{section:FT-clustering} and \cref{section:ghz}, this calculation is highly dependent on the boundary conditions of the syndrome adjacency graph, tailored for Bell states and to GHZ states.

% \noindent 

\subsection{Discussion and other related works}
\label{section:discussion}

\paragraph{Fault tolerant shallow circuits.} As a direct corollary, our single-shot logical state preparation can be used as the basis of a fault tolerant implementation of a shallow quantum circuit. 
\begin{corollary}
    Fix an arbitrary quantum LDPC code $Q$. Let $C$ be a constant depth quantum circuit which uses $\ket{0}$ or $\ket{+}$ or GHZ initial states, gates which are transversal in $Q$, and measurements in the standard or Hadamard basis. Then there is a constant depth fault tolerant implementation of $C$ which uses a single round of mid-circuit measurement and feedforward.
\end{corollary}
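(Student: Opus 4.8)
The plan is to build the fault-tolerant implementation by composing three constant-depth ingredients. First, replace each $\ket 0$, $\ket +$, or $\ghz$ input block of $C$ by the corresponding encoded state $\ket{\overline 0}$, $\ket{\overline +}$, or $\ket{\overline{\ghz}}$ produced by the single-shot procedures of \cref{thm:stateprepinformal} and \cref{thm:ghzinformal}; since the code blocks feeding distinct inputs of $C$ are disjoint, all of these primitives — each consisting of a constant-depth circuit, one round of single-qubit measurement, and one instantaneous classical feedforward — are run in parallel, contributing a single round of mid-circuit measurement and feedforward. Second, implement each gate of $C$ by its transversal realization in $Q$; because $C$ has constant depth and every transversal gadget has constant depth, the resulting physical circuit has constant depth and involves no further measurement. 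Third, realize the standard/Hadamard-basis measurements of $C$ by measuring, for each relevant code block, all of its physical qubits transversally and classically computing the logical outcome via decoding; to keep this at the end I would first defer the measurements of $C$ through its constant-depth body (for the typical case of classically-controlled Pauli corrections this is free, since classically-controlled Paulis commute to the end and merely relabel outcomes; more general feedforward controlling Cliffords is absorbed into the final classical post-processing). After this reduction every measurement past the state-prep round is terminal, so the whole implementation is constant depth and uses exactly one round of mid-circuit measurement and feedforward.

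For fault tolerance I would track the residual noise. By \cref{thm:stateprepinformal,thm:ghzinformal}, each prepared input equals the ideal encoded state up to residual \emph{local stochastic} noise at a rate $p_0$ below the decoding threshold of $Q$, except with probability $nT\cdot 2^{-\Omega(\min(d,T/m))}$; a union bound over the polynomially many input blocks keeps the total failure probability $\poly(n)\cdot 2^{-\Omega(\min(d,T/m))}$ (with $m=1$ when no $\ghz$ input is present). Next I propagate this noise through the transversal part of $C$: a transversal gate of locality $c$ conjugates a Pauli error supported on a set $S$ to one supported on a set of size at most $c|S|$, so it maps local stochastic noise to local stochastic noise at a larger but still constant rate; composing over the constant depth of $C$ and adding the $O(1)$ fresh gate faults inserted at each layer leaves a residual local stochastic rate $p_\star$ that remains below the threshold of $Q$ whenever the physical noise rate is a sufficiently small constant. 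Finally, the terminal transversal readout of each block is corrupted by a local stochastic pattern of weight $<d/2$ except with probability $2^{-\Omega(d)}$, so the classical decoder recovers the correct logical outcome. Putting these together, the output distribution of the implementation is within total variation distance $\poly(n)\cdot 2^{-\Omega(\min(d,T/m))}$ of that of the ideal $C$, and on any logical qubits left unmeasured the final state is the ideal output of $C$ up to residual noise correctable by $Q$.

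The one delicate step is the error bookkeeping across transversal gates and across code blocks: a two-block transversal gate such as a transversal $\cnot$ can carry an error from one block onto the corresponding qubit of another, so "local stochastic" must be formulated on the union of all active code blocks, and one must verify that a constant-depth transversal circuit only degrades this rate by a constant factor (so that, for a small enough physical rate, $p_\star$ stays sub-threshold). This is standard but should be carried out carefully, and it is cleanest to perform the deferred-measurement reduction before any noise analysis so that the object being analyzed is genuinely a fixed constant-depth unitary followed by terminal measurements. I do not expect a substantive obstacle here — which is why the statement is a corollary — and the only point that needs attention is ensuring the feedforward internal to $C$ does not implicitly require a second adaptive round, which is handled by the deferral above.
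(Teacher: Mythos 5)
Your proposal is correct and follows essentially the same route as the paper: run the single-shot preparation of \cref{thm:stateprepinformal,thm:ghzinformal} in parallel as the single mid-circuit measurement/feedforward round, apply the transversal gates (constant depth preserves local stochasticity of the residual noise), and decode the terminal transversal measurements. The paper states this only as a brief remark, and your additional bookkeeping (deferring $C$'s internal measurements, error propagation through multi-block transversal gates) is a faithful elaboration rather than a different argument.
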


Here we simply run our single-shot logical state preparation scheme to prepare encoded initial states, and then directly apply transversal gates. Since the logical circuit is constant depth, the residual local stochastic noise still remains local stochastic, and the logical $Z$ or $X$ measurements at the end can be decoded using the code.

A natural question is whether we can remove the mid-circuit measurement and feedforward, and implement an ideal shallow quantum circuit purely via a noisy shallow quantum circuit. Ref.~\cite{Bravyi2020Quantum} gives such an example: starting from RBH, they directly apply a logical Clifford circuit \emph{without} correcting the Pauli error $P(s)$. However, since the logical circuit is Clifford, the Pauli error $P(s)$ is propagated through the circuit and corrected at the end. Our result similarly implies a more general version:

\begin{corollary}\label{cor:clifford}
    Fix an arbitrary quantum LDPC code $Q$. Let $C$ be a constant depth quantum circuit which uses $\ket{0}$ or $\ket{+}$ or GHZ initial states, and gates which are Clifford and transversal in $Q$, and measurements in the standard or Hadamard basis. Then there is a constant depth fault tolerant implementation of $C$ using a noisy circuit (without mid-circuit measurements and feedforward).
\end{corollary}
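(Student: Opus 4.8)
The plan is to reuse the single-shot state preparation of \cref{thm:stateprepinformal,thm:ghzinformal} to build the encoded inputs of $C$, but to never apply the feedforward Pauli correction; because $C$ consists of transversal Clifford gates, this deferred Pauli simply propagates to the end of the computation, where — together with the residual stochastic noise — it is absorbed into classical post-processing of the final measurement bits. This is the generalization to arbitrary quantum LDPC codes of the argument of \cite{Bravyi2020Quantum} for the surface code.

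Concretely, I would first recall that single-shot preparation is a constant depth circuit $W$ (a layer of $\mathrm{CZ}$ gates on the edges of $\ket{\atg}$) on qubits $A\cup B$, after which measuring the bulk $A$ in the Hadamard basis returns a string $s$ and leaves $B$ holding the desired encoded state $\ket{\psi}$ — which may be taken to be any tensor product of encoded $\ket{\overline0}$, $\ket{\overline+}$ and $\ket{\overline{\ghz_m}}$ blocks, as demanded by the inputs of $C$, using several ATGs in parallel — up to the Pauli $P(s)$ output by the decoder and up to residual local stochastic noise $E$ that the code can correct. The key move is to omit $P(s)$ and instead apply directly the transversal implementation of each gate of $C$ on the code qubits $B$. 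Since these gates touch only $B$ and never the bulk $A$, the measurement of $A$ commutes with all of them and can be postponed to the very end; likewise $C$'s own measurements are non-adaptive (it has no feedforward) and can be deferred. The whole implementation is then: initialize $A\cup B$ in $\ket+$ (or $\ket0$, depending on the encoded input), run $W$, run the transversal version of $C$, and finally measure all of $A$ in the Hadamard basis together with the output qubits of $C$ in their prescribed bases — a noisy circuit of depth $\mathrm{depth}(W)+O(\mathrm{depth}(C))=O(1)$ with one round of terminal measurements and no feedforward.

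For correctness, the (unnormalized) state on $B$ just before $C$ is $P(s)\,E\,\ket{\psi}$. Conjugating through the transversal Clifford $C$ sends the ideal part to $C\ket{\psi}$, sends $P(s)$ to $C\,P(s)\,C^{\dagger}$ which is again a (classically computable) Pauli, and sends $E$ together with the fresh faults of the noisy transversal gates to noise that stays local stochastic with a rate degraded by only a constant factor — the standard fact that a constant depth circuit spreads local stochastic noise by at most a constant amount, so that below a suitable threshold the accumulated rate remains decodable. Thus at the end the state is $C\ket{\psi}$ up to a known Pauli and benign residual noise, and the terminal logical $Z$/$X$ measurements differ from their ideal values only by bit-flips induced by that Pauli and that noise. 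A classical computer recovers the ideal logical outcomes by running the single-shot decoder on $(s,\text{physical outcomes})$ to estimate $P(s)$ and the residual error, propagating $P(s)$ through $C$, and decoding the physical measurement string with the code — exactly as in the (adaptive) corollary above. The failure probability is at most the sum of the state-preparation failure probability of \cref{thm:stateprepinformal,thm:ghzinformal} and the terminal-decoding failure probability, of the form $\poly(n)\cdot 2^{-\Omega(\min(d,T/m))}$.

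The place where "Clifford" is essential is the propagation step: a non-Clifford gate would conjugate the deferred Pauli into a non-Pauli operator that classical post-processing could not undo, so the feedforward could not be deferred. I expect the only genuinely technical point to be the bookkeeping that composing the noisy $\ket{\atg}$ preparation, the bulk measurement, and the noisy constant depth transversal circuit $C$ leaves a local stochastic error model below threshold — routine since every circuit in sight has constant depth.
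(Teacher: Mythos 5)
Your proposal is correct and follows essentially the same route as the paper: omit the feedforward correction $P(s)$, apply the transversal Clifford gates directly, propagate $P(s)$ classically through the Clifford circuit, and absorb it (together with the residual local stochastic noise, which remains local stochastic under a constant depth circuit) into the classical decoding of the terminal measurements. The paper gives only this sketch; your write-up fills in the same steps, including the correct observation that the Clifford property is exactly what keeps the deferred correction a Pauli.
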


Note that we do not address the issue of decoding efficiency. In particular, although \cref{cor:clifford} does not need mid-circuit decoding, the decoding at the end of the circuit may still be inefficient. An interesting direction is to explore if \cref{cor:clifford} can be the basis of other quantum advantage experiments with noisy shallow quantum circuits against shallow classical circuits, as in~\cite{Bravyi2020Quantum}.

\paragraph{Single-shot preparation of general stabilizer states.} Beyond GHZ state preparation, we show that single-shot logical state preparation can also be achieved for more general stabilizer states, using a combination of the ATG and Steane's logical measurement. Let $\ket{\psi}$ be a $m$-qubit stabilizer state with the following structure: there exists a set of $m$ stabilizer generators that can be divided into two sets $S_X\subseteq\{I,X\}^{\otimes m}$ and $S_Z\subseteq\{I,Z\}^{\otimes m}$, such that either $S_X$ or $S_Z$ has constant weight and degree. The $m$-qubit GHZ state is such an example, where $S_X=\{X^{\otimes m}\}$ and $S_Z=\{Z_1 Z_2, Z_2 Z_3,\dots,Z_{m-1}Z_m\}$. We prove fault tolerant single-shot preparation for any such state $\ket{\psi}$ encoded across $m$ copies of an arbitrary quantum LDPC code. See \cref{sec:cssstateprep} for more details.

\paragraph{Single-shot quantum error correction.} A closely related concept to our work is single-shot quantum error correction~\cite{Bombin2014SingleShotFQ}: consider an encoded code block subject to random errors, it is shown that for certain codes a single round of syndrome measurement suffices to correct the error (instead of performing repeated syndrome measurements), even in the presence of syndrome measurement errors. Examples include high-dimensional color codes~\cite{Bombin2013GaugeCC, Bombin2014SingleShotFQ} and toric codes~\cite{Kubica2021SingleshotQE}, hypergraph product codes \cite{Fawzi2017EfficientDO}, quantum Tanner codes \cite{Gu2023SingleshotDO} and hyperbolic codes \cite{Breuckmann2020SingleShotDO}.

As the ATG is closely related to repeated syndrome measurements, it is natural to ask whether single-shot logical state preparation with \emph{constant} space overhead exists for those codes, or more generally, for any code with single-shot error correction. While this is proven for 3D gauge color codes~\cite{Bombin2013GaugeCC, Bombin2014SingleShotFQ}, it is unclear whether a general reduction from single-shot logical state preparation to single-shot error correction exists. The key issue is that single-shot error correction is defined in the context of correcting errors on an existing encoded code block, while single-shot logical state preparation is about preparing an encoded code state \emph{from scratch}. We leave further exploration of this issue for future work; it is an interesting question in the context of reducing the spacetime overhead of quantum fault tolerance (also see~\cite{zhou2024algorithmic}). 

\subsection{Organization}

We organize the rest of this work as follows. We begin with basic definitions of quantum error correction in \cref{sec:prelim}. In \cref{section:atg}, we discuss the stabilizer structure of the alternating Tanner graph state. In \cref{section:sssp}, we present our single-shot logical state preparation algorithm. Finally, in \cref{section:FT-clustering}, we present our proof of fault-tolerance via clustering (\cref{thm:stateprepinformal}). 

We defer to \cref{section:factorize} omitted proofs from \cref{section:atg}, \cref{section:mbqc} our reduction to repeated measurements based on MBQC (\cref{thm:stateprepmbqcinformal}), and in \cref{section:ghz} we present single shot state preparation for the encoded $\ghz$ state (\cref{thm:ghzinformal}).

% ; however, it remains unclear whether these decoders can also be used for state preparation. 

% The model of single-shot decoding was introduced by Bomb\'in \cite{Bombin2013GaugeCC, Bombin2014SingleShotFQ}, who showed that certain topological color codes could correct from errors from just one shot of noisy syndrome measurements. In fact, he further showed that these codes admit single-shot \textit{state preparation} from just one shot of syndrome measurements; that is, constant space overhead in the notation of this work. Unfortunately, not much more is known in terms of efficient decoders for single-shot state preparation. 

% Other notable examples of codes that support single-shot \textit{error correction} with efficient decoding include 3D gauge and 4D toric codes \cite{Kubica2021SingleshotQE}, hypergraph product codes \cite{Fawzi2017EfficientDO}, quantum Tanner codes \cite{Gu2023SingleshotDO} and hyperbolic codes \cite{Breuckmann2020SingleShotDO}; however, it remains unclear whether these decoders can also be used for state preparation. 

% \paragraph{}

\section{Preliminaries}
\label{sec:prelim}

The set of Pauli operators on a set of $n$ qubits is denoted as $\mathsf{Pauli}(n)$. Our fault-tolerance theorems are formalized in the ``local stochastic" model of Pauli noise:

\begin{definition}
    A Pauli error $E\in \mathsf{Pauli}(n)$ is said to be a local stochastic error of noise rate $p\in [0, 1]$, or, $E\leftarrow N(p)$ if 
    \begin{equation}
    \forall S\subset [n],\quad \mathbb{P}_{E\leftarrow N(p)}\big[S\subset \mathsf{supp}(E)\big] \leq p^{|S|}.
\end{equation}
\end{definition}

\noindent We present logical state preparation algorithms for a well-known class of quantum error correcting codes known as CSS codes \cite{Calderbank1996GoodQE, Steane1996SimpleQE}.

\begin{definition}
    Let $H^X\in \mathbb{F}_2^{m_x\times n}, H^Z\in \mathbb{F}_2^{m_z\times n}$ be two full-rank parity check matrices satisfying $H^X (H^Z)^\dagger = 0$. The $[[n, k, d]]$ CSS code $(H^X, H^Z)$ is the joint $+1$ eigenspace of the set of commuting Pauli operators $X^a, Z^b$, where $a, b\in \mathbb{F}_2^n$ are rows of $H^X, H^Z$ respectively.  
\end{definition}

The number of logical qubits $k = n-m_x-m_z$ encoded into the CSS code is determined by the number of linearly independent $X$ and $Z$ checks. Let us denote the linear subspace $C_Z = \{y\in\mathbb{F}_2^n: H^X\cdot y = 0\}$ (resp, $C_X$), and its dual $C_Z^\perp$ be the code spanned by the rows of $H^X$. The distance of the CSS code is then 
\begin{equation}
    d = \min_{y\in (C_X\setminus C_Z^\perp)\cup (C_Z\setminus C_X^\perp)} |y|.
\end{equation}

\noindent We say $i\sim c$ if $i\in [n]$ is in the support of the check $c\in [m]$, i.e. $H_{c,i}=1$.

\begin{definition}
    [Tanner Graph]\label{def:tanner} The tanner graph of a parity check matrix $H\in \mathbb{F}_2^{m\times n}$ is the bipartite graph on the vertex set $[m]\cup [n]$, with edge set defined by the relation $i\sim c$. 
\end{definition}

We refer to the tanner graphs of $H^X, H^Z$ as $G^X, G^Z$. Our results make reference to a particular class of CSS codes known as LDPC codes, where the tanner graphs are sparse.

\begin{definition}
    A CSS code $Q$ is said to be an $\ell$-LDPC code if there exists a choice of checks $\in \mathsf{Pauli}(n)$ for $Q$ with Pauli weight $\leq \ell$ and such that the number of check acting non-trivially on any fixed qubit is $\leq \ell$. 
\end{definition}

\section{The Alternating Tanner Graph State}
\label{section:atg}

We dedicate this section to a definition of the alternating tanner graph state $\ket{\atg}$, and a study of its properties. Starting from a generic quantum LDPC code $Q$, we will define the cluster state $\ket{\atg}$ simply by specifying an undirected graph $G = (V, E)$. So long as the degree of $G$ is bounded, one can find a circuit $W$ which prepares the cluster state in low-depth: 

\begin{equation}\label{equation:graph_state}
    \ket{\atg} = \bigg(\prod_{(u, v)\in E} \mathsf{CZ}_{u, v} \bigg) \ket{+}^{\otimes |V|}
\end{equation}

\subsection{The Alternating Tanner Graph}

Fix an arbitrary integer $T \geq 1.$ $G$ will be defined on $(2T+1)$ layers of copies of the LDPC code, which alternate X and Z checks. That is to say, at each layer, we will place a copy of the tanner graph of $H^X$ or $H^Z$ (\cref{def:tanner}).

\begin{tcolorbox}

\textbf{The Alternating Tanner Graph.} Of the $2T+1$ layers, $\{1, 2, \cdots, 2T+1\}$

\begin{itemize}
    \item[--] Every layer has a copy of the $n$ ``code" qubits. Each code qubit $i\in [n]$ at layer $t$ is connected to its copy above and below it $(i, t\pm 1)$.

    \item[--] \textit{Z Layers.} Odd layers have a copy of $m_z$ ``Z" parity check qubits. A copy of the $Z$ tanner graph is placed between the $n$ code qubits and the $m_z$ Z parity check qubits. 

    \item[--] \textit{X Layers.} Even layers have a copy of $m_x$ ``X" parity check qubits. A copy of the $X$ tanner graph is placed between the $n$ code qubits and the $m_x$ X parity check qubits. 
\end{itemize}

\end{tcolorbox}

One should picture the copies of the code qubits stacked vertically in 1D layers, while the X and Z ancilla qubits lie to their left and right respectively. We refer the reader back to \cref{fig:atg} in the introduction. Henceforth we will index the vertices in $G$ as tuples, $(i, t)$ for any code qubit $i\in [n]$ at layer $t$, and $(c, t)$ for check qubits $c\in [m_x] $ (or $m_z$). We highlight that the only vertical connections in the edge set $E$ are between copies of the same code qubit. 

It will become relevant to partition the vertex/qubit set $V$ into Bulk and boundary qubits $V = \mathcal{B}\cup \partial$, as alluded to in the overview. The boundary qubits refer to the $2n$ code qubits lying on the 1st and $(2T+1)$th layers. 

\subsection{The Stabilizers of the \texorpdfstring{$\ket{\atg}$}{|ATG>}}
\label{sec:stabilizers}

The alternating Tanner graph state $\ket{\atg}$ is a stabilizer state. A complete set of stabilizer generators is defined as follows: for each $u\in V$, there is an associated stabilizer
\begin{equation}
    G_u  = X_u \bigotimes_{v:\,(u, v)\in E} Z_v.
\end{equation}

Suppose we would like to prepare the encoded Bell state $\ket{\overline{\Phi}}=\frac{1}{\sqrt{2}}\left(\ket{\overline{00}}+\ket{\overline{11}}\right)$ across two copies of the code. We will use this collection of graph-state stabilizers to define stabilizers of the post-measurement state. Following \cite{Bravyi2020Quantum}, we will identify two subgroups of the stabilizer group $\mathcal{S}$ of $\ket{\atg}$, $\mathcal{S}^0 \subset \mathcal{S}^1\subset \mathcal{S}$, satisfying the following constraints:

\begin{enumerate}[(i)]
    \item Any element of $\mathcal{S}^0$ can be written as $\mathbb{I}_{\partial}\otimes X(\alpha)_{ \mathcal{B}}$ on some subset $\alpha\in \{0, 1\}^ \mathcal{B}$.

    \item Any element of $\mathcal{S}^1$ can be written as $S_{\partial}\otimes X(\alpha)_{ \mathcal{B}}$ on some subset $\alpha\in \{0, 1\}^ \mathcal{B}$, and for some stabilizer $S$ of $\ket{\Bar{\Phi}}$.

    \item For every stabilizer $S$ of $\ket{\Bar{\Phi}}$, there exists an element of $\mathcal{S}^1$ of the form $S_{\partial}\otimes X(\alpha)_{ \mathcal{B}}$.
\end{enumerate}

The fact that these subgroups act only as Pauli X's on the Bulk implies that after an X basis measurement, they remain stabilizers of the post-measurement state, and we can recover their information from the measured string $s$. In the next section, we will describe how to use the redundancies in these stabilizers to infer the relevant Pauli frame correction. Here, in this section, we simply show how to define these stabilizers using the structure of the tanner graph state. We defer to \cref{section:factorize} rigorous proofs that these stabilizers factor as above. Let us begin with $\mathcal{S}^0$. \\

\begin{figure}[t]
\begin{subfigure}[b]{0.33\textwidth}
\centering
     \includegraphics[width=1.0\linewidth]{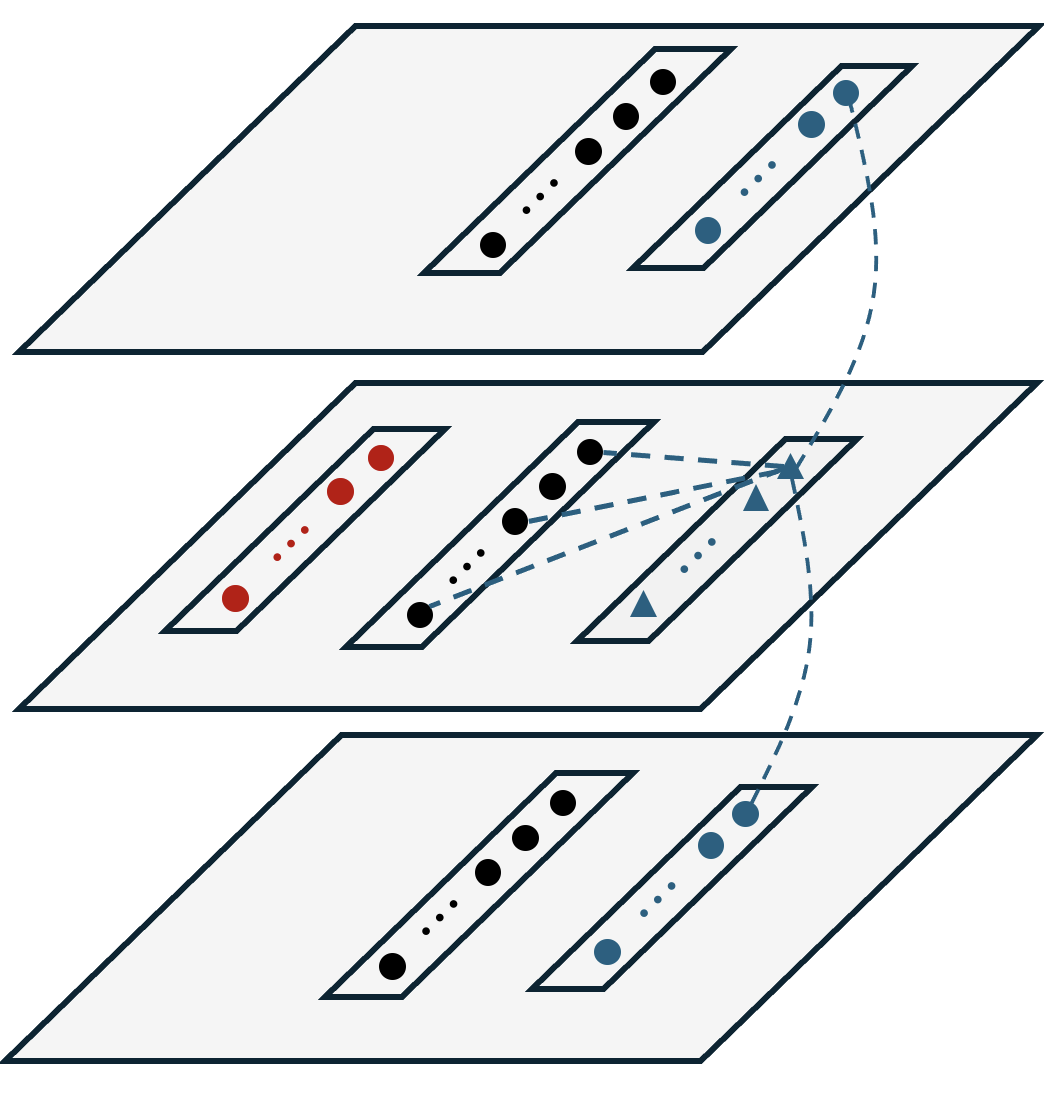}
     \caption{A Z meta-check, on an \\ even layer (blue triangles).}
    \label{fig:meta-check}
\end{subfigure}
\begin{subfigure}{0.33\textwidth}
\centering
    \includegraphics[width=0.95\linewidth]{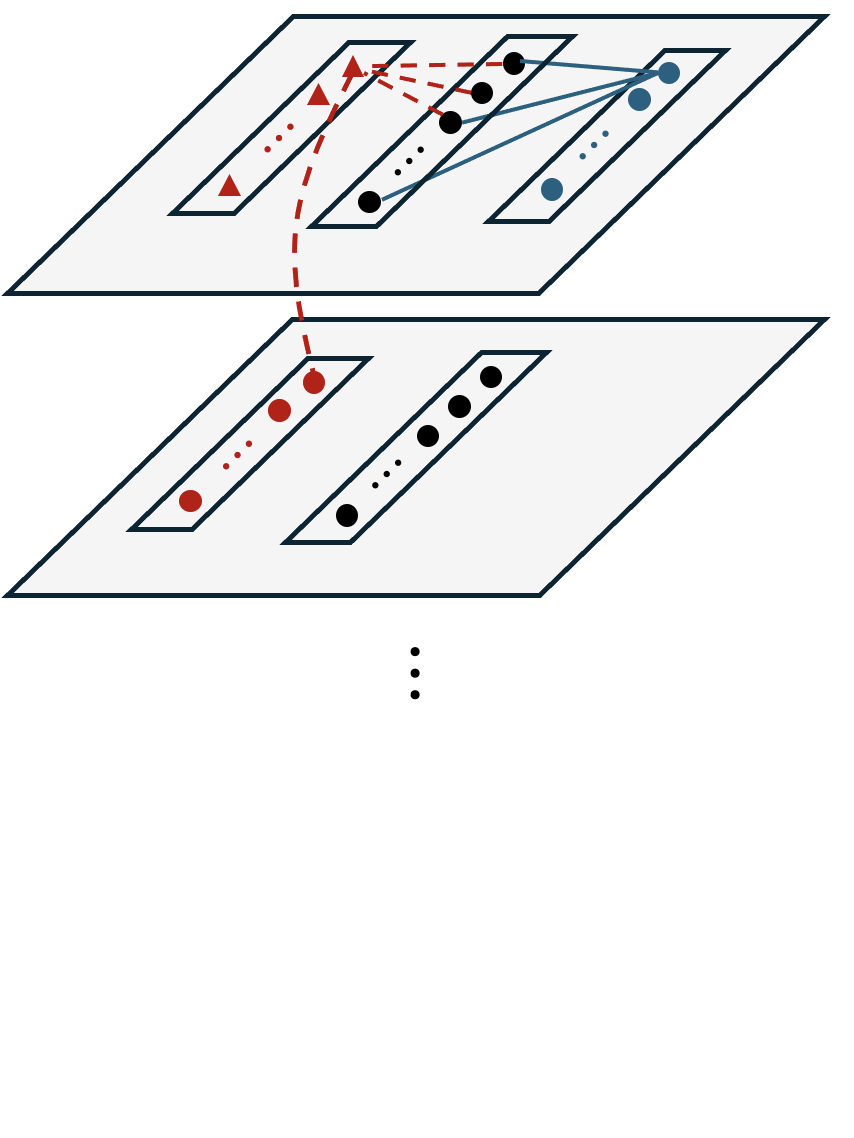}
    \caption{The stabilizers of the \\top boundary code. }
    \label{fig:stab-boundary}
\end{subfigure}
\begin{subfigure}{0.33\textwidth}
\centering
    \includegraphics[width=1.0\linewidth]{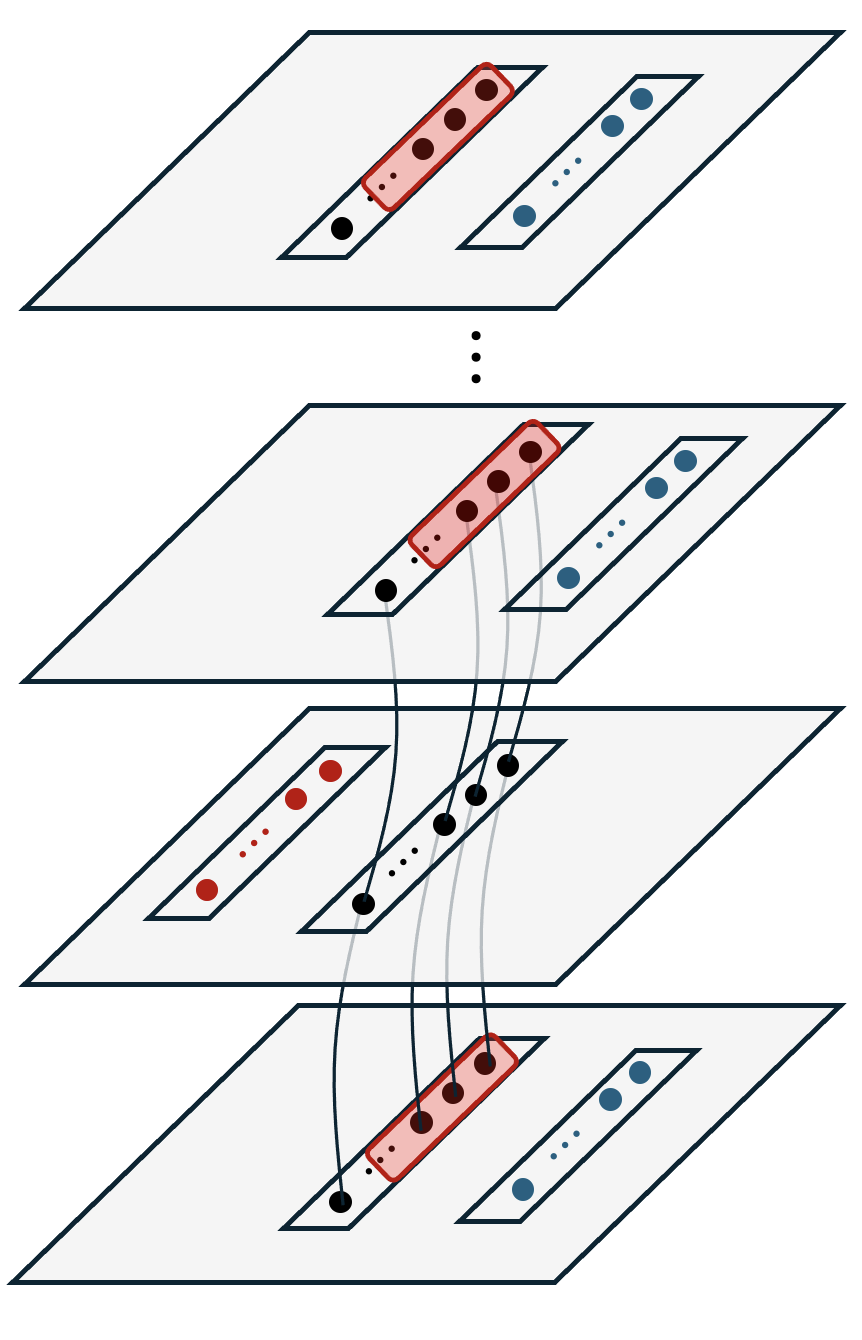}
    \caption{An Encoded $\bar{X}\otimes \bar{X}$ \\ Logical Stabilizer.}
    \label{fig:stab-encoded}
\end{subfigure}
    \caption{The Stabilizers of the Graph State $\ket{\atg}$}
\end{figure}

\noindent \textbf{The Meta-Checks $\mathcal{S}^0$.} $\mathcal{S}^0$ will consist of X or Z ``meta-checks" which encode redundancies into the bulk qubits. Each meta-check will be centered around an ``meta-vertex" (the triangles in \cref{fig:meta-check}) - the support of such meta-checks will consist of copies of the X and Z tanner graphs in alternating layers (offset from those of $E$); together with vertical connections between copies of the same ancillas. For each even layer $t$ and $c\in [m_z]$, we place a $Z$ meta-check:

\begin{equation}
    G_{(c, t-1)}\cdot G_{(c, t+1)}\cdot \prod_{i\sim c} G_{(i, t)} = X_{(c, t-1)}\otimes X_{(c, t+1)}\bigotimes_{i\sim c} X_{(i, t)} \quad (\text{\cref{fig:meta-check}, blue})
\end{equation}

\noindent In turn, for each odd layer $t$ and $c\in [m_x]$, we place a $X$ meta-check:

\begin{equation}
    G_{(c, t-1)}\cdot G_{(c, t+1)}\cdot \prod_{i\sim c} G_{(i, t)} = X_{(c, t-1)}\otimes X_{(c, t+1)}\bigotimes_{i\sim c} X_{(i, t)} 
\end{equation}

The fact that these products of graph state factorize cleanly into products of X operators is non-trivial, and carefully leverages the fact that $Q$ is a CSS code. This observation can first be traced back to \cite{Bolt2016Foliated}. While we defer a rigorous proof to \cref{section:factorize}, the claim is that each X ancilla qubit $d\in [m_x]$ which arises in the neighborhood of the support of any given Z meta-check $c\in [m_z]$, appears precisely an even number of times. This is since $d$ and $c$ on layer $t$ are connected through a qubit $i$ iff $H^Z_{c, i}\cdot H^X_{d, i} = 1$, and therefore the number of appearances is

\begin{equation}
        \sum_i H^Z_{c, i}\cdot H^X_{d, i} = \big(H^Z (H^X)^T\big)_{c, d} = 0 \mod 2,
    \end{equation}

\noindent which is $0$ since $H^X, H^Z$ define a CSS code.

The stabilizers in $\mathcal{S}^1$ arise in two types. Recall that $\ket{\Bar{\Phi}}$ consists of an encoded maximally entangled state across two copies of the LDPC code $Q$. Then, within $\mathcal{S}^1$ there will be stabilizers of the individual boundary codes, and encoded stabilizers of the Bell state.\\

\noindent \textbf{The Stabilizers of the Boundary Codes.} We specify the X and Z stabilizers of the boundary codes as follows. To begin, let us consider the simpler case, consisting of the $Z$-type stabilizers. For each $c\in [m_z]$, there exists graph state stabilizers $\in \mathcal{S}$, satisfying the following decomposition

    \begin{equation}
        G_{(c, 1)} = X_{(c, 1)} \bigotimes_{i\sim c} Z_{(i, 1)}, \quad G_{(c, 2T+1)} = X_{(c, 2T+1)} \bigotimes_{i\sim c} Z_{(i, 2T+1)} \quad (\text{\cref{fig:stab-boundary}, blue})
    \end{equation}

    \noindent Which directly follows from the definition of the graph state stabilizers. Note that these operators act as $Z$ Paulis on the boundary $\partial$ and $X$ on $\mathcal{B}$.
    
    The X-type stabilizers on the boundary codes are more complicated, and require products of graph state stabilizers. For each $X$-type stabilizer $c\in [m_x]$ of $Q$, there exists products of graph state stabilizers $\in \mathcal{S}$, satisfying the decomposition
    \begin{gather}
        G_{(c, 2)} \cdot \prod_{i\sim c} G_{(i, 1)} =  X_{(c, 2)} \bigotimes_{i\sim c} X_{(i, 1)}, \\ \quad G_{(c, 2T)} \cdot \prod_{i\sim c} G_{(i, 2T+1)} =  X_{(c, 2T)} \bigotimes_{i\sim c} X_{(i, 2T+1)}\quad (\text{\cref{fig:stab-boundary}, red}).
    \end{gather}

    \noindent Which, we note, act as $X$ stabilizers on the boundary codes $\partial$ in tensor product with an X Pauli on the Bulk, as desired.\\

    \noindent \textbf{The Encoded Logical Stabilizers.} The encoded Bell pairs are stabilized by products $\Bar{X}_1\otimes \Bar{X}_{2T+1}, \Bar{Z}_1\otimes \Bar{Z}_{2T+1}$ of logical operators. We construct these operators using graph state stabilizers in $G$, and only $X$ operators on the Bulk, via products of stabilizers in alternating layers of $G$.

    To begin, let us consider the encoded $\Bar{X}_1\otimes \Bar{X}_{2T+1}$ stabilizer. Let $\alpha_x\subset [n]$ denote the support of a logical $\Bar{X}$ on $Q$. Then, we can write the  $\Bar{X}_1\otimes \Bar{X}_{2T+1}$ stabilizer as:
    
    \begin{equation}
        \prod_{\substack{i\in \alpha_x \\ t \text{ odd }}} G_{(i, t)} =  \bigotimes_{\substack{i\in \alpha_x \\ t \text{ odd }}}X_{(i, t)}= \Bar{X}_1\otimes \Bar{X}_{2T+1}  \bigotimes_{\substack{i\in \alpha_x \\ t \in \{3, 5,\cdots\}}}X_{(i, t)}
    \end{equation}

    \noindent Similarly, let $\alpha_z\subset [n]$ denote the support of a logical $\Bar{Z}$ on $Q$. Then, we can write the $\Bar{Z}_1\otimes \Bar{Z}_{2T+1}$ stabilizer as:

    \begin{equation}
        \prod_{\substack{i\in \alpha_z \\ t \text{ even }}} G_{(i, t)} = \Bar{Z}_1\otimes \Bar{Z}_{2T+1} \bigotimes_{\substack{i\in \alpha_z \\ t \text{ even }}} X_{(i, t)} \quad (\text{\cref{fig:stab-encoded}})
    \end{equation}

    To argue that these operators factor as desired, we similarly apply the constraint that $Q$ defines a CSS code. We refer the reader to \cref{section:factorize} for the proofs.

\section{Single-shot Logical State Preparation}
\label{section:sssp}

In this section, we overview our fault-tolerant single-shot state preparation algorithm for any LDPC CSS code. Our algorithm is based on that of \cite{Bravyi2020Quantum} and is comprised of three general steps, which we summarize in \cref{subsection:algorithm}. Subsequently, we rigorously define what it means for this process to be fault tolerant, in terms of abstract ``recovery" and ``repair" functions $\rec, \rep$ which capture the Pauli frame computation and the residual stochastic error on the code. In \cref{section:rec/rep}, we conclude this section by instantiating $\rec, \rep$. 

In the ensuing section \cref{section:FT-clustering}, we prove a threshold for our single-shot state preparation decoder.

\subsection{The Algorithm}
\label{subsection:algorithm}

As mentioned, our single-shot state preparation algorithm is based on that of \cite{Bravyi2020Quantum} and is comprised of three general steps, summarized below.

\begin{tcolorbox}

\textbf{Single-Shot State Preparation.}

\begin{enumerate}
    \item Using a constant-depth circuit $W$, we prepare a graph state $\ket{\atg} = W\ket{0^V}$. $G = (V, E)$ is defined on ``bulk'' qubits $\mathcal{B}$ and ``boundary'' qubits $\partial = V\setminus \mathcal{B}$. 

    \item Measure all the bulk qubits $\mathcal{B}$ in the Hadamard basis $\ket{\pm}$, resulting in a string $s$.

    \item Using $s$, compute a Pauli frame correction $\rec(s)\in \mathsf{Pauli}(\partial)$, and adaptively apply it to the boundary qubits $\partial$. The resulting (unnormalized) state is given by:
    \begin{equation}
        \bigg(\ketbra{\pm_s}_{\mathcal{B}}\otimes \rec(s)_\partial\bigg) \ket{\atg}
    \end{equation}
\end{enumerate}

\end{tcolorbox}

In the absence of errors, we design $\rec(s)$ to ensure that the resulting state is an encoded state of interest (such as $\ket{\bar{0}}, \ket{\bar{+}}, \ket{\overline{\ghz}}, \cdots$). In the next subsections, we will focus on preparing $\ket{\overline{\Phi^{\otimes k}}}$, $k$ copies of EPR pairs encoded across two copies of the LDPC code -- ``the boundaries".\footnote{By performing a logical measurement on one of the surfaces of the encoded Bell state in the standard or Hadamard basis, and performing a Pauli correction, one can prepare any encoded state $\ket{\bar{x}}$ where $x_i\in \{0, 1, +, -\}$.} To ensure that this state-preparation algorithm is fault-tolerant, we stipulate that even in the presence of random errors during the execution of $W$ and the measurements, the resulting output state is statistically close to $\ket{\overline{\Phi^{\otimes k}}}$ up to local stochastic noise. 

\begin{definition}
    [cf.~\cite{Bravyi2020Quantum}] \label{def:sssp} A family of $[[n, k, d]]$ stabilizer codes $Q$ admits a \emph{Single-Shot State Preparation} procedure for an encoded state $\ket{\bar{\psi}}$ if there exists a constant depth circuit $W$ on a set of qubits $\partial\cup \mathcal{B}$, and deterministic recovery and repair functions, 
    \begin{align}
        &\rec: \{0, 1\}^{\mathcal{B}}\rightarrow \mathsf{Pauli}(\partial) \\
         &\rep: \mathsf{Pauli}(\mathcal{B})\rightarrow \mathsf{Pauli}(\partial), 
    \end{align}
    \noindent such that, for any error on the Bulk $P\in \mathsf{Pauli}(\mathcal{B})$ and measurement outcome $s \in \{0, 1\}^{\mathcal{B}}$,
        \begin{align}
     \bigg(\ketbra{\pm_s}\otimes \rec(s)\bigg) P W\ket{0}^{\mathcal{B}}\otimes \ket{0}^{\partial}  = \gamma_s \ket{\pm_s} \otimes \rep(P) \ket{\bar{\psi}},
    \end{align}
    where $\gamma_s\in\mathbb{C}$. Moreover, if $P\leftarrow \mathcal{N}(p)$ is a local stochastic error on the Bulk, then $\rep(P)\leftarrow \mathcal{N}(c_1\cdot p^{c_2})$ is a local stochastic Pauli error on $\partial$, for two constants $c_1, c_2$.
\end{definition}

Naturally, in the presence of measurement errors, one cannot hope to perfectly prepare the encoded logical state. Instead, the function $\rep$ quantifies the \textit{residual error} on the ideal state. The state preparation procedure is then fault-tolerant if it manages to convert local stochastic noise during the state preparation circuit into local stochastic noise on the output state.

\subsection{The Recover and Repair functions}
\label{section:rec/rep}

We are now in a position to define the Pauli frame correction $\rec$, and the residual Pauli noise $\rep$.

\subsubsection{The Pauli Frame Correction \texorpdfstring{$\rec$}{Rec}}

The definition of $\rec$ proceeds in two steps. At a high level, we begin by leveraging the meta-check information $\mathcal{S}^0$, to make a guess $Z(\beta)$ for the error which occurs on the Bulk $\mathcal{B}$.\footnote{Note that we can restrict ourselves to $Z$ errors on $\mathcal{B}$, since these qubits are measured in the X basis.} Then, we pick $\rec$ to be an arbitrary Pauli error supported on the boundary $\partial$, which is consistent with the information from $\mathcal{S}^1$, and the inferred error $Z(\beta)$.

To understand the role of the meta-checks in this sketch, let us concretely show how the measurement outcome string $s\in \{0, 1\}^{\mathcal{B}}$ allows us to extract partial information about any Z-type Pauli error on the Bulk $P = Z(\eta)_\mathcal{B}$. For any stabilizer $\mathbb{I}_\partial\otimes X(\alpha)_\mathcal{B}\in \mathcal{S}^0$, we have 
\begin{equation}
\begin{aligned}
&(-1)^{s\cdot \alpha}  \bigg(\ketbra{\pm_s}\otimes \mathbb{I}_\partial\bigg) Z(\eta)_{ \mathcal{B}} \otimes \mathbb{I}_{\partial} \ket{\atg}\\ &=  \bigg(\ketbra{\pm_s}\otimes \mathbb{I}_\partial\bigg) X(\alpha)Z(\eta)_{ \mathcal{B}} \otimes \mathbb{I}_{\partial} \ket{\atg}\\ 
&=  (-1)^{\alpha\cdot \eta} \bigg(\ketbra{\pm_s}\otimes \mathbb{I}_\partial\bigg) Z(\eta)_{ \mathcal{B}} \otimes \mathbb{I}_{\partial} \ket{\atg}, 
\end{aligned}
\end{equation}

\noindent therefore revealing the ``syndrome" information $\eta\cdot \alpha = s\cdot \alpha\mod 2.$ By collecting this syndrome information, we can make a guess for the error on the Bulk, as described in Step 2 of \cref{fig:rec}. An analogous calculation can be reproduced for the stabilizers $S_\partial\otimes X(\alpha)_\mathcal{B}\in \mathcal{S}^1$:

\begin{equation}
\begin{aligned}
& S_\partial\cdot \bigg(\ketbra{\pm_s}\otimes \mathbb{I}_{\partial}\bigg)  Z(\eta)_{ \mathcal{B}} \otimes \mathbb{I}_{\partial} \ket{\atg}\\ 
&=  (-1)^{s\cdot \alpha} \bigg(\ketbra{\pm_s}\otimes \mathbb{I}_\partial\bigg) \bigg(X(\alpha)Z(\eta)_{ \mathcal{B}} \otimes S_{\partial}\bigg) \ket{\atg}\\ 
&=  (-1)^{\alpha\cdot \eta+\alpha\cdot s} \bigg(\ketbra{\pm_s}\otimes \mathbb{I}_\partial\bigg) Z(\eta)_{ \mathcal{B}} \otimes \mathbb{I}_{\partial} \ket{\atg}. 
\end{aligned}
\end{equation}

In this manner, if we were to perform an ideal syndrome measurement of $S_\partial$ on the post-measurement state, the value readoff would be $\alpha\cdot \eta+\alpha\cdot s$. The goal is to correct all of them to 0 (or $+1$ for the stabilizer measurement).

\begin{figure}[t]
\centering
\begin{tcolorbox}

\textbf{Recover $\rec(s)$.} Given a measurement string $s\in \{0, 1\}^{\mathcal{B}},$

\begin{enumerate}
    \item For each stabilizer $\mathbb{I}_{\partial}\otimes X(\alpha)_{ \mathcal{B}} \in \mathcal{S}^0$, compute its syndrome $s_\alpha = s\cdot \alpha$.

    \item Find the minimum-weight Z-type Pauli $Z(\beta)$ supported on $\mathcal{B}$, consistent with the syndromes $s_\alpha$ of $\mathcal{S}^0$. 

    \item For each stabilizer $\mathcal{S}_{\partial}\otimes X(\gamma)_{ \mathcal{B}} \in \mathcal{S}^1$, compute the \textit{corrected} syndrome $$s_\gamma' = s\cdot \gamma\oplus \gamma\cdot \beta.$$

    \item Let $\rec(s)$ be an \emph{arbitrary} Pauli supported on $\partial$, consistent with the computed corrected syndromes $s'$ of $\mathcal{S}^1$. \footnote{Here, for simplicity we assume the CSS code is specified by full rank matrices $H_X, H_Z$, such that there always exists an error associated to every syndrome vector.}

\end{enumerate}
\end{tcolorbox}
\caption{The Pauli Frame}
\label{fig:rec}
\end{figure}

Unfortunately, our decoder does not have access to these ideal values, and instead can only make a guess of them using $s$, and the inferred error $Z(\beta)$, as done in step 3. Note that this step is not necessarily efficient.

After $\rec(s)$ is applied, the state equals $\bigg(\ketbra{\pm_s}\otimes \rec(s)_{\partial}\bigg)  Z(\eta)_{ \mathcal{B}} \otimes \mathbb{I}_{\partial} \ket{\atg}$ up to normalization. For each code stabilizer $S_\partial$, suppose the corresponding cluster state stabilizer is $S_\partial\otimes X(\alpha)_\mathcal{B}\in \mathcal{S}^1$, then the \emph{residual syndrome} is given by
\begin{equation}
    \begin{aligned}
        & S_\partial\cdot \bigg(\ketbra{\pm_s}\otimes \rec(s)_{\partial}\bigg)  Z(\eta)_{ \mathcal{B}} \otimes \mathbb{I}_{\partial} \ket{\atg}\\
        &= (-1)^{\alpha\cdot \beta + \alpha\cdot s}\rec(s)_{\partial}\cdot S_{\partial}\cdot \bigg(\ketbra{\pm_s}\otimes \mathbb{I}_\partial \bigg)  Z(\eta)_{ \mathcal{B}} \otimes \mathbb{I}_{\partial} \ket{\atg}\\
        &=(-1)^{\alpha\cdot \beta + \alpha\cdot s}\rec(s)_{\partial}\cdot (-1)^{\alpha\cdot \eta+\alpha\cdot s} \bigg(\ketbra{\pm_s}\otimes \mathbb{I}_\partial\bigg) Z(\eta)_{ \mathcal{B}} \otimes \mathbb{I}_{\partial} \ket{\atg}\\
        &=(-1)^{\alpha\cdot \beta + \alpha\cdot \eta}\bigg(\ketbra{\pm_s}\otimes \rec(s)_{\partial}\bigg)  Z(\eta)_{ \mathcal{B}} \otimes \mathbb{I}_{\partial} \ket{\atg},
    \end{aligned}
\end{equation}
that is, the residual syndrome is given by $\alpha\cdot \beta + \alpha\cdot \eta$. This constitutes a proof that the decoding process in $\rec(s)$ (step 4 of \cref{fig:rec}) can be arbitrary, because the residual syndrome only depends on $\beta$.

\subsubsection{The Residual Error \texorpdfstring{$\rep$}{Rep}}

Once the $\rec(s)$ has been computed and applied, we pick $\rep$ to be a carefully designed Pauli operator (residual error) on $\partial$ satisfying:
\begin{equation}
    \bigg(\ketbra{\pm_s}\otimes \rec(s)\bigg) P \ket{\atg}  = \gamma_s \ket{s} \otimes \rep(P) \ket{\Bar{\Phi}}
\end{equation}

Note that the decoder need not know what $\rep$ is; however, should they be able to perform a \textit{perfect} syndrome measurement after $\rec$ is applied, then $\rep$ can essentially be understood as the minimum weight operator consistent with that syndrome. To be more concrete, we define $\rep(P)$ as a product of 4 terms:

\begin{equation}
    \rep(P) = \rep_X(P)\cdot \rep_Z(P) \cdot \rep_{\bar{X}}(P)\cdot \rep_{\bar{Z}}(P)
\end{equation}

Each of the terms above will correspond to the residual correction of the syndrome of a given stabilizer of $\bar{\Phi}$. If we partition the stabilizers $S_\partial\otimes X(\alpha)\in \mathcal{S}^1$, based on whether $S_\partial$ is an $X$ or $Z$ stabilizer of the LDPC code $Q$, or whether it is a logical $XX$ or $ZZ$ stabilizer of the encoded Bell state, then 

\begin{itemize}
    \item $\rep_X(P)$ is the minimum weight $Z$-type Pauli operator on $\partial$ consistent with the residual syndrome of the $X$ stabilizers of $Q$;
    \item $\rep_Z(P)$ is the minimum weight $X$-type Pauli operator on $\partial$ consistent with the residual syndrome of the $Z$ stabilizers of $Q$;
    \item $\rep_{\bar{X}}(P)\in \{\mathbb{I}, \bar{Z}\}$ is the residual $Z$ logical error, which ensures $\rep(P)$ and $Z(\beta) P$ have the same syndrome under the encoded $XX$ stabilizer. 
    \item $\rep_{\bar{Z}}(P)\in \{\mathbb{I}, \bar{X}\}$ is the residual $X$ logical error, which ensures $\rep(P)$ and $Z(\beta) P$ have the same syndrome under the encoded $ZZ$ stabilizer. 
\end{itemize}

We will argue in the following section that so long as $P$ is a local stochastic error, then $\rep_X(P)$,  $\rep_Z(P)$ are local stochastic as well. Moreover, $\rep_{\bar{X}}(P)$, $\rep_{\bar{Z}}(P)$ are trivial (identity) with high probability.

\section{Proof of fault-tolerance via clustering}
\label{section:FT-clustering}

The main result of this work is the following theorem:

\begin{theorem}\label{theorem:main}
    Fix an integer $T\geq 1$. Let $Q$ be any $[[n, k, d]]$ CSS code, which is $\ell$-LDPC. Then, $Q$ admits a single-shot state preparation procedure for the encoded Bell state $\ket{\Bar{\Phi}}$ using a circuit $W$ on $O(\ell\cdot n\cdot T)$ qubits and of depth $O(\ell^2)$, satisfying the following guarantees:

    \begin{enumerate}
        \item There exists a constant $p^*(\ell) \in (0, 1)$, s.t. if $W$ is subject to local stochastic noise $\mathcal{N}(p)$ of rate $p<p^*$, the state preparation procedure succeeds with probability at least  $1 - n\cdot T\cdot (\frac{p}{p^*})^{\Omega(\min(T, d))}$.
        \item Conditioned on this event, the resulting state is subject to local stochastic noise of rate $O(p^{1/2})$.
    \end{enumerate}
\end{theorem}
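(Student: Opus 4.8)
The plan is to combine the stabilizer structure of $\ket{\atg}$ from \cref{section:atg} (the meta-checks $\mathcal{S}^0$, the boundary stabilizers and encoded logical stabilizers in $\mathcal{S}^1$) with the recovery/repair functions of \cref{section:rec/rep} and a percolation argument in the spirit of \cite{Kovalev2012FaultTO, Gottesman2013FaulttolerantQC}. First I would reduce circuit-level noise to a noise model on the qubits just before the bulk measurement. Since $W$ consists of $\mathsf{CZ}$ gates on a graph of degree $O(\ell)$ followed immediately by the $\mathcal{B}$-measurement, a single fault, conjugated forward, is supported on only $O(\ell)$ qubits with \emph{no} iterated spreading ($Z$'s commute through the remaining $\mathsf{CZ}$'s, $X$'s on $\ket{+}$ inputs or on already-to-be-$X$-measured qubits act trivially). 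Moreover a measurement fault on a bulk qubit is equivalent to a $Z$ error before an ideal measurement, and the $X$-part of any bulk error is irrelevant up to a global phase. Hence it suffices to analyze a local stochastic fault set $F$ of rate $p$, inducing a bulk $Z$ error $Z(\eta)_{\mathcal{B}}$ and a boundary error $E_\partial$; the boundary part does not enter $\rec(s)$ and simply multiplies into the final residual error.

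\textbf{The clustering lemma.} Let $M$ be the $\mathbb{F}_2$ matrix whose rows are the meta-checks $\mathcal{S}^0$, and let $\beta$ be the minimum-weight vector with $M\beta = M\eta$ computed by \cref{fig:rec}. By the residual-syndrome calculation in \cref{section:rec/rep}, the residual syndrome of every $\mathcal{S}^1$-stabilizer $S_\partial\otimes X(\alpha)_{\mathcal{B}}$ is $\alpha\cdot(\beta\oplus\eta)$, so the entire output is governed by the mismatch $f:=\beta\oplus\eta\in\ker M$. Define the syndrome adjacency graph $\mathcal{G}$ on the bulk qubits with an edge between any two qubits that co-occur in a meta-check; by the LDPC property $\mathcal{G}$ has bounded degree $z=z(\ell)$, and each fault occupies an $O(\ell)$-size connected piece of it. Decompose $\mathsf{supp}(f)$ into connected components $K_1,K_2,\dots$ of $\mathcal{G}$. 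The key point is a cleaning argument: since every meta-check's support is a clique of $\mathcal{G}$, each $f|_{K_j}$ is itself in $\ker M$, so $\beta\oplus f|_{K_j}=(\beta\setminus K_j)\cup(\eta\cap K_j)$ remains syndrome-consistent, and minimality of $\beta$ forces $|\eta\cap K_j|\ge|\beta\cap K_j|$, whence $|K_j|\le 2|\eta\cap K_j|$. Thus a size-$s$ component contains $\ge s/2$ true errors, coming from $\Omega(s/\ell)$ distinct faults that form a connected piece of a bounded-degree ``fault graph''. Counting connected $r$-vertex subgraphs through a fixed location by $\le(z')^r$ and union bounding, $\Pr[\exists\, K_j,\ |K_j|\ge s_0]\le |F|\cdot(z'p)^{\Omega(s_0/\ell)}$, which for $p<p^*(\ell)$ decays geometrically in $s_0$.

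\textbf{From small clusters to a clean output.} Fix $s_0:=c\cdot\min(d,T)$ and condition on the good event that every $|K_j|<s_0$. I claim this forces (a) no logical error, i.e.\ $\rep_{\bar{X}}(P)=\rep_{\bar{Z}}(P)=\mathbb{I}$, and (b) $\rep(P)$ local stochastic of rate $O(p^{c_2})$. For (a): the bulk support of each encoded logical stabilizer of $\ket{\bar{\Phi}}$ is a ``sheet'' on a weight-$\ge d$ logical representative spanning all $T$ layers of one parity; a connected element of $\ker M$ with odd overlap with such a sheet, or that connects the layer-$1$ boundary to the layer-$(2T{+}1)$ boundary, must have size $\ge\Omega(\min(d,T))$ — winding around the code costs $\ge d$, spanning the layers costs $\gtrsim T$ — which is excluded for small $c$; the same estimate rules out a logical $E_\partial$. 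For (b): by the tiny bulk supports of the non-logical $\mathcal{S}^1$ stabilizers (single check qubits on the outer layers), the residual syndrome of each boundary code equals a ``slice'' of $f$, a union of projections of the small clusters $K_j$; since each $f|_{K_j}$ is a small, ``contractible'' kernel element, that slice is matched by a local weight-$O(|K_j|)$ boundary correction near $K_j$, so $\mathsf{supp}(\rep(P))$ lies in a bounded-per-cluster thickening of the fault set in a bounded-degree graph, hence is local stochastic of a constant power of $p$ (the theorem's $O(p^{1/2})$ is the optimized bound). Plugging $s_0=\Theta(\min(d,T))$ into the percolation estimate and $|F|=O(\ell n T)$ gives success probability $1-nT\,(p/p^*)^{\Omega(\min(d,T))}$, which together with (b) is exactly \cref{theorem:main}.

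\textbf{Main obstacle.} The hard part is the structural input to (a)–(b): making precise and quantitatively tight the statement that a connected element of $\ker M$ which is ``dangerous'' — carrying nontrivial logical syndrome, joining the two temporal boundaries, or requiring a non-local boundary correction — must have size $\Omega(\min(d,T))$. This requires a homological/systolic analysis of $\ket{\atg}$, essentially identifying $\ker M$ modulo the ``spacetime stabilizers'' with cycles on the code's Tanner complex stacked in time, and the bookkeeping is genuinely sensitive to the boundary conditions at layers $1$ and $2T{+}1$; it is also exactly the piece that must be re-done for the GHZ generalization of \cref{section:ghz}, where the effective temporal separation drops to $T/m$. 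By comparison the percolation step and the circuit-noise reduction are routine, the main remaining nuisance being that local stochasticity of $\rep$ should be argued for the distribution conditioned on the high-probability good event.
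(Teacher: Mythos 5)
Your skeleton matches the paper's: a syndrome adjacency graph of bounded degree, the cleaning/minimality argument showing each maximal mismatch component $K$ satisfies $|K|\le 2|(\text{true error})\cap K|$ (the paper's \cref{lemma:cc_weight}), a percolation union bound via \cref{fact:set_counting}, a separate bound on components spanning the two temporal boundaries (size $\ge T$, \cref{lemma:boundaries_connected}), and a $\ge d$ size bound for logically dangerous components (\cref{lemma:logical_errors}). The circuit-to-phenomenological reduction and the Bayes-rule conditioning you flag are indeed handled as you expect. However, you explicitly defer the two structural claims (a)--(b) to an unspecified ``homological/systolic analysis,'' and this is precisely where the substance of the paper's proof lives; as written the proposal does not close that gap.

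The paper closes it without any homology, by two concrete devices you are missing. First, the syndrome adjacency graph is not defined only on bulk qubits: it has $T+1$ code layers whose first and last layers are \emph{boundary} nodes, and a boundary node is painted iff $\rep_X(P)$ is supported there. This puts the residual correction inside the cluster decomposition, so minimality of $\rep_X$ can be applied \emph{per cluster}: a cluster $K$ touching only one boundary satisfies $|\rep_X(P)|_K\le\sum_t|P_{2t+1}R_{2t+1}|_K$ (\cref{lemma:cc-on-boundary}), which simultaneously rules out your worry about a ``non-local boundary correction'' (a component of $\mathrm{supp}(\rep_X)$ disconnected from any bulk mismatch would have to have weight $0$) and yields the $p^{1/2}$ local stochasticity via $|K|\ge 2a$ and $\ge|K|/4$ true errors (\cref{lemma:residual_stochastic}). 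Second, the statement that each one-boundary cluster's operator $\rep_X(P)|_K\cdot\prod_t P_{2t+1}|_KR_{2t+1}|_K$ has trivial $X$ syndrome — hence is a stabilizer or a weight-$\ge d$ logical — is proved by an elementary telescoping sum of the meta-check consistency relations \cref{eq:meta-check-x} across layers, using that the top layer of a bulk-terminating cluster must be a code layer. That telescoping identity is the entire ``structural input'' you identify as the main obstacle; without it (or an equivalent), claims (a) and (b) are unsubstantiated and the proof is incomplete.
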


This section is the main technical part of this paper, in which we show that if $P$ is a local stochastic error, then the residual error $\rep(P)$ is also local stochastic. Our proof approach follows closely that of \cite{Kovalev2012FaultTO} and \cite{Gottesman2013FaulttolerantQC}, who argued that quantum LDPC codes can (information-theoretically) correct from random errors of rate less than a critical threshold $p< p^*$, even in the presence of syndrome measurement errors. Their key idea was a ``clustering" argument, which reasons that stochastic errors on LDPC codes cluster into connected components on a certain low-degree graph. Using a percolation argument on this low-degree graph, they prove these errors are unlikely to accumulate into a logical error on the codespace.

\subsection{The syndrome adjacency graphs.}

\begin{figure}[btp]

\begin{subfigure}[b]{0.5\textwidth}
\centering
    \includegraphics[width = 0.6\linewidth]{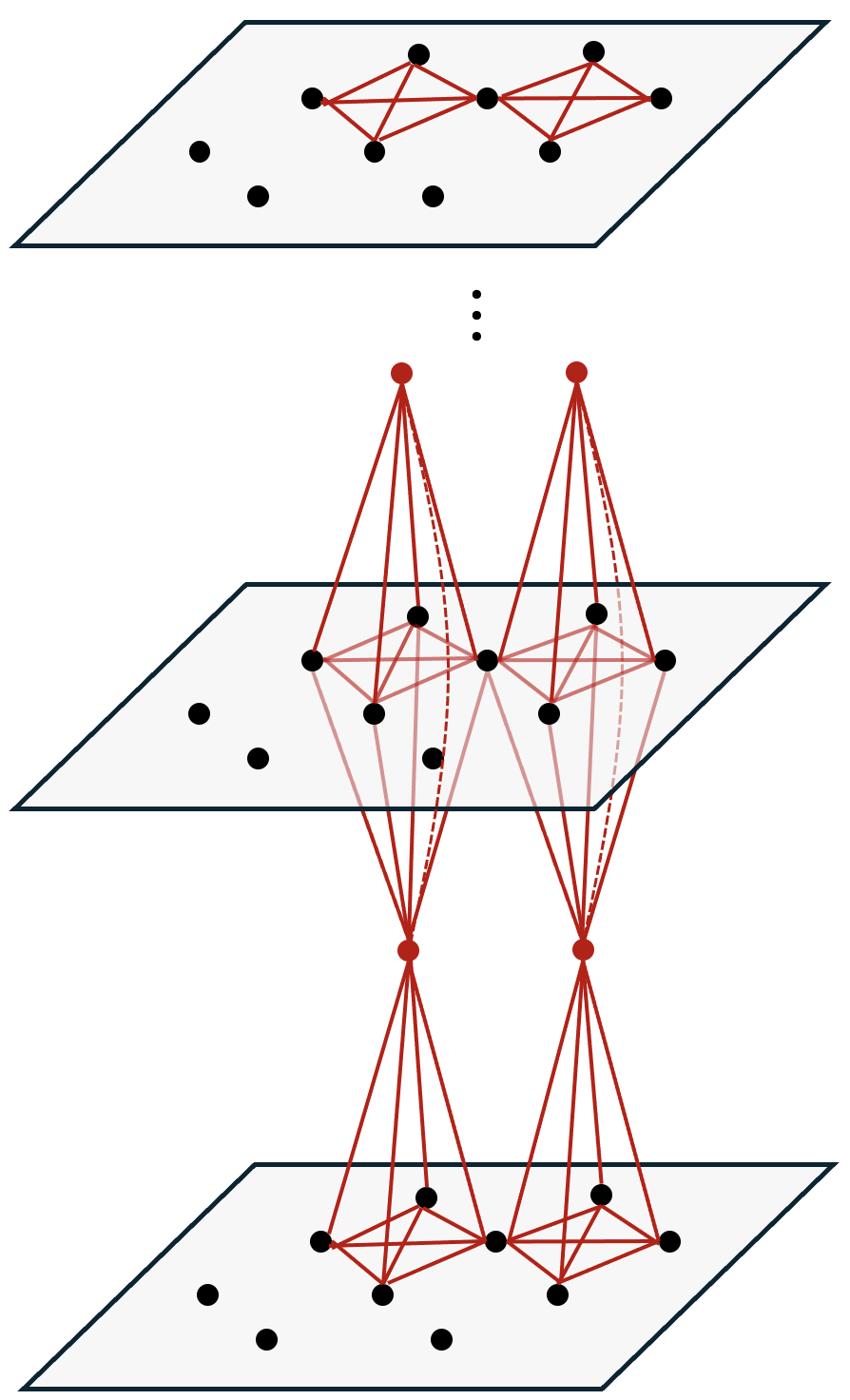}
    \caption{The X Syndrome Adjacency Graph}
    \label{fig:xsag}
\end{subfigure}
\begin{subfigure}[b]{0.5\textwidth}
\centering
    \includegraphics[width = 0.6\linewidth]{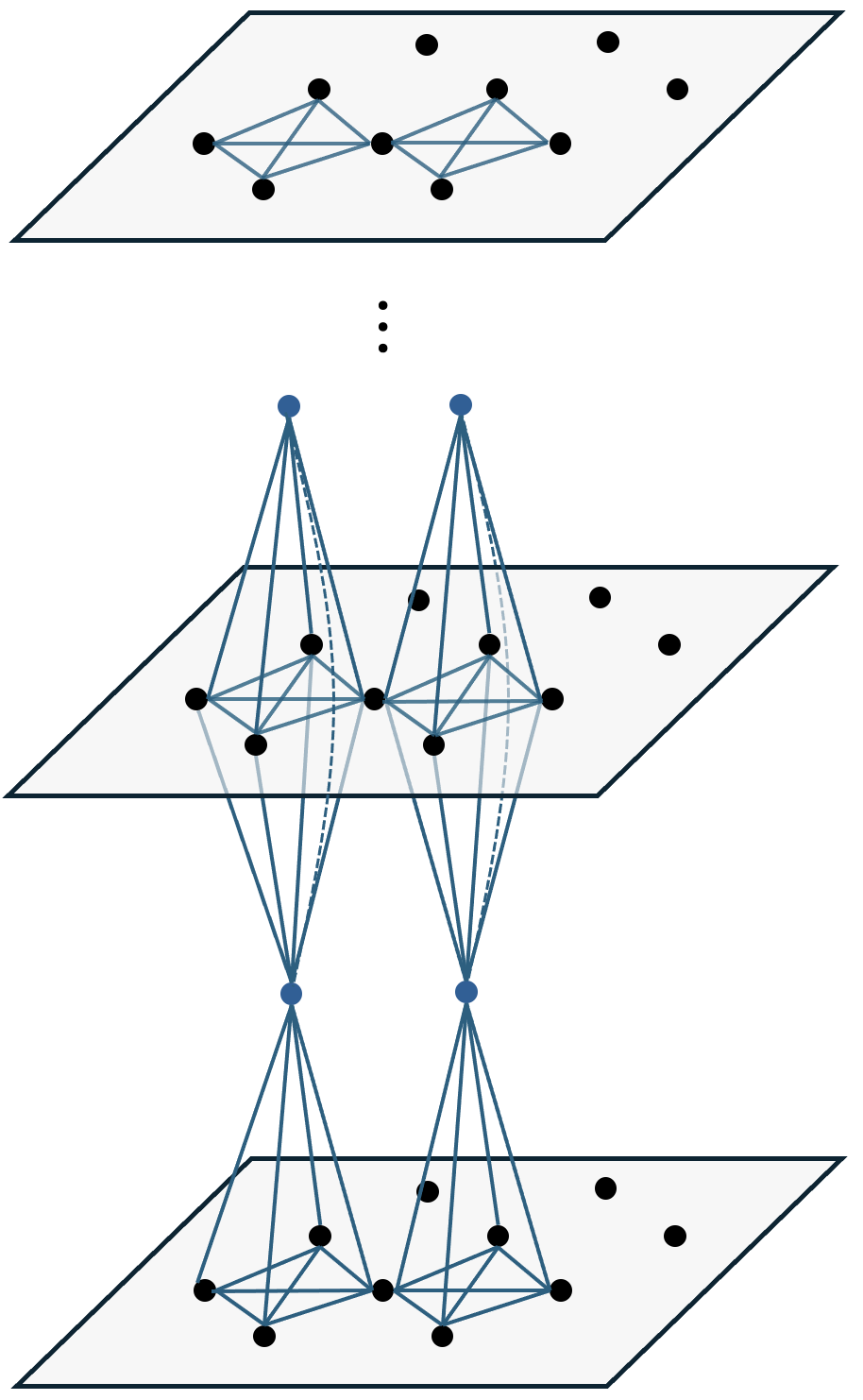}
    \caption{The Z Syndrome Adjacency Graph}
    \label{fig:zsag}
\end{subfigure}
\caption{The nodes of the syndrome adjacency graphs correspond to nodes of the alternating tanner graph. The dashed lines correspond to the ``skip" connections between vertices associated to copies of the same parity check. }
\label{fig:sag}
\end{figure}

 To setup notation, let $P_2, P_3, \cdots P_{2T} \in \{0, 1\}^n$ denote the support of the physical $Z$ errors that occur on each layer of the Bulk qubits, and $B_1, B_3, \cdots B_{2T+1} \in \{0, 1\}^{m_z}, B_2, B_4, \cdots B_{2T} \in \{0, 1\}^{m_x}$ denote the support of the $Z$ errors that occur the ancilla qubits in $G$. Following the syntax of $\rec, \rep$, let $R_2, R_3, \cdots, R_{2T} \in \{0, 1\}^n$ and $C_1,C_2, \cdots, C_{2T+1}$ denote the set of \textit{deduced} $Z$-errors on the physical and ancilla qubits respectively, using the information from the meta-checks $\mathcal{S}^0$.\footnote{Note that $R, C$ define $Z(\beta)$, in the notation of \cref{section:rec/rep}.} For $E\in \{0, 1\}^n$, $\mathsf{Syn}_X(E)\in \{0, 1\}^{m_x}$ is the $X$ syndrome vector associated to $E$ (when $E$ is interpreted as a Pauli $Z$ error), and $\mathsf{Syn}_Z(E)$ is defined analogously.

By definition, $R, C$ and $P, B$ are both consistent with syndromes inferred from $\mathcal{S}^0$, which implies the relation
\begin{equation}
\label{eq:meta-check-x}
    B_{2t}\oplus B_{2t+2}\oplus \mathsf{Syn}_X(P_{2t+1}) = C_{2t}\oplus C_{2t+2}\oplus \mathsf{Syn}_X(R_{2t+1}), \quad \forall t=1,2,\dots,T-1,
\end{equation}

\noindent and similarly for the $Z$ syndrome on even layers:
\begin{equation}
\label{eq:meta-check-z}
    B_{2t-1}\oplus B_{2t+1}\oplus \mathsf{Syn}_Z(P_{2t}) = C_{2t-1}\oplus C_{2t+1}\oplus \mathsf{Syn}_Z(R_{2t}), \quad \forall t=1,2,\dots,T.
\end{equation}

We represent the decoding process on a pair of new graphs, coined the ``syndrome adjacency graphs",  following \cite{Gottesman2013FaulttolerantQC}'s ideas. The $X$ (resp, $Z$) syndrome adjacency graph is defined on $T+1$ layers, where in each layer we place $n$ nodes. The 1st and $T+1$st layers are referred to as the ``boundary'' nodes. For each timestep $t\in [T]$ and X parity check $c\in [m_x]$, we also place a node $(c, t)$ \textit{between} code layers $t$ and $t+1$. Intuitively, the nodes in the X syndrome adjacency graph are in bijection with the nodes in the Z layers of the ATG.

In turn, the edges in this graph represent the connectivity of the stabilizers of the ATG: we connect any two nodes if they both are both acted on by a X (resp. Z) meta-checks or X (resp. Z)  boundary code stabilizers (See \cref{fig:xsag}, \cref{fig:zsag}). That is, We connect code nodes $(i, t)$ and $(j, t)$ if $i, j\sim c$ are both in the support of some check $c\in [m_x]$, we connect $(c, t)$ to both $(i, t)$ and $(i, t+1)$, and finally we connect $(c, t)$ to $(c, t+1)$. If the code is $\ell$-LDPC, the degree of this graph is $z\leq \ell(\ell - 1) + 2\ell \leq \ell(\ell+1).$

%The $X$ (resp, Z) syndrome adjacency graph is defined on $T+1$ layers, where in each layer we place a copy of the ``adjacency graph" of the LDPC code $Q$. That is, there are $n$ nodes in each such layer, where two nodes on the same layer are connected by an edge if there exists an $X$ parity check incident on both. Next, for each timestep $t\in [T]$ and X parity check $c\in [m_x]$, we place a new node $(c, t)$ \textit{between} code layers $t$ and $t+1$. Finally, we connect node $(c, t)$ to all the code nodes $(i, t)$ and $(i, t+1)$ where $i$ is in the support of the check $c$. Note that if the code is $\ell$-LDPC, the degree of this graph is $z\leq \ell(\ell - 1) + 2\ell \leq \ell(\ell+1).$ \textbf{See FIGURE}.

%Much like the alternating tanner graph, on the syndrome adjacency graphs we will define boundary vertices (the 1st and $T+1$st layers) and Bulk vertices. Each ancilla vertex $(c, t)$ in the X syndrome adjacency graph is in bijection with an ancilla qubit $(c, 2t)$ on an X layer of $G$, each code vertex $(i, t)$ in the Bulk of the X syndrome adjacency graph is in bijection with a code qubit $(i, 2t-1)$ on an Z layer of $G$. 

 To represent the decoding process on these new graphs, we will mark the vertices in which decoding has failed: for $t\in \{1, \cdots T-1\}$ we paint a code vertex $(i, t)$ in the Bulk of the X syndrome adjacency graph iff the physical error differs from the inferred error on the associated vertex of the ATG: $P_{(i, 2t+1)}\neq R_{(i, 2t+1)}$; similarly, we paint a check vertex $(c, t)$ of the syndrome adjacency graph iff $B_{(c, 2t)}\neq C_{(c, 2t)}$. Finally, we paint a vertex on the boundary of the X syndrome adjacency graph iff $\rep_{X}(P)$ is non-zero on the associated qubit of $\partial$.

 \subsection{Technical Proofs}

Perhaps the key observation in the approach of \cite{Kovalev2012FaultTO, Gottesman2013FaulttolerantQC} is to decompose the marked vertices in these adjacency graphs into (maximal) connected components\footnote{A connected component $K$ of marked vertices is maximal if there is no marked vertex adjacent to but not in $K$.}. By definition, the total weight of $R+C$ must be less than the total error on the Bulk $P+B$. What is non-trivial is that this is true even within each connected components of marked vertices, as formalized in \cref{lemma:cc_weight} below.

\begin{lemma}
    [Mismatched errors form connected components]\label{lemma:cc_weight} Consider a connected component $K$ within the Bulk of the X syndrome adjacency graph. Then, 
    \begin{align}
    \sum_{t\in [1, T-1]} \big|R_{2t+1}\big|_K+  \sum_{t\in [1, T]}\big|C_{2t}\big|_K \leq  \sum_{t\in [1, T-1]} \big|P_{2t+1}\big|_K+ \sum_{t\in [1, T]} \big|B_{2t}\big|_K,
    \end{align}

    \noindent and similarly for the Z syndrome adjacency graph. 
\end{lemma}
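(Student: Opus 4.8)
The plan is to argue by contradiction, via a standard ``repair inside $K$'' exchange argument (cf.\ \cite{Gottesman2013FaulttolerantQC, Kovalev2012FaultTO}). Recall that $(R,C)$ encodes $Z(\beta)$, the \emph{minimum}-weight $Z$-type Pauli on $\mathcal{B}$ whose $\mathcal{S}^0$-syndromes agree with $s$, and that the true bulk error $(P,B)$ also agrees with those syndromes --- this is exactly the content of \cref{eq:meta-check-x} and \cref{eq:meta-check-z}. Suppose the asserted inequality fails for some maximal connected component $K$ of marked vertices in the Bulk of the $X$ syndrome adjacency graph. I would define a new bulk $Z$-error $(R',C')$ agreeing with $(P,B)$ on the qubits indexed by vertices of $K$ and with $(R,C)$ on all other bulk qubits. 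Since $(R',C')$ and $(R,C)$ differ only on $K$, and there the $K$-restricted weight of $(R,C)$ is --- by the failure of the inequality --- strictly larger than that of $(P,B)$, one gets $|R'|+|C'| < |R|+|C| = |\beta|$. So it suffices to check that $(R',C')$ is still consistent with the $\mathcal{S}^0$-syndromes: this contradicts the minimality defining $Z(\beta)$. The $Z$ syndrome adjacency graph is handled by the symmetric argument.

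The crux is therefore to verify that $(R',C')$ reproduces every $\mathcal{S}^0$-syndrome. First, the vertices of the $X$ syndrome adjacency graph are exactly the qubits touched by the $X$ meta-checks (copies of code qubits on the odd ATG layers and copies of $X$-checks on the even layers), whereas the $Z$ meta-checks act only on $Z$-checks and on code copies at the even layers; hence the edit $(R,C)\to(R',C')$, which changes only qubits indexed by $K$, leaves the argument of every $Z$ meta-check unchanged and preserves \cref{eq:meta-check-z} automatically. For the $X$ meta-checks I would use that the support of each such meta-check is a \emph{clique} in the $X$ syndrome adjacency graph: it consists of the code copies $\{(i,\cdot):i\sim c\}$ lying in a single layer --- pairwise adjacent because they all lie in the check $c$ --- together with the two bracketing $X$-check copies, which occupy consecutive layers of the adjacency graph (so are adjacent to each other) and each of which is adjacent to all of those code copies (precisely the adjacency between $(c,t)$ and $(i,t)$, $(i,t{+}1)$ used to build the graph). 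Given this, fix an $X$ meta-check $M$: if $\mathsf{Supp}(M)\cap K=\emptyset$ then $(R',C')$ and $(R,C)$ agree on $\mathsf{Supp}(M)$ and the $M$-syndrome is unchanged; if $\mathsf{Supp}(M)$ meets $K$, then by cliqueness every vertex of $\mathsf{Supp}(M)\setminus K$ is adjacent to a vertex of $K$, so maximality of $K$ forces it to be \emph{unmarked}, i.e.\ $(R,C)=(P,B)$ there, and combined with $(R',C')=(P,B)$ on $\mathsf{Supp}(M)\cap K$ we conclude $(R',C')=(P,B)$ on all of $\mathsf{Supp}(M)$, so the $M$-syndrome of $(R',C')$ equals that of $(P,B)$, which is the measured value. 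Hence $(R',C')$ lies in the decoder's feasible set, giving the contradiction.

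The step I expect to demand the most care is this clique/locality claim together with the index bookkeeping behind it: one must pin down the bijection between adjacency-graph vertices and ATG qubits, confirm that the two $X$-check copies of a meta-check really land on consecutive layers (rather than two apart) so that they are joined, and confirm that no $X$ meta-check reaches a boundary layer --- so that maximality of a \emph{Bulk} component $K$ genuinely forces the unmarked property on all of $\mathsf{Supp}(M)\setminus K$, with no boundary vertex escaping the argument. Once these adjacency facts are in place, the remainder --- the Hamming-weight comparison and the automatic preservation of the $Z$-type meta-check syndromes --- is elementary, and the whole argument transposes verbatim to the $Z$ syndrome adjacency graph with the roles of $X$ and $Z$ meta-checks exchanged.
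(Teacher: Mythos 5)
Your proposal is correct and follows essentially the same route as the paper: a swap/exchange argument that replaces $(R,C)$ by $(P,B)$ on $K$, derives a contradiction with the minimality of the decoder's output, and uses the maximality of $K$ (via the fact that each meta-check's support is a clique in the syndrome adjacency graph, so any support vertex outside $K$ is adjacent to $K$ and hence unmarked) to verify that the swapped error remains consistent with all $\mathcal{S}^0$ syndromes. Your additional bookkeeping — that the $Z$ meta-checks act on a disjoint set of bulk qubits and are therefore untouched, and that no $X$ meta-check support contains a boundary code vertex — is a correct and somewhat more careful rendering of the paper's brief ``closure of $K$'' argument.
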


\begin{proof}
    By definition, $R, C$ are chosen to have minimal weight consistent with the syndrome $\mathcal{S}^0$ information. We claim that if on $K$, the weight of $R, C$ are not minimal, then we could swap $R|_K, C|_K\leftrightarrow P|_K, B|_K$ on $K$, and decrease the overall error weight. Indeed, to see that performing such a swap still produces an operator $R', C'$ consistent with the syndrome information, note that (1) all checks contained entirely within $K$ are consistent, since so is $P|_K, B|_K$; (2) on the ``closure" $\bar{K}$ of $K$ ($K$ and its boundary), $R'|_{\bar{K}}, C'|_{\bar{K}} = P|_{\bar{K}}, B|_{\bar{K}}$, since $K$ is a \textit{maximal} connected component.
\end{proof}

The clustering arguments in the next lemmas require a short technical fact.

\begin{fact}
    [\cite{Kovalev2012FaultTO, Gottesman2013FaulttolerantQC}]\label{fact:set_counting} Consider a set $T$ of $t$ nodes in a graph $G$ of degree $\leq z$. The number of sets $S$ of nodes which contain $T$, of total size $s$, and which form a union of connected components in $G$, is $\leq z^{s-t}\cdot 4^s$. 
\end{fact}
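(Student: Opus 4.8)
The plan is to prove this by the standard "lattice‑animal'' / Peierls counting technique, encoding each admissible $S$ by a short code word obtained from a depth‑first exploration, and bounding the number of code words; this is essentially the argument of \cite{Kovalev2012FaultTO, Gottesman2013FaulttolerantQC}, which I would reproduce for completeness. Here ``$S$ forms a union of connected components'' is read in the only way that makes the count finite: every connected component of the induced subgraph $G[S]$ contains at least one node of $T$, so $S$ is ``anchored'' at $T$.

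First I would fix, once and for all, a total order on $V(G)$, and attach to each admissible $S$ a canonical spanning forest $F=F(S)$: inside every connected component of $G[S]$, take the breadth‑first spanning tree rooted at the least element of $T$ lying in that component, exploring incident edges in increasing order. Then $F$ has vertex set $S$, has $r\le t$ trees, and its roots form a subset $R\subseteq T$; since $S$ is exactly the vertex set of $F(S)$, the assignment $S\mapsto F(S)$ is injective and it suffices to bound the number of such forests. Next I would traverse $F$ by a depth‑first search that starts with the roots $R$ (in increasing order) on the stack and at each step either makes a \emph{forward} move from the stack‑top vertex $u$ to its next child in $F$ along some incident edge of $G$, or \emph{backtracks} by popping $u$. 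This records a balanced word in $\{\mathsf F,\mathsf B\}$ of length $2(s-r)$ (each of the $s-r$ non‑root vertices is pushed once and popped once), of which there are at most $\binom{2(s-r)}{s-r}\le 4^{s-r}\le 4^{s}$; and for each of the $s-r$ forward steps the edge used at $u$ is one of at most $\deg_G(u)\le z$ choices, contributing a factor $\le z^{\,s-r}$. Together with the choice of $R$ this already gives a crude bound of the form $2^{s}\cdot 4^{s}\cdot z^{s-1}$.

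The remaining — and genuinely delicate — point is to replace $z^{\,s-1}$ by $z^{\,s-t}$, i.e.\ to make sure that each of the $t$ distinguished nodes of $T$ is ``paid for'' only by an $O(1)$ factor rather than an $O(z)$ edge‑choice factor. I would handle this by additionally tagging, at the moment it is pushed onto the stack, each of the $t-r$ elements of $T$ that are not roots (costing a further factor of at most $2^{t}\le 2^{s}$), and then declaring that a forward move into an already‑tagged vertex carries no edge label; since $G[S]$ is anchored at $T$, this is consistent and removes $t-r$ of the $z$‑factors, leaving $z^{\,s-t}$. Collecting the estimates, the number of admissible $S$ is at most $z^{\,s-t}\cdot C^{s}$ for an absolute constant $C$, and a more careful accounting of the step‑type words and the tagging overhead (as in \cite{Kovalev2012FaultTO, Gottesman2013FaulttolerantQC}) lets one take $C=4$, giving the stated $z^{\,s-t}\cdot 4^{s}$. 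The main obstacle is exactly this last bookkeeping: the trivial DFS bound ``charges'' every vertex a degree factor, and one must exploit the hypothesis that $S$ is anchored at \emph{all} of $T$ — not merely connected — to recover the $z^{-t}$ savings that the downstream union bound over choices of $T$ relies on.
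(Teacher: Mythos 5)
The paper does not actually prove this statement---it is imported verbatim from the cited references---so there is no in-paper argument to compare against; I will assess your encoding on its own terms. Your overall strategy (a canonical spanning forest plus a DFS/Peierls code word, with $\le z$ choices per forward edge and $\binom{2m}{m}\le 4^m$ step-type words) is the right one, and you correctly identify the crux: each of the $t$ anchors must be paid for by an $O(1)$ factor rather than a factor of $z$. But the mechanism you propose---tagging a forward move whose target lies in $T\setminus R$ and dropping its edge label---does not yield an injective encoding. The decoder must know \emph{which} vertex sits on top of the stack in order to interpret every subsequent edge label emanating from it, so a tagged move must still identify \emph{which} element of $T\setminus R$ was entered. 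Paying for that identification costs up to $(t-r)!$ (or $t^{\,t-r}$) over the walk, and when $t$ is comparable to $s$ this is not absorbed by $4^s$; so the step ``this is consistent and removes $t-r$ of the $z$-factors'' fails as written. Separately, your intermediate bound already carries $2^t\cdot 4^{\,s-r}\le 8^s$ from the choice of roots $R$, and the promised improvement of the constant to $4$ is asserted rather than derived.

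Both problems disappear if you choose the spanning forest so that \emph{no anchor is a descendant of another} and the anchors are visited in a canonical order. Concretely: adjoin a virtual root $v_0$ joined to every vertex of $T$. Since every component of $G[S]$ meets $T$, a simultaneous BFS from the sources $T$ inside each component partitions $S$ into exactly $t$ connected cells, one per anchor, and hanging these $t$ rooted trees off $v_0$ gives a spanning tree of $S\cup\{v_0\}$ with $s$ edges whose set of children of $v_0$ is precisely $T$. Its DFS word is a single Dyck word of semilength $s$ (at most $\binom{2s}{s}\le 4^s$ choices, which also delimits the $t$ subtrees for free); the forward moves out of $v_0$ carry no label because the $j$-th one goes to the $j$-th smallest element of $T$, and each of the remaining $s-t$ forward moves carries one of $\le z$ labels. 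This yields exactly $z^{\,s-t}\cdot 4^s$ with no extra $2^t$ factor and no identification ambiguity.
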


The first step in the clustering argument is to reason that there does not exist any connected component (in either X or Z syndrome adjacency graph) which connects the two boundary codes in $\partial$. This is the only location the third dimension/number of layers of the graph state, $T$, appears. Henceforth, we will refer to this non-connected boundary condition as the \textit{clustering condition}, or $\mathsf{CC}_X, \mathsf{CC}_Z$. As we discuss shortly, conditioned on this event, the residual errors can be described as local stochastic errors.

\begin{lemma}
    [The Boundaries aren't Connected]\label{lemma:boundaries_connected} There exists $p_0\in (0, 1)$ s.t. $\forall p<p_0$, the probability there exists a connected component $K$ in the Bulk of the X syndrome adjacency graph which spans the two boundaries is
    \begin{equation}
    1-\mathbb{P}[\mathsf{CC}_X]\equiv  \mathbb{P}_{P\leftarrow N(p)}\bigg[\exists K\text{ which spans }\partial\bigg] \leq m_x\cdot \frac{(p/p_0)^{T/2}}{1-\sqrt{p/p_0}},
    \end{equation}

    \noindent where $p_0\equiv (8z)^{-2}$. The $Z$ clusters are analogous.
\end{lemma}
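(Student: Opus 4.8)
The plan is to run a union bound over all candidate connected components $K$ that touch both boundary layers of the $X$ syndrome adjacency graph, weighting each term by the probability that $K$ is in fact (contained in) the marked set. Fix a connected component $K$ that touches layer $1$ and layer $T+1$; since the graph has $T+1$ code layers and each edge moves at most one layer vertically, such a $K$ must contain at least one vertex in each of the $T+1$ layers, so $|K| \geq T+1$, and in particular $|K| \geq T/2$ trivially. The first step is to argue that the event ``$K$ is a union of marked connected components'' forces a lower bound of $\Omega(|K|)$ on the number of true physical/ancilla errors inside $K$. This is exactly where \cref{lemma:cc_weight} enters: on a marked connected component, $|R|_K + |C|_K \leq |P|_K + |B|_K$, and since every marked code vertex $(i,t)$ has $P_{(i,2t+1)} \neq R_{(i,2t+1)}$ and every marked check vertex has $B \neq C$, the number of marked vertices in $K$ is at most $|P|_K + |B|_K + |R|_K + |C|_K \leq 2(|P|_K + |B|_K)$. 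Hence if $K$ is entirely marked, $|P|_K + |B|_K \geq |K|/2$, i.e. at least $|K|/2$ genuine errors occur on qubits associated to vertices of $K$. By the local stochastic property $P, B \leftarrow N(p)$, the probability that a fixed set $S$ of $\geq |K|/2$ such error-locations is hit is $\leq p^{|S|} \leq p^{|K|/2}$; summing over which half of $K$ carries the errors contributes at most a $2^{|K|}$ factor.

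The second step is the counting bound. Using \cref{fact:set_counting} with $T$ a single boundary vertex (there are $n$, or really $2n$ counting both boundaries, but we can fix the boundary vertex the cluster is ``anchored'' at and multiply by $m_x$ — actually by the number of boundary nodes, $\leq n \leq$ something polynomial, but the cleanest route is to anchor at a fixed check or code vertex on the boundary and sum), the number of sets of size $s$ that form a union of connected components containing that vertex is $\leq z^{s-1}\cdot 4^s \leq (4z)^s$. Combining with the probability bound $p^{s/2}\cdot 2^s$ per set of size $s$, the contribution of clusters of size exactly $s$ is at most $(4z)^s \cdot 2^s \cdot p^{s/2} = (8z \sqrt{p})^s$. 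Since any qualifying $K$ has $s = |K| \geq T$ (or, to match the stated bound, $\geq T/2$), we sum the geometric series
\begin{equation}
    \sum_{s \geq T/2} (8z\sqrt{p})^s = \frac{(8z\sqrt{p})^{T/2}}{1 - 8z\sqrt{p}},
\end{equation}
which converges precisely when $8z\sqrt{p} < 1$, i.e. $p < (8z)^{-2} = p_0$. Writing $8z\sqrt p = \sqrt{p/p_0}$ gives $(8z\sqrt p)^{T/2} = (p/p_0)^{T/4}$; to land exactly on the stated exponent $(p/p_0)^{T/2}$ one instead notes $|K| \geq T$ (from the $T+1$ layers), giving $\sum_{s\geq T}(8z\sqrt p)^s$, and then the bookkeeping of the boundary anchor absorbs into the $m_x$ prefactor. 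Finally, multiplying by the number of possible boundary anchor vertices (at most $m_x$, after the $Z$/code-vertex reduction, or folded in by a slightly more generous union bound) yields the claimed $m_x \cdot \frac{(p/p_0)^{T/2}}{1 - \sqrt{p/p_0}}$.

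The main obstacle I expect is \textbf{the first step} — getting a clean linear-in-$|K|$ lower bound on the true error count inside a marked cluster. The subtlety is that \cref{lemma:cc_weight} controls $|R|_K + |C|_K$ by $|P|_K + |B|_K$ but does not by itself say that every vertex of $K$ is ``paid for'' by a distinct error; a marked vertex only certifies $P \neq R$ locally, so a single error can in principle be responsible for many marked vertices only if $R$ is wrong over a large region, and one has to use minimality of $R, C$ together with maximality of $K$ (as in the swap argument in the proof of \cref{lemma:cc_weight}) to convert ``$K$ marked'' into ``$\Omega(|K|)$ errors supported near $K$.'' One must also be careful that errors on ancilla qubits and code qubits are counted consistently with how vertices of the syndrome adjacency graph map to qubits of the ATG, and that the factor-of-two slack ($|P|_K+|B|_K \geq |K|/2$ rather than $\geq |K|$) is harmless — it is, since it only changes $p$ to $\sqrt p$, which is exactly the source of the $O(p^{1/2})$ residual noise rate in \cref{theorem:main}. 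The geometric-series and \cref{fact:set_counting} steps are routine once this is in place.
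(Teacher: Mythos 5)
Your proposal is correct and follows essentially the same route as the paper's proof: the key inequality $s \le \sum_K(|R|+|P|) + \sum_K(|C|+|B|) \le 2(\sum_K|P| + \sum_K|B|)$ via \cref{lemma:cc_weight}, the observation that a boundary-spanning cluster has $|K|\ge T$, and the union bound combining \cref{fact:set_counting} with the $2^s$ error-pattern count and $p^{s/2}$ to get the geometric series with ratio $\sqrt{p/p_0}$, $p_0=(8z)^{-2}$. The concern you flag about converting ``$K$ marked'' into ``$\Omega(|K|)$ true errors'' is resolved exactly as you suspect, by the triangle inequality $|RP|_K\le|R|_K+|P|_K$ combined with the minimality/swap argument of \cref{lemma:cc_weight}.
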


\begin{proof}
    Let us fix a connected component of size $|K|=s$. We aim to find a lower bound on the number of Bulk Z errors which occur in $K$, as a function of $s$. For this purpose, note that the number of marked vertices satisfies:\footnote{Here, we let $|RP|$ denote the number of locations the vectors $R, P$ differ, or alternatively, the weight of the operator associated to the product of $R, P$ as Pauli Z operators.}
\begin{equation}
    \begin{aligned}
        s&= \sum_{t\in [1, T-1]} \big|R_{2t+1}P_{2t+1}\big|_K + \sum_{t\in [1, T]} \big|C_{2t}B_{2t}\big|_K \\ &\leq  \sum_{t\in [1, T-1]}\left(\big|R_{2t+1}\big|_K+\big|P_{2t+1}\big|_K\right) +  \sum_{t\in [1, T]} \left(\big|C_{2t}\big|_K+\big|B_{2t}\big|_K\right).
    \end{aligned}
\end{equation}
    Next, we leverage the fact that the weight of the inferred error $(R, C)$ is \textit{minimal}, from \cref{lemma:cc_weight}. That is, within any connected component $K$, 
    \begin{equation}
      \sum_{t\in [1, T-1]}\big|R_{2t+1}\big|_K +  \sum_{t\in [T]} \big|C_{2t}\big|_K\leq  \sum_{t\in [1, T-1]}  \big|P_{2t+1}\big|_K+ \sum_{t\in [T]} \big|B_{2t}\big|_K.
    \end{equation}
This implies that
\begin{equation}
    \sum_{t\in [1, T-1]}  \big|P_{2t+1}\big|_K+ \sum_{t\in [T]} \big|B_{2t}\big|_K \geq \frac{s}{2}.
\end{equation}

    \noindent Therefore, there are at least $s/2$ physical errors ($P, B$) in $K$. Finally, if a connected component $K$ spans the boundaries of $\partial$, it must have size $s \geq T$. By a union bound, 
\begin{equation}
    \begin{aligned}
    &\mathbb{P}_{P\leftarrow N(p)}\bigg[\exists K\text{ which spans }\partial\bigg]\\ = & \mathbb{P}_{P\leftarrow N(p)}\bigg[\exists K, \exists i, j \in [m_x] \text{ s.t. }(i, 1), (j, T)\in K\bigg]\\ 
         \leq &\sum_{s\geq T} \bigg(\text{\# Clusters of Size }s \text{ incident on }\partial\bigg)\cdot \bigg(\text{\# Error Patterns}\bigg)\cdot p^{s/2}\\
         \leq & m_x\cdot \sum_{s\geq T} (4z)^s\cdot 2^s\cdot p^{s/2} \leq m_x\frac{(p/p_0)^{T/2}}{1-\sqrt{p/p_0}}
    \end{aligned}
\end{equation}

    \noindent where $p_0\equiv (8z)^{-2}$. In the last inequality, we leveraged \cref{fact:set_counting} and the fact that the number of ways to pick $s/2$ locations out of a set of size $s$ is $\leq 2^s$.    
\end{proof}

We are now in a position to prove the residual errors $\mathsf{Rep}_Z, \mathsf{Rep}_X$ are local stochastic noise, so long as we condition on the disconnected boundaries condition $\mathsf{CC}$ of \cref{lemma:boundaries_connected}. The proof strategy is similar to that above, and that of \cite{Gottesman2013FaulttolerantQC}: We relate the size of the clusters to the number of true physical errors within it, and subsequently union bound over such configurations of clusters. 

To proceed, we need another short lemma on the weight of clusters connected to \textit{only one} of the boundary codes:

\begin{lemma}\label{lemma:cc-on-boundary}
    Consider a connected component $K$ in the X syndrome adjacency graph, incident on only one of the boundary codes. Let $G=\rep_X(P)$ denote the residual $Z$ error on $\partial$. Then, 
    \begin{equation}
        |G|_K \leq \sum_{t\in [1, T-1]} \big|P_{2t+1}\cdot R_{2t+1}\big|_K. 
    \end{equation}
\end{lemma}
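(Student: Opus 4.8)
The plan is to adapt the minimum-weight swap argument from the proof of \cref{lemma:cc_weight}, applying it this time to the residual boundary operator $G=\rep_X(P)$ rather than to the bulk guess $(R,C)$. Assume without loss of generality that $K$ is incident only on the top boundary code, i.e.\ the $n$ code qubits on layer $2T+1$; write $\partial_{\mathrm{top}}$ for those qubits, $G_{\mathrm{top}}:=G|_{\partial_{\mathrm{top}}}$, and $\delta_{2t+1}:=P_{2t+1}\oplus R_{2t+1}$ for the bulk code mismatches. I would first record two facts. (a) By the residual-syndrome computation of \cref{section:rec/rep}, the cluster-state stabilizer attached to an $X$-check $c$ of the top (resp.\ bottom) boundary code has bulk support equal to the single check qubit $(c,2T)$ (resp.\ $(c,2)$), so its residual syndrome is $(B\oplus C)_{(c,2T)}$ (resp.\ $(B\oplus C)_{(c,2)}$); hence by the definition of $\rep_X$, $G_{\mathrm{top}}$ is the minimum-weight $Z$-type operator on $\partial_{\mathrm{top}}$ with $\mathsf{Syn}_X(G_{\mathrm{top}})=B_{2T}\oplus C_{2T}$, and likewise $\mathsf{Syn}_X(G|_{\partial_{\mathrm{bot}}})=B_2\oplus C_2$. (b) No vertex of $K$ lies on the bottom check layer (ATG layer $2$): if $(c,1)$ were a marked vertex of $K$ then $B_2(c)\neq C_2(c)$, so $\mathsf{Syn}_X(G|_{\partial_{\mathrm{bot}}})(c)=1$, which forces a marked bottom-boundary vertex $(i,1)$ with $i\sim c$; since $(i,1)$ is adjacent to $(c,1)\in K$, maximality of $K$ puts $(i,1)\in K$, contradicting that $K$ avoids $\partial_{\mathrm{bot}}$. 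This is the only place the single-boundary hypothesis is used.

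Next I would construct a competitor $G'_{\mathrm{top}}$, a $Z$-type operator on $\partial_{\mathrm{top}}$ which agrees with $G_{\mathrm{top}}$ on every qubit $i$ with $(i,2T+1)\notin K$, and whose support on the remaining qubits is $\bigoplus_{t=1}^{T-1}\bigl(\delta_{2t+1}|_K\bigr)$ (the $K$-restricted bulk code mismatches, read as a subset of $[n]$). The crux is to show $G'_{\mathrm{top}}$ has the same $X$-syndrome as $G_{\mathrm{top}}$, equivalently $G_{\mathrm{top}}|_K\oplus\bigoplus_{t=1}^{T-1}\bigl(\delta_{2t+1}|_K\bigr)\in\ker H^X$. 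Telescoping the $X$ meta-check relations \cref{eq:meta-check-x}, which read $\Delta_{2t}\oplus\Delta_{2t+2}=\mathsf{Syn}_X(\delta_{2t+1})$ with $\Delta_{2t}:=B_{2t}\oplus C_{2t}$, gives globally $\mathsf{Syn}_X(G_{\mathrm{top}})=\Delta_{2T}=\Delta_2\oplus\bigoplus_t\mathsf{Syn}_X(\delta_{2t+1})$ with $\Delta_2=\mathsf{Syn}_X(G|_{\partial_{\mathrm{bot}}})$; restricting to $K$ kills the $\Delta_2$ term by fact (b), while maximality of $K$ ensures each syndrome bit localizes to $K$ with no contribution leaking in from other components, yielding the claim. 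Granting it, minimality of $G_{\mathrm{top}}=\rep_X(P)|_{\partial_{\mathrm{top}}}$ gives $|G'_{\mathrm{top}}|\ge|G_{\mathrm{top}}|$; on the other hand, since a top-boundary node lies in $K$ only if $G_{\mathrm{top}}$ is supported there, $|G'_{\mathrm{top}}|\le|G_{\mathrm{top}}|-|G_{\mathrm{top}}|_K+\sum_{t\in[1,T-1]}|\delta_{2t+1}|_K$. Combining the two inequalities gives $|G|_K=|G_{\mathrm{top}}|_K\le\sum_{t\in[1,T-1]}|P_{2t+1}R_{2t+1}|_K$, which is the lemma.

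The hard part will be the syndrome-matching claim in the second paragraph — making rigorous that the telescoped identity survives restriction to $K$. As in the proof of \cref{lemma:cc_weight}, this amounts to a layer-by-layer bookkeeping argument over the closure $\bar K$: one checks, descending from the top boundary, that every $X$-check incident to $K$ has its meta-check (or boundary-stabilizer) relation supported entirely inside $\bar K$, so that swapping the errors on $K$ for the mismatch pattern $(\delta,\Delta)$ preserves all $\mathcal{S}^0$- and $\mathcal{S}^1$-syndromes, and that fact (b) prevents the telescoping from ``escaping'' through the bottom. Structurally this works because the two boundary check layers each appear in only one meta-check (there is no meta-check past the boundary), so a cluster reaching only one boundary is homologically trivial there and its boundary footprint must be paid for by bulk code mismatches; I expect the write-up to essentially reuse the closure/maximality machinery already developed for \cref{lemma:cc_weight}.
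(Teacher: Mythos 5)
Your proposal is correct and follows essentially the same route as the paper's proof: identify the residual $X$ syndrome of the boundary code with the check mismatch $B\oplus C$ on the boundary-adjacent check layer, telescope the meta-check relations restricted to $K$ (using maximality to localize each relation to the cluster and the single-boundary hypothesis to kill the far-end term), and conclude by the minimum-weight swap argument. The only differences are cosmetic: you orient $K$ toward the top boundary rather than the bottom, and you phrase the swap via an explicit competitor $G'$ rather than multiplying $G$ by $G|_K\cdot\prod_t P_{2t+1}|_K R_{2t+1}|_K$.
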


\begin{proof}

    The crux of the proof lies in the following claim (which we prove shortly): If the connected component $K$ is incident on only one of the boundaries of $\partial$, then the operators $G|_K\in \mathsf{Pauli}(n)$ and $\prod_{t\in [1,T-1]} P_{2t+1}|_{K}\cdot R_{2t+1}|_{K}\in \mathsf{Pauli}(n)$ have the same $X$ syndrome. This tells us $G|_{K}$ and $\prod_{t\in [1, T-1]} P_{2t+1}|_{K}R_{2t+1}|_{K}$ differ only by a $Z$ stabilizer. However, if $G$ is \textit{minimal}, then $G|_{K}$ must have weight less than that of $\prod_{t\in [1, T-1]} P_{2t+1}|_{K}R_{2t+1}|_{K}$. Otherwise, we could replace $G$ with $G\cdot G|_{K}\cdot  \prod_{t\in [1, T-1]} P_{2t+1}|_{K}R_{2t+1}|_{K}$, and decrease the overall weight of $G$, without changing the syndrome information. Thus, 

    \begin{equation}
         |G|_K \leq \bigg|\prod_{t\in [1, T-1]} P_{2t+1}\big|_{K}R_{2t+1}\big|_{K}\bigg|\leq \sum_{t\in [1, T-1]} \big|P_{2t+1}\cdot R_{2t+1}\big|_K. 
    \end{equation}

    To prove the missing claim, we note two facts. Let us assume, WLOG, that $K$ is incident on the boundary on the first layer. First, recall that the residual error $G|_K = \mathsf{Rep}_X(P)|_K$ is, by definition, the minimum weight $Z$ error consistent with the \textit{residual} $X$ syndrome (restricted to the cluster $K$). By the definition of the Pauli frame in \cref{fig:rec}, we observe that this residual syndrome is simply $B_2|_K\oplus C_2|_K$, the mismatch on the first layer of $X$ checks. 

    Therefore, it remains to show that $\mathsf{Syn}_X(\prod_{t\in [1,T-1]} P_{2t+1}|_{K}\cdot R_{2t+1}|_{K}) = B_2|_K\oplus C_2|_K$. By a telescoping argument using \cref{eq:meta-check-x},
\begin{equation}
    \begin{aligned}
        \mathsf{Syn}_X( \prod_{t\in [1,T-1]} P_{2t+1}|_{K}\cdot R_{2t+1}|_{K})&=\sum_{t\in [1,T-1]} \mathsf{Syn}_X(P_{2t+1}|_{K}\cdot R_{2t+1}|_{K})\\& = \sum_{t\in [1,T-1]} \bigg(B_{2t}|_K\oplus B_{2t+2}|_K\oplus C_{2t}|_K\oplus C_{2t+2}|_K\bigg)\\& = B_2|_K\oplus C_2|_K \oplus B_{2t^*}|_K\oplus C_{2t^*}|_K,
    \end{aligned}
\end{equation}
   \noindent where we assume the cluster $K$ is entirely contained within layers $[1, t^*<T]$. However, note that the last layer of $K$ must be a ``code" layer, i.e. we must have $B_{2t^*}|_K\oplus C_{2t^*}|_K = 0$. As otherwise, by \cref{eq:meta-check-x} and the meta-check connectivity, we must have at least one connected node at some layer $>2t^*$, a contradiction to $K$ being contained within $[1, t^*<T]$. 
\end{proof}

\begin{lemma}
    [The Residual Error is Stochastic] \label{lemma:residual_stochastic} Let $S\subset \partial$ denote a subset of qubits on the boundary of size $|S|=a$. Then there exists $p_1\in (0, 1)$ s.t. $\forall p < p_1$,
    \begin{equation}
         \mathbb{P}_{P\leftarrow N(p)}\bigg[S\subseteq \mathsf{Supp}(\mathsf{Rep}_X(P))\bigg| \mathsf{CC}_X\bigg] \leq \frac{(p/p_1)^{a/2}}{1-(p/p_1)^{1/4}} \cdot \frac{1}{\mathbb{P}[\mathsf{CC}_X]},
    \end{equation}

    \noindent Where $p_1 = (8z)^{-4}$, and $ \mathbb{P}[\mathsf{CC}_X]$ is defined in \cref{lemma:boundaries_connected}. $\mathsf{Rep}_Z(P)$ is analogous. 
\end{lemma}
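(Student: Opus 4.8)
The plan is to run the same clustering-plus-percolation argument that proved \cref{lemma:boundaries_connected}, but now seeded at the target set $S\subseteq\partial$ rather than at both boundaries, and to use \cref{lemma:cc-on-boundary} to convert ``cluster size'' into ``number of true physical errors inside the cluster.'' First I would observe that, conditioned on $\mathsf{CC}_X$, every maximal connected component $K$ of marked vertices touches at most one of the two boundary codes; in particular, every component containing a boundary qubit on which $\mathsf{Rep}_X(P)=G$ is nonzero touches exactly one boundary. So if $S\subseteq\mathsf{Supp}(G)$, then $S$ is covered by a (disjoint) union of such one-sided components $K_1,\dots,K_r$, whose union $\mathcal K = K_1\cup\cdots\cup K_r$ has some total size $s\geq a$ (since each painted boundary qubit of $S$ is itself a marked vertex, so $|\mathcal K|\geq |S|=a$).

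Next I would lower-bound the number of physical errors inside $\mathcal K$ in terms of $s$. Here \cref{lemma:cc-on-boundary} gives, summed over the components, $a\le |G|_{\mathcal K}\le \sum_{t}|P_{2t+1}R_{2t+1}|_{\mathcal K}$, and combining this with the minimal-weight bound \cref{lemma:cc_weight} applied componentwise (exactly as in \cref{lemma:boundaries_connected}) yields both $|\mathcal K|=s \le 2\big(\sum_t|P_{2t+1}|_{\mathcal K}+\sum_t|B_{2t}|_{\mathcal K}\big)$ and $a\le 2\big(\sum_t|P_{2t+1}|_{\mathcal K}+\sum_t|B_{2t}|_{\mathcal K}\big)$. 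Hence the number of true physical $Z$-errors inside $\mathcal K$ is at least $\max(s,a)/2 \ge (s+a)/4$. This is the quantitative input that gives the decay rate: a fixed union-of-components pattern of total size $s$ containing the $a$ prescribed boundary nodes occurs with probability at most $p^{(s+a)/4}$ by the local stochastic property of $N(p)$.

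Then I would union bound. The number of ways to choose a union of connected components of total size $s$ that contains the fixed set $S$ of $a$ boundary nodes is at most $z^{s-a}4^s$ by \cref{fact:set_counting}, and the number of ways to choose which $\ge (s+a)/4$ of the $s$ nodes carry the error is at most $2^s$. So
\begin{equation}
\mathbb P\big[S\subseteq\mathsf{Supp}(\mathsf{Rep}_X(P)),\ \mathsf{CC}_X\big]\ \le\ \sum_{s\ge a} z^{s-a}\,4^s\,2^s\,p^{(s+a)/4}\ \le\ z^{-a}\sum_{s\ge a}(8z)^s p^{(s+a)/4}.
\end{equation}
With $p_1=(8z)^{-4}$ one has $(8z)p^{1/4}=(p/p_1)^{1/4}<1$, and writing the tail as a geometric series in $s$ starting at $s=a$ gives a bound of the form $\dfrac{(p/p_1)^{a/2}}{1-(p/p_1)^{1/4}}$ (the $z^{-a}p^{a/4}$ prefactor together with the $s=a$ term of the series produces $(p/p_1)^{a/2}$; I would track the constants carefully here). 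Dividing by $\mathbb P[\mathsf{CC}_X]$ converts this into the conditional probability, which is exactly the claimed inequality; the $\mathsf{Rep}_Z$ statement is identical with the roles of the $X$ and $Z$ syndrome adjacency graphs swapped.

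The main obstacle is the bookkeeping in the second step: one must be careful that the components $K_i$ covering $S$ are genuinely one-sided (this is where $\mathsf{CC}_X$ is used), that their union is handled by applying \cref{lemma:cc_weight} and \cref{lemma:cc-on-boundary} to each component and summing — rather than, illegitimately, to the union as a single component — and that the two lower bounds ($\ge s/2$ from minimal weight, $\ge a/2$ from the residual-error bound) are combined to get a bound growing in both $s$ and $a$, which is what makes the final series both converge and carry the desired $(p/p_1)^{a/2}$ decay in $a$. Everything else is a routine repetition of the counting in \cref{lemma:boundaries_connected}.
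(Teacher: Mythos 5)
Your proposal is correct and follows essentially the same route as the paper: decompose the marked vertices into clusters seeded at $\mathsf{Supp}(\mathsf{Rep}_X(P))$, use \cref{lemma:cc-on-boundary} (valid only for one-sided clusters, hence the conditioning on $\mathsf{CC}_X$) together with \cref{lemma:cc_weight} to lower-bound the number of true errors by a constant fraction of the cluster size, and then union bound via \cref{fact:set_counting}. The only discrepancy is bookkeeping you already flag: once the painted boundary nodes of $G$ are included in the cluster size $s$, the minimal-weight bound gives roughly $s\le 4\big(\sum_t|P_{2t+1}|+\sum_t|B_{2t}|\big)$ rather than your factor $2$, and the paper recovers the $(p/p_1)^{a/2}$ decay by instead combining ``errors $\ge s/4$'' with the summation range $s\ge 2a$ (which follows from $2a\le 2\sum_K|G|_K\le s$).
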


\begin{proof}
We wish to bound the probability that the residual error $G = \mathsf{Rep}_X(P)$ is supported on a subset $S$ of size $|S|= a$. Let us once again decompose the $X$ syndrome adjacency graph into clusters, and sum up the total size of clusters connected to (and including) $G$, let this size be $s$. Assuming none of these clusters span the two boundary codes, we have from \cref{lemma:cc-on-boundary},
    \begin{equation}
        2a \leq 2\cdot \sum_K |G|_K \leq \sum_K |G|_K +  \sum_{t\in [1, T-1]} \big|P_{2t+1}R_{2t+1}\big|_K \leq s.
    \end{equation}

    \noindent Moreover, since the weight of $R, C$ is minimal, they must have weight less than that of $P, B$ on each cluster (\cref{lemma:cc_weight}):
\begin{equation}
    \begin{aligned}
        s &= \sum_K \bigg( |G|_K+  \sum_{t\in [1, T-1]} \big|R_{2t+1}P_{2t+1}\big|_K +    \sum_{t\in [1, T]} \big|C_{2t}B_{2t}\big|_K \bigg)\\
        &\leq 4\cdot \sum_K    \bigg(\sum_{t\in [1,T-1]} \big|P_{2t+1}\big|_K+\sum_{t\in [1, T]} \big|B_{2t}\big|_K\bigg). 
    \end{aligned}
\end{equation}

    \noindent In other words, there must be at least $s/4$ real errors ``connected" to the residual error $G$. We can now apply a union bound over the cluster configurations:
\begin{equation}
    \begin{aligned}
    &\mathbb{P}_{P\leftarrow N(p)}\bigg[S\subseteq \mathsf{Supp}(\mathsf{Rep}_X(P)) \text{ and }\mathsf{CC}_X\bigg] \\ \leq&  \sum_{s\geq 2a} \bigg( \text{\# Clusters of Size }s \bigg) \cdot \bigg(\text{\# Error Patterns} \bigg) \cdot p^{s/4} \\\leq& \sum_{s\geq 2a} \big( 4^s\cdot z^{s-a}\big)\cdot \big(2^s\big)\cdot p^{s/4} \leq \big( 2^{6} z \sqrt{p}\big)^{a} \cdot \sum_{i\geq 0} (2^3zp^{1/4})^i \leq \frac{(p/p_1)^{a/2}}{1-(p/p_1)^{1/4}}.
    \end{aligned}
\end{equation}

\noindent So long as $p \leq p_1\equiv (8z)^{-4}$. In the above, we leverage \cref{fact:set_counting}. Bayes rule with $\mathbb{P}[\mathsf{CC}_X]$ concludes the proof. The $Z$ errors are analogous.
\end{proof}

It only remains to consider the logical stabilizers of $\bar{\Phi}$. As described in \cref{sec:stabilizers}, $\mathsf{Rep}_{\bar{X}} \in \{\mathbb{I},\bar{Z}_1\} $ quantifies the necessary logical correction operation on $\partial$, to ensure the resulting state is $\bar{\Phi}$. The lemma below stipulates that except with exponentially small probability, this correction operator is simply identity.

\begin{lemma}
    [There are no Logical Errors]\label{lemma:logical_errors} There exists $p_2\in (0, 1)$ s.t. $\forall p < p_2$, the probability the $X$ logical correction $\mathsf{Rep}_{\bar{X}}(P)$ is non-trivial is
    \begin{equation}
        \mathbb{P}_{P\leftarrow N(p)}\bigg[\mathsf{Rep}_{\bar{X}}(P)\neq \mathbb{I}\bigg] \leq 2\cdot n\cdot T \cdot \frac{(p/p_2)^{d/4}}{1-(p/p_2)^{1/4}} \cdot \frac{1}{\mathbb{P}[\mathsf{CC}_X]},
    \end{equation}

    \noindent where $p_2 =  (8z)^{-4}$, and $\mathbb{P}[\mathsf{CC}_X]$ was defined in \cref{lemma:boundaries_connected}. The $Z$ correction $\mathsf{Rep}_{\bar{Z}}(P)$ is analogous.
\end{lemma}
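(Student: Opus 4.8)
\textbf{Proof plan for \cref{lemma:logical_errors}.}

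The plan is to reduce the event $\{\mathsf{Rep}_{\bar X}(P)\neq \mathbb{I}\}$ to the existence of a \emph{large} cluster in the $X$ syndrome adjacency graph, one whose size is at least $\Omega(d)$, and then run the same union-bound/percolation computation as in \cref{lemma:residual_stochastic}. Recall that $\mathsf{Rep}_{\bar X}(P)\in\{\mathbb{I},\bar Z_1\}$ is the $Z$-logical correction chosen so that $\mathsf{Rep}(P)$ and $Z(\beta)P$ agree on the syndrome of the encoded $\bar X\otimes\bar X$ stabilizer; equivalently, it records whether the operator $\prod_{t}P_{2t+1}R_{2t+1}$ (the total mismatch of physical vs. inferred bulk $Z$-errors) anticommutes with the encoded $\bar X\otimes\bar X$ logical once we have already corrected for all the $X$-check syndromes via $\mathsf{Rep}_X(P)$. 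So the first step is to argue: if $\mathsf{Rep}_{\bar X}(P)\neq\mathbb{I}$, then the product $\big(\prod_{t\in[1,T-1]}P_{2t+1}R_{2t+1}\big)\cdot\mathsf{Rep}_X(P)$ is a $Z$-operator on $\partial$ with trivial $X$-syndrome (it is a cycle in $C_Z$) that is \emph{not} in $C_X^\perp$, hence is a nontrivial logical $Z$ representative and must have weight $\geq d$.

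Second, I would localize this heavy logical operator to clusters. Decompose the marked vertices of the $X$ syndrome adjacency graph into maximal connected components $K$. Conditioned on $\mathsf{CC}_X$ (no component spans both boundaries), every component is incident on at most one boundary code, so \cref{lemma:cc-on-boundary} applies on each $K$: the restriction of the residual logical representative to $K$ has weight bounded by $\sum_{t}|P_{2t+1}R_{2t+1}|_K$. Since the full representative has weight $\geq d$, the clusters meeting its support have total size $s\geq$ (some constant times) $d$ — repeating the counting in \cref{lemma:residual_stochastic}, $2d\leq s$ and also $s\leq 4\sum_K(\sum_t|P_{2t+1}|_K+\sum_t|B_{2t}|_K)$, so there are $\geq s/4\geq d/2$ genuine physical errors among these clusters. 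Pinning the cluster to pass through one of the $O(nT)$ vertices of the adjacency graph (this is where the $2nT$ prefactor comes from — a union bound over which ATG node the heavy cluster is anchored at), \cref{fact:set_counting} gives at most $4^s z^{s-1}$ such cluster shapes and at most $2^s$ ways to place the $s/4$ error locations, so
\begin{equation}
    \mathbb{P}\big[\mathsf{Rep}_{\bar X}(P)\neq\mathbb{I}\text{ and }\mathsf{CC}_X\big]\leq 2nT\sum_{s\geq d}4^s z^{s-1}2^s p^{s/4}\leq 2nT\cdot\frac{(p/p_2)^{d/4}}{1-(p/p_2)^{1/4}},
\end{equation}
for $p<p_2\equiv(8z)^{-4}$, and dividing by $\mathbb{P}[\mathsf{CC}_X]$ (Bayes) gives the conditional statement. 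The $\bar Z$ case is identical with the roles of $X$ and $Z$ layers swapped.

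The main obstacle is the first step: cleanly establishing that a nontrivial $\mathsf{Rep}_{\bar X}$ forces a weight-$\geq d$ \emph{logical} $Z$-representative that is moreover \emph{supported on the union of clusters meeting it}, so that \cref{lemma:cc-on-boundary}'s per-cluster bound can be summed. One has to be careful that the operator $\mathsf{Rep}_X(P)$ (chosen globally minimal) interacts correctly with the cluster decomposition — concretely, that $\mathsf{Rep}_X(P)$ restricted to a cluster $K$ is itself the minimal fix for the residual syndrome on $K$, which is what lets \cref{lemma:cc-on-boundary} be invoked cluster-by-cluster rather than only globally. The earlier lemmas were set up precisely so that minimality localizes, so this should go through, but it is the place where the bookkeeping about which syndrome bits live in $\bar K$ versus the interior of $K$ needs the same ``last layer must be a code layer'' argument used at the end of \cref{lemma:cc-on-boundary}; reusing that observation is what makes the telescoping identity close up with no leftover boundary terms.
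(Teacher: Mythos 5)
Your overall skeleton (reduce to a large cluster, count with \cref{fact:set_counting}, divide by $\mathbb{P}[\mathsf{CC}_X]$) matches the paper, but your reduction step is different from the paper's and, as written, has a gap. You argue that $\mathsf{Rep}_{\bar X}(P)\neq\mathbb{I}$ forces the \emph{global} operator $\big(\prod_t P_{2t+1}R_{2t+1}\big)\cdot\mathsf{Rep}_X(P)$ to be a nontrivial logical $Z$ representative of weight $\geq d$, and you then union-bound over ``the clusters meeting its support'' of \emph{total} size $s\gtrsim d$, anchored at a single vertex. That union bound is not justified: a weight-$d$ support spread over several disjoint components is not controlled by \cref{fact:set_counting} with $|T|=1$, since the components not containing the anchor can sit anywhere in the graph, so the count of such configurations is not $4^s z^{s-1}$. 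The paper avoids this entirely by localizing the \emph{logical syndrome} rather than the weight: by the telescoping identity from \cref{lemma:cc-on-boundary}, each maximal cluster $K$ individually carries an operator with trivial $X$-syndrome, hence either a $Z$-stabilizer (contributing $0$ to the $\bar X_1\bar X_{2T+1}$ syndrome) or a nontrivial logical. A nontrivial total logical syndrome therefore forces at least \emph{one single connected cluster} to support a nontrivial logical, and that one cluster alone must have size $\geq d$. This is exactly what makes the single-anchor union bound over connected clusters of size $s\geq d$ legitimate, and incidentally also shows that the ``many small clusters'' scenario cannot occur at all.

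You gesture at the right fix in your closing paragraph (``minimality localizes''), but the issue is not whether $\mathsf{Rep}_X(P)$ restricted to $K$ is the minimal fix on $K$ (that is needed for \cref{lemma:residual_stochastic} and already handled by the maximality of the components); it is that the $\mathbb{F}_2$-valued logical syndrome is additive across clusters, so nontriviality must be witnessed by a single cluster. Once you replace ``the global representative has weight $\geq d$'' with ``some single cluster's restricted operator is a nontrivial logical, hence that cluster has $|K|\geq d$,'' the rest of your computation (the $\geq s/4$ physical errors via \cref{lemma:cc_weight}, the $2nT$ anchoring prefactor, and the geometric sum with $p_2=(8z)^{-4}$) goes through exactly as in the paper.
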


\begin{proof}
    Suppose, after we apply $\mathsf{Rep}_{X}(P) = \mathsf{Rep}_{X}(P) _1\otimes \mathsf{Rep}_{X}(P)_{2T+1}$ to the boundary $\partial$ of the post-measurement state, we were able to perform a perfect (noiseless) syndrome measurement of an encoded $\bar{X}_1\otimes \bar{X}_{2T+1}$ stabilizer. By definition of the associated encoded logical stabilizers from  \cref{sec:stabilizers}, the resulting syndrome outcome $s_{\bar{X}}\in \{0, 1\}$ is
    
    \begin{equation}
        s_{\bar{X}} = \mathsf{Syn}_{\bar{X}_1\bar{X}_{2T+1}}\bigg(\mathsf{Rep}_{X}(P)_1 \otimes \mathsf{Rep}_{X}(P)_{2T+1} \bigotimes_{t\in [1, T-1]} P_{2t+1}\cdot R_{2t+1}\bigg).
    \end{equation}

    \noindent If we let the support of $\bar{X}$ be $\alpha_x\subset [n]$, this syndrome outcome is equal to the parity of the following operator $ Z(\gamma), \gamma\subset [n]$ on $\alpha_x$:
    \begin{equation}
            Z(\gamma) \equiv \mathsf{Rep}_{X}(P)_1 \cdot \mathsf{Rep}_{X}(P)_{2T+1}\cdot  \prod_{t\in [1, T-1]} P_{2t+1}\cdot R_{2t+1}, \quad s_{\bar{X}} = |\gamma\cap \alpha_x| \mod 2
    \end{equation}

    \noindent We claim that this parity is $0$ with high probability, such that no logical correction $\mathsf{Rep}_{\bar{X}}(P)_1$ is needed. To see this, we follow the procedure from the previous proofs, and decompose the X syndrome adjacency graph into clusters. Recall via the proof of \cref{lemma:cc-on-boundary} that each cluster defines a logical (or trivial) operator of $Q$, since

    \begin{equation}
        \mathsf{Syn}_X\bigg( \mathsf{Rep}_{X}(P)|_{K}\cdot \prod_{t\in [1, T-1]} P_{2t+1}|_{K}R_{2t+1}|_{K}\bigg) = 0.
    \end{equation}

    \noindent Moreover, if this product of operators on $K$ is an X stabilizer of Q, then by definition it has $0$ logical $\bar{X}_1\bar{X}_{2T+1}$ syndrome. The only remaining possibility lies in if the cluster $K$ defines a logical operator. If so, we must have that the size of $|K|=s\geq d$, the distance of the code. By a similar argument behind \cref{lemma:cc-on-boundary} and \cref{lemma:cc_weight}, if we condition on $K$ not spanning the boundaries (event $\mathsf{CC}_X$), then $K$ contains at least $\frac{s}{4}$ physical errors.

    Via the percolation-based union bound, the probability there exists any cluster $K$ with non-trivial $\mathsf{Syn}_{\bar{X}_1\bar{X}_{2T+1}}$ syndrome is then
\begin{equation}
    \begin{aligned}
        &\mathbb{P}_{P\leftarrow N(p)}\bigg[ \mathsf{Syn}_{\bar{X}_1\bar{X}_{2T+1}}\bigg(\mathsf{Rep}_{X}(P)_1 \otimes \mathsf{Rep}_{X}(P)_{2T+1}  \bigotimes_{t\in [1, T-1]} P_{2t+1}\cdot R_{2t+1}\bigg) = 1 \text{ and }\mathsf{CC}_X\bigg]\\
        &\leq  \sum_{s\geq d} \bigg( \text{\# Clusters of Size }s \bigg) \cdot \bigg(\text{\# Error Patterns} \bigg) \cdot p^{s/4} \\
        &\leq 2\dot n\cdot T \sum_{s\geq d} \big( 4^s\cdot z^{s}\big)\cdot \big(2^s\big)\cdot p^{s/4} \leq 2\dot n\cdot T  \cdot \frac{(p/p_2)^{d/4}}{1 - (p/p_2)^{1/4}}
    \end{aligned}
\end{equation}

\noindent where $p_2 = (8z)^{-4}$. In the above, we leveraged \cref{fact:set_counting}.

\end{proof}

We can put the above together and conclude the proof of \cref{theorem:main}.

\begin{proof}

    [of \cref{theorem:main}] By combining \cref{lemma:boundaries_connected} and \cref{lemma:logical_errors}, so long as the noise rate $p< p^* = \min(p_0, p_1, p_2)$, no connected component spans the two boundary codes and $\rec$ doesn't impart a logical error on the encoded state with probability all but 
    \begin{equation}
        \mathbb{P}[\text{Decoding Succeeds}] \geq 1- n\cdot T\cdot  \bigg(\frac{p}{p^*}\bigg)^{\Omega(\min(T, d))}.
    \end{equation}

    \noindent Conditioned on this event, via \cref{lemma:residual_stochastic}, the residual error on the boundaries is local stochastic with noise rate $(p/p_1)^{1/2}$.
    
\end{proof}

\section*{Acknowledgements}
We thank Zhiyang He (Sunny) for helpful discussions. We thank Jong Yeon Lee for helpful discussions and for informing us about an upcoming work with Isaac Kim on single-shot error correction and cluster states. This work was done in part while the authors were visiting the Simons Institute for the Theory of Computing. 
T.B.~acknowledges support by the National Science Foundation under Grant No. DGE 2146752.
Y.L.~is supported by DOE Grant No. DE-SC0024124, NSF Grant No. 2311733, and DOE Quantum Systems Accelerator.

\printbibliography

% \begin{appendices}
\appendix

\section{Factorization of the Stabilizers of the \texorpdfstring{$\ket{\atg}$}{|ATG>}}
\label{section:factorize}

We dedicate this section to the proofs that the stabilizers defined in \cref{sec:stabilizers} factorize into X Pauli operators on the Bulk $\mathcal{B}$. To simplify notation, let $N_u = \{ v: (u, v)\in E\}$ denote the neighborhood of a vertex $u\in V$ in the graph $G$. 

\begin{lemma}
    [The Meta-Checks $\mathcal{S}^0$]\label{lemma:meta-checks} For each even layer $t$ and $c\in [m_z]$, the associated $Z$ meta-check factorizes into a product of X Pauli operators on $\mathcal{B}$:

\begin{equation}
    G_{(c, t-1)}\cdot G_{(c, t+1)}\cdot \prod_{i\sim c} G_{(i, t)} = X_{(c, t-1)}\otimes X_{(c, t+1)}\bigotimes_{i\sim c} X_{(i, t)} 
\end{equation}

    \noindent And similarly for the $X$ meta-checks. 
\end{lemma}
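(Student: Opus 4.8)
The plan is to compute the product $G_{(c,t-1)}\cdot G_{(c,t+1)}\cdot \prod_{i\sim c} G_{(i,t)}$ directly from the definition $G_u = X_u \bigotimes_{v\in N_u} Z_v$ and track, qubit by qubit, which Pauli operators survive. First I would record the three types of graph-state stabilizer generators involved: for a $Z$-check vertex $(c,t\pm1)$ (which lives on an odd layer) the neighborhood in $G$ consists of the code qubits $(i,t\pm1)$ with $i\sim c$, together with the vertical copies $(c,t)$ and $(c,t\pm2)$; for a code vertex $(i,t)$ on an even layer, the neighborhood consists of the $X$-check qubits $(d,t)$ with $i\sim d$ in $H^X$, together with the vertical copies $(i,t-1)$ and $(i,t+1)$. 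Then the product decomposes into $X$ parts (one $X$ per generator, on its center vertex) and $Z$ parts (a $Z$ on each neighbor, with multiplicity).

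The heart of the argument is a cancellation count on the $Z$ part. I would organize the neighbors of the product into four groups and check each has even multiplicity, so that all $Z$'s square to the identity: (1) the code qubits $(i,t-1)$ and $(i,t+1)$ for $i\sim c$ — each appears once as a neighbor of $G_{(c,t-1)}$ (resp. $G_{(c,t+1)}$) and once as a neighbor of $G_{(i,t)}$ (since $(i,t)$ is vertically connected to $(i,t\pm1)$), hence multiplicity $2$; (2) the $X$-check qubits $(d,t)$ for $d\in[m_x]$ — these appear only as neighbors of the code vertices $(i,t)$, and the number of times is $\#\{i : i\sim c \text{ in } H^Z \text{ and } i\sim d \text{ in } H^X\} = \sum_i H^Z_{c,i}H^X_{d,i} = (H^Z (H^X)^\dagger)_{c,d} = 0 \bmod 2$ by the CSS condition; (3) the vertical copies $(c,t-2)$ and $(c,t+2)$ of the check $c$ — each appears exactly once (as a neighbor of $G_{(c,t-1)}$ resp. $G_{(c,t+1)}$), which is odd, so I must be careful here; (4) the center vertex $(c,t)$ itself appears as a neighbor of both $G_{(c,t-1)}$ and $G_{(c,t+1)}$, multiplicity $2$. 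Group (3) is the subtle point: re-examining the ATG definition, a $Z$-check qubit $(c,t')$ is vertically connected only to copies of the \emph{same check} on neighboring \emph{Z-layers}, i.e. layers $t'\pm2$; but wait — the stated ATG construction says only code qubits have vertical connections ("the only vertical connections in the edge set $E$ are between copies of the same code qubit"). So in fact a check vertex $(c,t-1)$ has neighbors \emph{only} the code qubits $(i,t-1)$, $i\sim c$, and group (3) and the relevant part of group (4)/(1) must be recomputed: $(i,t)$'s neighbors are the $X$-checks $(d,t)$ and the vertical code copies $(i,t-1),(i,t+1)$; $(c,t-1)$'s neighbors are exactly $\{(i,t-1): i\sim c\}$. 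Then group (1) still has multiplicity $2$, group (2) still cancels by CSS, and groups (3),(4) are simply empty — there is no $(c,t)$, $(c,t\pm2)$ contribution at all.

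Having established that every $Z$ factor occurs an even number of times and hence cancels, what remains is exactly the product of the $X$'s on the center vertices: $X_{(c,t-1)}\otimes X_{(c,t+1)} \bigotimes_{i\sim c} X_{(i,t)}$, which is the claimed expression. The $X$-meta-check case (odd layer $t$, $c\in[m_x]$) is identical after swapping the roles of $H^X$ and $H^Z$ and of even/odd layers, using $H^X (H^Z)^\dagger = 0$ for the analogue of group (2). The main obstacle I anticipate is purely bookkeeping: being scrupulous about exactly which vertical edges exist in $G$ (only code–code, per the construction), since a mis-stated neighborhood would spuriously create or kill $Z$ factors; once the neighborhoods are pinned down, the only nontrivial cancellation is the CSS orthogonality relation, and everything else is multiplicity-two cancellation of vertical and center edges.
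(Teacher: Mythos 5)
Your proposal is correct and follows essentially the same argument as the paper: a multiplicity count on the ``neighborhood of neighborhoods,'' with the vertical code-qubit copies $(i,t\pm1)$ cancelling in pairs and the in-layer $X$-check qubits $(d,t)$ cancelling via $\sum_i H^Z_{c,i}H^X_{d,i} = (H^Z(H^X)^\dagger)_{c,d}=0$. Your mid-proof self-correction about check qubits having no vertical edges lands on the right neighborhoods, after which the bookkeeping matches the paper's proof exactly.
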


\begin{proof}
    For any meta-check centered around the ``meta-vertex" $(c, t)$, the product of graph state stabilizers around it can be written in terms of its ``neighborhood of neighborhoods". The neighborhood $N_{(c, t)}$ of $(c, t)$ consists of the check qubits $(c, t\pm 1)$, and the code qubits $(i, t)$ s.t. $H^Z_{c, i} = 1$ (i.e. $i\sim c$). The ``neighborhood of neighborhoods" is then the multi-set (a set with repetitions) $\cup_{u\in N_{(c, t)}}\cup_{v\in N_u}$, which allows us to write the product of stabilizers above as:

\begin{equation}
    G_{(c, t-1)}\cdot G_{(c, t+1)}\cdot \prod_{i\sim c} G_{(i, t)} = X_{(c, t-1)}\otimes X_{(c, t+1)}\bigotimes_{i\sim c} X_{(i, t)} \prod_{u\in N_{(c, t)}} \prod_{v\in N_u} Z_v
\end{equation}

    We claim that each qubit in $\cup_{u\in N_{(c, t)}}\cup_{v\in N_u}$ appears an even number of times in this multi-set, such that the product of $Z$ Paulis above cancels. In fact, there are only two cases we need to consider: First, consider the code qubits $(i, t\pm 1)$ in layers above and below $t$. Each such node is in both $N_{(c, t\pm 1)}$ and in $N_{(i, t)}$, thus counted twice. 
    
    The challenge lies in the $X$ check qubits at layer $t$. For any $d\in [m_x]$, the check qubit $(d, t)$ lies in the ``neighborhood of neighborhoods" of $(c, t)$ through a code qubit $(i, t)$ iff $H^Z_{c, i}\cdot H^X_{d, i} = 1$. Thereby, the parity of the number of appearances is

    \begin{equation}
        \sum_i H^Z_{c, i}\cdot H^X_{d, i} = \big(H^Z (H^X)^T\big)_{c, d} = 0
    \end{equation}
    
    \noindent Since $(H^X, H^Z)$ defines a CSS code.
\end{proof}

The stabilizers in $\mathcal{S}^1$ arise in two types. Recall that $\ket{\Bar{\Phi}}$ consists of an encoded maximally entangled state accross two copies of the LDPC code $Q$. Then, within $\mathcal{S}^1$ there will be stabilizers of the individual boundary codes, and encoded stabilizers of the Bell state.

\begin{lemma}
[The Stabilizers of the Boundary Codes] Each $X$-type or $Z$-type stabilizer $S_\partial$ of the two boundary LDPC codes $Q$ can be written in the form $X(\alpha) \otimes S_\partial\in \mathcal{S}^1$ via a product of graph state stabilizers. Explicitly,
\begin{enumerate}
    \item For each $Z$-type stabilizer $c\in [m_z]$ of $Q$, there exists graph state stabilizers $\in \mathcal{S}^1$, satisfying the decomposition

    \begin{equation}
        G_{(c, 1)} = X_{(c, 1)} \bigotimes_{i\sim c} Z_{(i, 1)}, \quad G_{(c, 2T+1)} = X_{(c, 2T+1)} \bigotimes_{i\sim c} Z_{(i, 2T+1)}
    \end{equation}
    
    \item For each $X$-type stabilizer  $c\in [m_x]$ of $Q$, there exists products of graph state stabilizers $\in \mathcal{S}^1$, satisfying the decomposition
    \begin{gather}
        G_{(c, 2)} \cdot \prod_{i\sim c} G_{(i, 1)} =  X_{(c, 2)} \bigotimes_{i\sim c} X_{(i, 1)}, \quad  G_{(c, 2T)} \cdot \prod_{i\sim c} G_{(i, 2T+1)} =  X_{(c, 2T)} \bigotimes_{i\sim c} X_{(i, 2T+1)}.
    \end{gather}
\end{enumerate}
\end{lemma}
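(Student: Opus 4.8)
The plan is to reuse the mechanics of \cref{lemma:meta-checks}: expand each relevant graph-state stabilizer $G_u = X_u\bigotimes_{v\in N_u}Z_v$, multiply, and track which $Z$ Paulis survive, invoking the CSS orthogonality $H^X(H^Z)^T=0$ and the fact (from \cref{section:atg}) that check qubits carry \emph{no} vertical edges. The $Z$-type case is essentially immediate. For a $Z$-check qubit $c$ at layer $1$, the neighborhood $N_{(c,1)}$ consists exactly of the code qubits $(i,1)$ with $i\sim c$ in $H^Z$ (no vertical neighbor, since $c$ is a check), so $G_{(c,1)} = X_{(c,1)}\bigotimes_{i\sim c}Z_{(i,1)}$ with no cancellation needed; here $X_{(c,1)}$ lies on a bulk qubit and $\bigotimes_{i\sim c}Z_{(i,1)}$ is precisely the $Z$-stabilizer $c$ of the boundary code on layer $1$. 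The layer-$(2T+1)$ statement follows by the top–bottom reflection symmetry.

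For the $X$-type case I would expand $G_{(c,2)}\cdot\prod_{i\sim c}G_{(i,1)}$ for an $X$-check $c$ at layer $2$. The factor $G_{(c,2)} = X_{(c,2)}\bigotimes_{i\sim c}Z_{(i,2)}$ again has no vertical contribution, while each $G_{(i,1)} = X_{(i,1)}\cdot Z_{(i,2)}\cdot\bigotimes_{d\sim i}Z_{(d,1)}$, where $Z_{(i,2)}$ comes from the unique upward vertical edge and $d$ ranges over $Z$-checks incident to $i$. Two cancellations then occur: first, for each $i\sim c$ the Pauli $Z_{(i,2)}$ appears once from $G_{(c,2)}$ and once from $G_{(i,1)}$, so it drops; second, a layer-$1$ $Z$-check qubit $(d,1)$ appears once for each $i$ with $H^X_{c,i}=H^Z_{d,i}=1$, i.e. $\big(H^X(H^Z)^T\big)_{c,d}=0\bmod 2$ times, so it drops as well. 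What remains is
\begin{equation}
    G_{(c,2)}\cdot\prod_{i\sim c}G_{(i,1)} = X_{(c,2)}\bigotimes_{i\sim c}X_{(i,1)},
\end{equation}
with $X_{(c,2)}$ on a bulk qubit and $\bigotimes_{i\sim c}X_{(i,1)}$ the $X$-stabilizer $c$ of the layer-$1$ boundary code; the layer-$2T$ statement is again the reflection.

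I do not expect a genuine obstacle here — given \cref{lemma:meta-checks} this is routine. The one place to be careful is the parity bookkeeping: making sure the layer-$2$ code qubits cancel against $G_{(c,2)}$ rather than leaving stray $Z$'s, and applying the ``no vertical edges on checks'' convention in exactly the right spots so that no bulk $Z$ Pauli is inadvertently introduced. I would also spell out that every $Z$ Pauli being cancelled lives on a bulk qubit, so the surviving operator genuinely has the advertised form $X(\alpha)_{\mathcal B}\otimes S_\partial$ with $S_\partial$ a stabilizer generator of a boundary copy of $Q$, which is what places it in $\mathcal S^1$.
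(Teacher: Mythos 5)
Your proof is correct and follows essentially the same route as the paper's: case 1 is read off directly from the definition of the graph-state stabilizer, and case 2 is handled by the same parity bookkeeping as the meta-check lemma, with the layer-$2$ code qubits cancelling in pairs (one copy from the vertical edge of $(i,1)$, one from the Tanner edge of $(c,2)$) and the layer-$1$ $Z$-check qubits cancelling via $H^X(H^Z)^T=0$. Your write-up is in fact somewhat more explicit than the paper's about where each of the two copies of $Z_{(i,2)}$ comes from, but the argument is the same.
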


We remark that both of the decompositions above are simply applying an $X$ or $Z$ stabilizer of the LDPC code on either the first or last layer of $G$.  

\begin{proof}
    Case 1 is rather straightforward, as the graph state stabilizer $S_{(c, 1)}$ (resp, $2T+1$) is precisely applying a Z stabilizer of $Q$ on the associated boundary code. Case 2 is more subtle. However, we can follow the reasoning in \cref{lemma:meta-checks}: to show that the $Z$ Pauli's arising from the graph state stabilizers cancel, it suffices to show they are counted an even number of times from the neighborhoods of $(c, 2)$ and $(i, 1)$ where $H_{c, i}^X=1$. Indeed, the code qubits $(i, 2)$ are counted exactly twice (due to the vertical connections), and the Z-type check qubits $(d, 1)$ are counted an even number of times since $Q$ is a CSS code.  
\end{proof}

The encoded Bell pairs are stabilized by products $\Bar{X}_1\otimes \Bar{X}_{2T+1}, \Bar{Z}_1\otimes \Bar{Z}_{2T+1}$ of logical operators. The Lemma below shows how to construct these operators using graph state stabilizers in $G$, and only $X$ operators on the Bulk.

\begin{lemma}
[The Encoded Stabilizers of $\Phi$] Every encoded stabilizer $S$ of $\Phi$ can be written as a product of graph state stabilizers $S_\partial \otimes X(\alpha)\in \mathcal{S}^1$, for some subset $\alpha\subset \mathcal{B}$. Explicitly,
\begin{enumerate}
    \item Let $\alpha_x\subset [n]$ denote the support of a logical $\Bar{X}$ on $Q$. Then, 
    
    \begin{equation}
\prod_{\substack{i\in \alpha_x \\ t \text{ odd }}} G_{(i, t)} = \Bar{X}_1\otimes \Bar{X}_{2T+1}  \bigotimes_{\substack{i\in \alpha_x \\ t \text{ odd }\in \mathcal{B}}}X_{(i, t)}
\end{equation}
    
    \item Let $\alpha_z\subset [n]$ denote the support of a logical $\Bar{Z}$ on $Q$. Then, 

    \begin{equation}
        \prod_{\substack{i\in \alpha_z \\ t \text{ even }}} G_{(i, t)} = \Bar{Z}_1\otimes \Bar{Z}_{2T+1} \bigotimes_{\substack{i\in \alpha_z \\ t \text{ even }}} X_{(i, t)}
    \end{equation}
\end{enumerate}
\end{lemma}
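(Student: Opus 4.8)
The plan is to prove the two explicit identities (1) and (2) directly, by expanding the product of graph-state stabilizers $G_u = X_u\bigotimes_{v\in N_u}Z_v$ and tracking the cancellations of the $Z$ Paulis, in the same spirit as the proof of \cref{lemma:meta-checks}. The ``every encoded stabilizer of $\Phi$'' statement then follows immediately: the stabilizer group of $\bar\Phi$ is generated by the $X$- and $Z$-type stabilizers of the two boundary code blocks (handled by the previous lemma) together with the transversal logicals $\bar X_1\bar X_{2T+1}$ and $\bar Z_1\bar Z_{2T+1}$ (one such pair per logical qubit), so multiplying together the corresponding graph-state-stabilizer products expresses any product of these generators in the form $S_\partial\otimes X(\alpha)$ with $\alpha\subset\mathcal{B}$.

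For Case 1, I would expand $\prod_{i\in\alpha_x,\,t\text{ odd}}G_{(i,t)}$. A code qubit $(i,t)$ on an odd layer is adjacent to the $Z$-check qubits $(c,t)$ with $i\sim c$ and to its vertical neighbors $(i,t\pm1)$, so $G_{(i,t)}$ contributes $X_{(i,t)}$, a $Z$ on each such $(c,t)$, and $Z$'s on $(i,t-1),(i,t+1)$. Summing over $i\in\alpha_x$, the $Z$-check qubit $(c,t)$ is acted on $\sum_{i\in\alpha_x}H^Z_{c,i}=(H^Z\alpha_x)_c$ times, which is $0\bmod 2$ precisely because a logical $\bar X$ representative lies in $\ker H^Z$ (equivalently, $\bar X$ commutes with every $Z$-check); hence those $Z$'s cancel. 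The vertical $Z$'s land on code qubits of even layers, and every even layer $t'\in\{2,\dots,2T\}$ receives a $Z_{(i,t')}$ from both $G_{(i,t'-1)}$ and $G_{(i,t'+1)}$ — two odd layers in range — so they cancel in pairs, leaving nothing on the (odd) boundary layers. What survives is $\prod_{i\in\alpha_x,\,t\text{ odd}}X_{(i,t)}$; peeling off $t=1$ and $t=2T+1$ yields $\bar X_1\otimes\bar X_{2T+1}$ times an $X$ operator supported on the interior odd layers, all of which lie in $\mathcal{B}$.

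Case 2 is the mirror image with the check parities swapped, but I would be careful about one genuine asymmetry. Expanding $\prod_{i\in\alpha_z,\,t\text{ even}}G_{(i,t)}$: each $G_{(i,t)}$ with $t$ even contributes $X_{(i,t)}$, a $Z$ on each adjacent $X$-check $(c,t)$, and vertical $Z$'s on $(i,t\pm1)$. The $X$-check qubit $(c,t)$ is acted on $(H^X\alpha_z)_c=0\bmod 2$ times since a logical $\bar Z$ representative lies in $\ker H^X$, so those cancel. The vertical $Z$'s now land on \emph{odd} layers: an interior odd layer $t'\in\{3,\dots,2T-1\}$ is hit twice (from $G_{(i,t'-1)}$ and $G_{(i,t'+1)}$) and cancels, but layer $1$ is hit only once (from $G_{(i,2)}$) and layer $2T+1$ only once (from $G_{(i,2T)}$), so a $Z_{(i,1)}$ and $Z_{(i,2T+1)}$ survive for every $i\in\alpha_z$, assembling to $\bar Z_1\otimes\bar Z_{2T+1}$; the surviving $X$'s sit on even layers, which are entirely in $\mathcal{B}$. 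This gives identity (2).

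The computation is essentially bookkeeping — the only substantive ingredients are the CSS/commutation facts $\alpha_x\in\ker H^Z$ and $\alpha_z\in\ker H^X$ (cancelling the same-layer check $Z$'s) and the telescoping of the vertical $Z$'s. The one point that warrants a second look is the boundary asymmetry just noted: in Case 1 the boundary action $\bar X_1\otimes\bar X_{2T+1}$ comes from the diagonal $X_u$ terms on the odd boundary layers, whereas in Case 2 the boundary action $\bar Z_1\otimes\bar Z_{2T+1}$ comes from the uncancelled vertical edges incident on the odd boundary layers; in both cases one should verify that the leftover Bulk operator is purely $X$-type, so that the resulting stabilizer indeed lies in $\mathcal{S}^1$.
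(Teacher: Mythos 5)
Your proposal is correct and follows essentially the same route as the paper's proof: expand the graph-state stabilizers, cancel the same-layer check-qubit $Z$'s via the commutation conditions $\alpha_x\in\ker H^Z$, $\alpha_z\in\ker H^X$, and cancel the vertical $Z$'s by the parity of the alternating layers. Your explicit handling of the boundary asymmetry in Case 2 (where the $\bar Z_1\otimes\bar Z_{2T+1}$ arises from the \emph{uncancelled} vertical edges at layers $1$ and $2T+1$, rather than from the diagonal $X$ terms as in Case 1) is a point the paper's terser argument glosses over, and you resolve it correctly.
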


\begin{proof}
    The argument follows the strategy of the previous two lemmas. The alternating even/odd layers implies that the $Z$ operations arising due to the vertical connections are each counted twice, thus cancelling. The connections to ancilla qubits on each layer cancel due to the CSS condition, as logical operators by definition have support $\alpha_x$ (resp $\alpha_z$) with even overlap with the support of $Z$ (resp $X$) parity checks. 
\end{proof}

\section{From Repeated Measurements to SSSP}
\label{section:mbqc}

A fault-tolerant $\ket{\bar{+}}$ state initialization protocol via $T$ rounds of repeated measurements is comprised of:

    \begin{enumerate}
        \item The initialization of $n$ qubits in the product state $\ket{+}^{\otimes n}$.
        \item $T$ rounds of $Z$ syndrome measurements. For each $Z$ check measurement $c\in [m_z]$, we prepare a $\ket{+}_c$ state and apply $\mathsf{CZ}_{i, c}$ gates between $c$ and each $i\in [n]$ such that $c\sim i$. The ancilla is measured in the $\ket{\pm}$ basis. These alternate with

        \item $T$ rounds of $X$ syndrome measurements. These begin with Hadamard gates $H^{\otimes n}$ on all the code qubits, then for each $X$ check measurement $c\in [m_x]$ we prepare a $\ket{+}_c$ state and apply $\mathsf{CZ}_{i, c}$ gates if $c\sim i$, $i\in [n]$. The ancillas are measured in the $\ket{\pm}$ basis, and finally $H^{\otimes n}$ is applied to the code qubits. 
    \end{enumerate}

We model noise during this process by alternating layers of local stochastic noise $E = (E_1, \cdots, E_T)$ with the syndrome measurement circuit $W$:

\begin{equation}
    \Tilde{W}_E = E_T W_T^X W_T^ZE_{T-1}W_{T-1}^XW_{T-1}^Z\cdots E_1 W_1^X W^Z_1 
\end{equation}

Note that since the syndrome measurement circuits are constant depth Clifford circuits, this is without loss of generality for local stochastic, gate-level Pauli noise. Informally, this process is said to be fault-tolerant if, after collecting the syndrome measurements, one is able to correct the resulting state into the logical code state $\ket{\bar{0}}$, up to a local stochastic error $E'$ drawn from some noise rate $q\in [0, 1]$. Put precisely, 

\begin{definition}\label{def:ft-initialization}
    Fix a noise rate $p\in [0, 1]$. A CSS LDPC code admits a \emph{FT state initialization protocol via $T$ rounds of repeated measurements} against local stochastic noise at rate $p\in [0, 1]$ if there exists a pair of functions $\pframe, \res$: 
\begin{gather}
    \pframe:  \{0, 1\}^{T\cdot (m_x+m_z)}\rightarrow \mathsf{Pauli}([n])\\
    \res: \mathsf{Pauli}(([n]\cup [m_x]\cup [m_z])\times [T]) \rightarrow  \mathsf{Pauli}([n])
\end{gather}
\noindent satisfying, with probability all but $\delta$ over the choice of local stochastic error $E\in \mathsf{Pauli}(([n]\cup [m_x]\cup [m_z])\times [T])$,
\begin{equation}
   (\pframe(s)_{[n]} \otimes \ketbra{\pm_s}) \Tilde{W}_E \ket{0}^{\otimes n}\ket{0}^{\otimes T(m_x+m_z)}    = \gamma_s \cdot \res(E) \ket{\bar{0}} \otimes \ket{\pm_s}.
\end{equation}

\noindent where $\res(E)$ is local stochastic with noise parameter $q$.
\end{definition}

In this section, we show how this encoding process, via measurement-based quantum computation, naturally gives rise to the alternating tanner graph state. What is more, we show that an efficient, fault-tolerant repeated measurement protocol can be transformed (in a black-box manner) into an efficient single-shot state preparation protocol.

\subsection{The Structure of the Cluster State}

To map the repeated measurement circuit to a constant depth quantum circuit + measurements, we follow two rules \cite{jozsa2005introductionmeasurementbasedquantum}. We note that it suffices to specify how to implement Hadamard gates, and two-qubit $\mathsf{CZ}$ gates, as these are the gates needed above. Moreover, since these gates are Clifford gates, the single-qubit measurements required to implement the repeated measurement circuit in MBQC are non-adaptive. 

First, for each Hadamard gate acting on a code qubit $i\in [n]$ in the repeated measurements circuit, we create a copy of $i$ and apply a $\mathsf{CZ}$ gate between the two copies. By measuring the first copy in the $\ket{\pm}$ basis, this serves to ``teleport" the qubit state forward (\cref{fig:hadamard}) \cite{jozsa2005introductionmeasurementbasedquantum} (up to a Pauli $X$). The resulting Pauli $X^m$ acting on the qubit depends on the measurement outcome $m\in\{0, 1\}$. In doing so for each of $n$ code qubits, this will give rise to the layered structure of the alternating tanner graph state.

\begin{figure}
    \centering
    \includegraphics[width=0.4\linewidth]{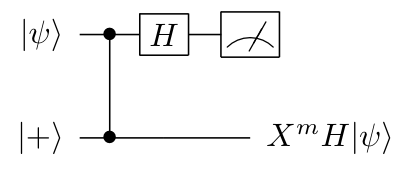}
    \caption{Hadamard gate teleportation via MBQC}
    \label{fig:hadamard}
\end{figure}

Next, for each new ancilla ``syndrome bit" introduced in the repeated measurement circuit, we create a new  qubit initialized in the $\ket{+}$ state in the cluster state; each $\mathsf{CZ}_{i, c}$ gate then acts between the code and ancilla qubits at the corresponding layer/iteration of repeated measurements. The resulting cluster state is then comprised of alternating layers, where there are $n$ code qubits and either $m_x$ or $m_z$ ancilla qubits on each layer. Their connectivity alternates the $X$ and $Z$ tanner graphs within each plane, as well as vertical connections between copies of the same qubit. 

\subsection{The Fault Tolerance Reduction}

It only remains to show now that this transformation preserves fault-tolerance, that is, given a fault-tolerant initialization protocol one can construct a fault-tolerant single-shot state preparation protocol. 

\begin{theorem}\label{theorem:mbqc_mapping}
   Suppose a family $Q$ of CSS quantum LDPC codes admits a \emph{fault-tolerant state initialization protocol via $T$ rounds of repeated measurements} for any local stochastic noise rate $p< p^*\in (0, 1)$, which fails with probability at most $\delta$. Then, it admits a \emph{Single-Shot State Preparation} procedure with $O(T)$ space overhead, for any local stochastic noise rate $p < p^*$, which also fails with probability at most $\delta$.
\end{theorem}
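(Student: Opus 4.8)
The plan is to realize \cref{def:sssp} using the cluster state $\ket{\atg}$ and the measurement pattern constructed above, and to read off the recovery and repair functions directly from the frame and residual functions $\pframe,\res$ of the given repeated-measurement protocol. I would take $W$ to be the constant-depth circuit preparing $\ket{\atg}$ from $\ket{0}^V$ (Hadamards followed by $\mathsf{CZ}$'s on the edges of the bounded-degree cluster graph), $\partial$ the unmeasured output code qubits, and $\mathcal{B}=V\setminus\partial$. In the noiseless case, measuring $\mathcal{B}$ in the Hadamard basis runs the MBQC simulation of $\tilde{W}_{E=0}$, so the outcome string $s$ determines, by the standard affine MBQC read-off rules, both a syndrome history $\sigma(s)\in\{0,1\}^{T(m_x+m_z)}$ (the outcomes the $T$ rounds of syndrome measurement would have produced) and a byproduct Pauli $F(s)\in\mathsf{Pauli}(\partial)$, efficiently computable since all teleported gates are Clifford. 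Then I would set $\rec(s):=F(s)\cdot\pframe(\sigma(s))$, which by construction leaves the post-measurement state equal to $\gamma_s\ket{\pm_s}\otimes\ket{\bar{0}}$ (identifying $\partial\cong[n]$).

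The core of the argument will be a \emph{noise correspondence}. Because $W$ has already been applied and the bulk is measured in the $X$ basis, an error $P\in\mathsf{Pauli}(\mathcal{B})$ acting just before the measurement splits into an $X$-part, which commutes with the measurement and acts trivially on $\partial$ (hence is invisible), and a $Z$-part, which simply flips the corresponding bits of $s$. Under the MBQC read-off, flipping $s_q$ at a check-ancilla qubit $q$ is exactly a syndrome-measurement error for that check in the corresponding round, while flipping $s_q$ at a bulk code qubit $q$ is equivalent to inserting a Pauli data error on that code qubit in the corresponding round of $\tilde{W}_E$. Thus I would exhibit a fixed relabeling $\phi$ — a bijection between the bulk qubits $\mathcal{B}$ and the fault locations $([n]\cup[m_x]\cup[m_z])\times[T]$ of the repeated-measurement noise model — under which running the noisy cluster-state protocol on $P$ is equivalent, as a joint distribution over the syndrome history and the post-measurement state on $\partial$, to running $\tilde{W}_E$ with $E=\phi(P)$ and then applying $\pframe(\sigma)$. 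Since $\phi$ is a bijection and only the $Z$-part of $P$ matters, $P\leftarrow\mathcal{N}(p)$ induces $E=\phi(P)\leftarrow\mathcal{N}(p)$ at the \emph{same} rate; this is the one place the exactness of Clifford MBQC is used, and it is why no threshold is lost.

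With the correspondence in hand, I would set $\rep(P):=\res(\phi(P))$. The equivalence above says the noisy single-shot protocol on input $P\in\mathsf{Pauli}(\mathcal{B})$ outputs exactly $\gamma_s\ket{\pm_s}\otimes\res(\phi(P))\ket{\bar{0}}$, which is the required identity of \cref{def:sssp}. The hypothesis on the repeated-measurement protocol then yields, with probability at least $1-\delta$ over $\phi(P)\leftarrow\mathcal{N}(p)$ — equivalently over $P\leftarrow\mathcal{N}(p)$ — that $\res(\phi(P))=\rep(P)$ is local stochastic of rate $q$, which is precisely the single-shot fault-tolerance guarantee. The space overhead is $O(T)$: unrolling the time direction of the $O(n)$-qubit repeated-measurement circuit (using $m_x,m_z=O(n)$ for an LDPC code) over $T$ rounds yields a cluster on $O(nT)$ qubits, and $W$ has constant depth because the cluster graph has bounded degree.

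The hard part will be making the relabeling $\phi$ precise: one must carry the MBQC byproduct bookkeeping — the signs by which later measurement outcomes depend on earlier ones, which for Clifford gates form a fixed linear function of the measurement record — through the entire $T$-round simulation, and verify that $\phi$ is genuinely a location bijection, so that the local-stochasticity inequality $\mathbb{P}[S\subseteq\mathsf{supp}]\le p^{|S|}$ transfers with no degradation of the rate. Once that is in place, everything else — defining $\rec$ from $F$ and $\pframe$, defining $\rep$ from $\res$, and inheriting the failure probability $\delta$ and residual rate $q$ — is routine bookkeeping.
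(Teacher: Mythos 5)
Your proposal is correct and follows essentially the same route as the paper: restrict to the $Z$-part of the bulk error (the $X$-part is invisible under the Hadamard-basis measurement), read off the syndrome history from the measured string after subtracting the cumulative teleportation byproduct shifts, identify bulk bit-flips on check ancillas with syndrome-measurement errors and bulk bit-flips on code layers with the $X$/$Z$ components of the round-$i$ data error, and then invoke $\pframe,\res$ as a black box to define $\rec,\rep$. The only cosmetic difference is that you package the error correspondence as an explicit relabeling $\phi$, whereas the paper verifies the same fact by checking that the constructed noisy syndromes satisfy the recurrence of the repeated-measurement protocol.
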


Suppose we run the constant-depth MBQC circuit described above and measure all the qubits in the $\ket{\pm}$ basis, except for the last (the $2T$-th) layer of $n$ code qubits. Let the resulting measured string be $z = z_{\mathsf{Code}}, z_{\mathsf{X}}, z_{\mathsf{Z}}\in \{0, 1\}^{(2T-1)\cdot n}\times \{0, 1\}^{T\cdot m_x}\times \{0, 1\}^{T\cdot m_z}$. At a high level, given a CSS code admitting a state preparation protocol under a pair of frame and residual error functions $(\pframe, \res)$ and the string $z$, our goal will be to construct a set of syndromes $s_{\mathsf{X}}, s_{\mathsf{Z}}\in \{0, 1\}^{T\cdot m_x}\times \{0, 1\}^{T\cdot m_z}$ drawn from the same distribution as in the repeated measurements state preparation protocol.

Before presenting a formal proof, let us first consider the system above in the absence of noise. Note that the first layer of Z syndrome measurements in both the repeated measurements protocol, and the cluster state, are identically distributed, and thus we can pick $z_{\mathsf{Z}}^1=s^1_{\mathsf{Z}}$. Let the resulting state be $\ket{\psi}$. The subsequent vertical $\mathsf{CZ}$ operations serve to teleport $\ket{\psi}$ further, up to a Pauli $X$ or $Z$ correction, dependent of the parity of the layer (\cref{fig:hadamard}):
\begin{equation}
    \label{eq:pauli-shift}
    \ket{\psi} \rightarrow X^{z_{\mathsf{Code}}^1}\ket{\psi}\rightarrow Z^{z_{\mathsf{Code}}^2}X^{z_{\mathsf{Code}}^1}\ket{\psi} \rightarrow Z^{z_{\mathsf{Code}}^2}X^{z_{\mathsf{Code}}^1+z_{\mathsf{Code}}^3}\ket{\psi} \cdots \rightarrow Z^{\sum_i z_{\mathsf{Code}}^{2i}}X^{\sum_i z_{\mathsf{Code}}^{2i-1}}\ket{\psi}
\end{equation}

We account for the effect of these Pauli shifts by subtracting their cumulative syndromes from $z_{\mathsf{Z}}^j, z_{\mathsf{X}}^j$, allowing us to infer $s_{\mathsf{Z}}^j, s_{\mathsf{X}}^j$. Note that in the absence of noise, this simply results in $s_{\mathsf{Z}}^j = s_{\mathsf{Z}}^1$ and $s_{\mathsf{X}}^j=s_{\mathsf{X}}^1=0$. Next, we incorporate the effect of noise. 

\begin{proof}

[of \cref{{theorem:mbqc_mapping}}] Following \cref{def:sssp}, let $F \in \mathsf{Pauli}(([n]^2\cup [m_x]\cup [m_z])\times [T])$ denote a local stochastic error of rate $p$ on the cluster state. Since all but the last $[n]$ code qubits are measured in the $\ket{\pm}$ basis, we can again restrict our attention to Pauli $Z$ errors, which in turn are equivalent to bit-flip errors on the measured string $z=z_{\mathsf{Code}}, z_{\mathsf{X}}, z_{\mathsf{Z}}$. Let $f =f_{\mathsf{Code}}, f_{\mathsf{X}}, f_{\mathsf{Z}} \in \{0, 1\}^{(2T-1)\cdot n}\times \{0, 1\}^{T\cdot m_x}\times \{0, 1\}^{T\cdot m_z}$ denote the bit-flip error on those first $2T-1$ layers. 

Following \cref{eq:pauli-shift} and the discussion above, there is a natural way to construct the (noisy) syndromes and the Pauli error required to invoke the frame and residual error functions $(\pframe, \res)$. First, we identify $s_{\mathsf{X}}, s_{\mathsf{Z}}\in \{0, 1\}^{T\cdot m_x}\times \{0, 1\}^{T\cdot m_z}$ by associating $s_{\mathsf{Z}}^1 = z_{\mathsf{Z}}^1$, and then subtracting the syndrome associated to the Pauli shifts:
\begin{gather}
    s_{\mathsf{Z}}^j = z_{\mathsf{Z}}^j + \mathsf{Syn}_Z \big(\sum_i^j z_{\mathsf{Code}}^{2i-1}\big) \\ 
     s_{\mathsf{X}}^j = z_{\mathsf{X}}^j + \mathsf{Syn}_X \big(\sum_i^j z_{\mathsf{Code}}^{2i}\big)
\end{gather}

\noindent and moreover, associate $f_{\mathsf{X}}, f_{\mathsf{Z}}$ with syndrome measurement errors in the repeated measurement circuit, and define $X^{f_{\mathsf{Code}}^{2i-1}}Z^{f_{\mathsf{Code}}^{2i}}$ to be the Pauli error on the $[n]$ code qubits during the $i$th iteration $W_i$ of the repeated measurement circuit. This suffices since the measured strings $z$ satisfy a given redundancy, even in the presence of noise:
\begin{gather}
    \big(z_{\mathsf{Z}}^j\oplus f_{\mathsf{Z}}^j\big) \oplus \big(z_{\mathsf{Z}}^1\oplus f_{\mathsf{Z}}^1\big) = \mathsf{Syn}_Z \bigg(\sum_i^j z_{\mathsf{Code}}^{2i-1} \oplus f_{\mathsf{Code}}^{2i-1} \bigg) \\
    \big(z_{\mathsf{X}}^j\oplus f_{\mathsf{X}}^j\big) \oplus \big(z_{\mathsf{X}}^2\oplus f_{\mathsf{X}}^2\big) = \mathsf{Syn}_X \bigg(\sum_i^j z_{\mathsf{Code}}^{2i} \oplus f_{\mathsf{Code}}^{2i} \bigg),
\end{gather}

\noindent and therefore the constructed (noisy) syndromes satisfy the same recurrence as if they were drawn from the repeated measurements circuit:

\begin{equation}
    s_{\mathsf{Z}}^j \oplus f_{\mathsf{Z}}^j =  s_{\mathsf{Z}}^{j-1} \oplus f_{\mathsf{Z}}^{j-1} \oplus \mathsf{Syn}_Z\bigg(f_{\mathsf{Code}}^{2j-1}\bigg).
\end{equation}
    
\end{proof}

\section{Single-shot preparation of GHZ States}
\label{section:ghz}

We dedicate this section to the fault-tolerant single-shot preparation of $\ghz$ states. Informally, we prove that if we measure the $\ket{\atg}$ in all but a collection of odd layers $j\cdot \Delta+1, j\in \{0, \cdots k-1\}$ and apply a feed-forward Pauli correction, then the resulting state is an encoded $\ghz$ state $\ket{\overline{\ghz}_m}\propto \ket{\bar{0}}^{\otimes m}+\ket{\bar{1}}^{\otimes m}$. In the presence of local stochastic noise below a threshold rate (determined by the connectivity of the qLDPC code), we can further return the encoded $\ghz$ state up to a residual local stochastic error on each of its ``surfaces". 

\begin{theorem}\label{theorem:ghz}
    Fix integers $T, m\geq 1$. Let $Q$ be any $[[n, k, d]]$ CSS code, which is $\ell$-LDPC. Then, $Q$ admits a single-shot state preparation procedure for the encoded state $\ket{\overline{\ghz^{\otimes k}_m}}$ using a circuit $W$ on $O(\ell\cdot n\cdot T)$ qubits and of depth $O(\ell^2)$, satisfying the following guarantees:

    \begin{enumerate}
        \item There exists a constant $p^*(\ell) \in (0, 1)$, s.t. if $W$ is subject to local stochastic noise of rate $p<p^*$, the state preparation procedure succeeds with probability at least  $1 - n\cdot T\cdot \big(\frac{p}{p^*}\big)^{\Omega(\min(T\cdot m^{-1}, d))}$.
        \item Conditioned on this event, the resulting state is subject to local stochastic noise of rate $O(p^{1/2})$.
    \end{enumerate}
\end{theorem}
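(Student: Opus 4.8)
}
The plan is to rerun the proof of \cref{theorem:main} with the two boundary surfaces (ATG layers $1$ and $2T+1$) replaced by $m$ equally spaced surfaces. Assume for convenience that $(m-1)\mid T$ and set $\Delta:=2T/(m-1)$, an even integer, so the layers $\ell_j:=j\Delta+1$ for $j=0,\dots,m-1$ are odd and $\ell_{m-1}=2T+1$. In the single-shot procedure we leave unmeasured exactly the $n$ code qubits on layers $\ell_0,\dots,\ell_{m-1}$ (the $m$ ``surfaces''), and measure every other qubit in the Hadamard basis; for $m=2$ this is exactly \cref{theorem:main}. In the noiseless case this leaves an encoded state across $m$ copies of $Q$, and $\rec$ is chosen to steer it to $\ket{\overline{\ghz_m^{\otimes k}}}$; the content of the theorem is that noise only adds residual local stochastic error.

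\paragraph{Stabilizers, $\rec$, and $\rep$.}
First I would identify the relevant stabilizers, as in \cref{sec:stabilizers} and \cref{section:factorize}. The meta-check group $\mc S^0$ is unchanged. For $\mc S^1$ we keep the $X$- and $Z$-type stabilizers of each of the $m$ surface codes (each a product of graph-state stabilizers supported near the layer $\ell_j$), and we add the encoded $\ghz$ stabilizers: for each logical index $l$ with $\overline X_l$-representative $\alpha_x$ and $\overline Z_l$-representative $\alpha_z$, the product $\prod_{i\in\alpha_x,\,t\text{ odd}}G_{(i,t)}$ factors---by the CSS condition, exactly as in \cref{section:factorize}---into $\overline X_l^{\otimes m}$ on the surfaces times a bulk $X$-Pauli, and for $1\le j\le m-1$ the product $\prod_{i\in\alpha_z,\,t\text{ even},\,\ell_j<t<\ell_{j+1}}G_{(i,t)}$ factors into $\overline Z_{l,j}\,\overline Z_{l,j+1}$ times a bulk $X$-Pauli. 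Over all $l$ these $mk$ generators stabilize $\ket{\overline{\ghz_m^{\otimes k}}}$. The map $\rec$ is then defined verbatim as in \cref{fig:rec}, and the computation of \cref{section:rec/rep} showing that the residual syndromes depend only on the inferred bulk error $\beta$ carries over unchanged. The residual error factors as $\rep(P)=\prod_{j=1}^m\rep_X^{(j)}(P)\,\rep_Z^{(j)}(P)\cdot\rep_{\overline X}(P)\cdot\prod_{j=1}^{m-1}\rep_{\overline Z_j\overline Z_{j+1}}(P)$, with $\rep_X^{(j)},\rep_Z^{(j)}$ the minimum-weight residual corrections of surface $j$'s $X$/$Z$ stabilizers, $\rep_{\overline X}$ correcting the $\overline X^{\otimes m}$ syndromes (a product over $l$ of operators in $\{\mathbb I,\overline Z_{l,1}\}$), and $\prod_j\rep_{\overline Z_j\overline Z_{j+1}}$ a minimum-weight logical-$\overline X$ correction across the surfaces (a decoding of the logical repetition code $\overline Z_{l,1}=\dots=\overline Z_{l,m}$).

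\paragraph{The clustering argument.}
I would then set up the $X$- and $Z$-syndrome adjacency graphs of \cref{section:FT-clustering} with the only change that they now carry $m$ boundary components, one per surface, consecutive surfaces sitting at graph distance $\delta:=\Delta/2=T/(m-1)=\Theta(T/m)$. \cref{lemma:cc_weight} and \cref{fact:set_counting} are used as is. The new clustering condition $\mathsf{CC}$ is that no connected component of marked vertices spans two surfaces; since the graph is layered, such a component must in particular touch two \emph{consecutive} surfaces and hence have size $\ge\delta$, so the argument of \cref{lemma:boundaries_connected} gives $\mathbb P[\neg\mathsf{CC}]\le m\cdot n\cdot T\cdot(p/p_0)^{\Omega(T/m)}$ (the extra factor $m$ is a union bound over consecutive pairs). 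Conditioned on $\mathsf{CC}$, every component meets at most one surface, so the telescoping identity of \cref{lemma:cc-on-boundary} applies per surface and \cref{lemma:residual_stochastic} makes each $\rep_X^{(j)},\rep_Z^{(j)}$ local stochastic of rate $O(p^{1/2})$. For the logical pieces, each component defines a logical-or-stabilizer operator (of the surface code it meets, or, if it meets none, of $Q$), hence it can flip an $\overline X^{\otimes m}$ or $\overline Z_j\overline Z_{j+1}$ syndrome only when it has size $\ge d$, and the union bound of \cref{lemma:logical_errors} bounds the probability of any logical error by $nT\cdot(p/p_2)^{\Omega(d)}$ (absorbing an $O(mk)$ factor). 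With $p^*=\min(p_0,p_1,p_2)$, combining gives success probability $\ge 1-nT\,(p/p^*)^{\Omega(\min(T/m,\,d))}$ and residual local stochastic noise $O(p^{1/2})$ on each surface, which is \cref{theorem:ghz}.

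\paragraph{Main obstacle.}
As in \cite{Gottesman2013FaulttolerantQC} and \cref{section:FT-clustering}, the crux is relating a component's size to the number of physical errors it contains, and for \cref{theorem:ghz} this must be redone against the short-range GHZ stabilizers $\overline Z_j\overline Z_{j+1}$, which span only the length-$\Delta$ slab between two consecutive surfaces rather than the whole ATG. One has to (i) telescope \cref{eq:meta-check-z} (resp.\ \cref{eq:meta-check-x}) restricted to a single slab and check that a component not reaching either bounding surface of that slab carries trivial residual syndrome there, so that the geometric decay $p^{\Omega(s)}$ is correctly charged to genuine physical errors inside the component; and (ii) verify that the distance controlling the ``non-spanning'' event is exactly $\delta=\Theta(T/m)$ and that a minimal spanning component (size $\delta$, meeting each surface in $O(1)$ qubits) cannot by itself create a GHZ-logical error---so the $(T/m)$-term and the $d$-term in the exponent really are independent. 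A routine but necessary side point is the bookkeeping of $T$ and $\Delta$ ensuring all $m$ surfaces land on odd layers so that the factorizations of \cref{section:factorize} still hold.
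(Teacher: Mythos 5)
Your plan is correct and follows essentially the same route as the paper's proof: the same measurement pattern (unmeasured code qubits on $m$ equally spaced odd layers), the same identification of the post-measurement stabilizers (per-surface checks, global $\overline{X}^{\otimes m}$, pairwise $\overline{Z}_j\overline{Z}_{j+1}$ from slab products of graph-state stabilizers), and the same modification of the clustering argument in which the clustering condition becomes ``no component meets two surfaces,'' yielding the $\Omega(\min(T/m,d))$ exponent. You also correctly flag the two subtleties the paper itself highlights --- the per-slab telescoping of the meta-check identities and the need to attach boundary code nodes for the unmeasured odd layers to the $Z$ syndrome adjacency graph --- so no further comparison is needed.
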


In the subsequent subsections, we show that a careful measurement pattern on $\ket{\atg}$ gives rise to the stabilizers of $\ket{\overline{\ghz}_m}$ in the post-measurement state. We present a natural information-theoretic scheme to infer a Pauli frame correction, following \cref{section:rec/rep}. Then, we sketch a proof of fault-tolerance by highlighting the main modifications to the clustering proof of \cref{section:FT-clustering}.

\subsection{The Measurement Pattern}

Starting from the $\ket{\atg}$, suppose we measure (in the $\ket{\pm}$ basis) all qubits except for those $n$ code qubits located on \textit{odd} layers $i_1, i_2, \cdots i_k \in [2T+1]$, where $i_1=1$ and $i_k = 2T+1$. We first claim that the resulting state is, up to a Pauli correction, an encoded $k$-partite GHZ state. To do so, it suffices to prove the stabilizers of the post-measurement state correspond to those of an encoded GHZ state.

\begin{claim}
    The state after measuring $\ket{\atg}$ (in the $\ket{\pm}$ basis), on all but the $n$ code qubits on odd layers $i_1, \cdots i_k\in 2T+1$, is stabilized by

    \begin{enumerate}
    \item The $X$ and $Z$ parity checks of the code, on each layer $i_j$, $j\in [k]$.
    \item A global logical tensor product of $X$'s: $\otimes_j^k \bar{X}_{i_j}$, and logical pairwise tensor product of $Z$'s: $\bar{Z}_{i_j}\otimes \bar{Z}_{i_{j+1}}$, $j\in [k]$.
\end{enumerate}

\noindent and therefore is a logical GHZ state, up to a Pauli correction. 

\end{claim}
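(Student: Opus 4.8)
# Proof Proposal for the Claim

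The plan is to identify stabilizers of the post-measurement state by taking products of the graph-state generators $G_u$ that, after restricting to the unmeasured code qubits on layers $i_1,\dots,i_k$, factorize as a Pauli supported only on those surfaces, tensored with $X$-Pauli's on the measured bulk qubits (which then contribute only a global sign once the measurement outcomes are fixed). This is exactly the mechanism used in \cref{sec:stabilizers}, and the claim amounts to showing that the $\mathcal{S}^1$-type stabilizers, suitably re-derived for this measurement pattern, furnish precisely the generators listed.

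First I would handle the parity checks of the code on each surface layer $i_j$. For the $Z$-type checks this is immediate: the single graph-state generator $G_{(c,i_j)} = X_{(c,i_j)} \bigotimes_{i\sim c} Z_{(i,i_j)}$ already acts as a $Z$-stabilizer on the code qubits of layer $i_j$ and as a single $X$ on the (measured) check qubit $(c,i_j)$, so it survives the $X$-basis measurement as a stabilizer of the post-measurement state, contributing the $Z$-check $\prod_{i\sim c} Z_{(i,i_j)}$ on surface $i_j$. For the $X$-type checks, layer $i_j$ is odd, so the adjacent $X$-Tanner-graph copies sit on layers $i_j\pm 1$; I would take the product $G_{(c,i_j+1)}\cdot\prod_{i\sim c}G_{(i,i_j)}$ (and symmetrically with $i_j-1$), which by the same CSS-cancellation computation as in \cref{lemma:meta-checks} factorizes into $X_{(c,i_j+1)}\bigotimes_{i\sim c}X_{(i,i_j)}$ times $X$-Pauli's on bulk code qubits of layers $i_j\pm 1$ — all measured — yielding the $X$-check $\prod_{i\sim c}X_{(i,i_j)}$ on surface $i_j$. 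Here I must check that all the qubits that are forced to be measured (the $(c,i_j\pm1)$ check qubits and the bulk code qubits brought in) indeed lie in the measured set, which holds because only code qubits on the layers $i_1,\dots,i_k$ are left unmeasured.

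Next, the logical stabilizers. For the global $\bigotimes_{j=1}^k \bar X_{i_j}$: letting $\alpha_x\subset[n]$ be the support of a logical $\bar X$, I would take $\prod_{i\in\alpha_x,\, t\text{ odd}} G_{(i,t)}$ over \emph{all} odd layers $t\in[2T+1]$. As in the encoded-logical computation of \cref{sec:stabilizers}, the vertical $Z$'s cancel because consecutive odd layers are separated by an even layer (so each bulk copy is hit twice), and the $Z$-check ancillas on each odd layer cancel by the CSS condition $|\alpha_x\cap\mathsf{Supp}(c)|$ even. What remains is $\bar X$ on every odd layer, i.e. $\bigotimes_{t\text{ odd}}\bar X_t = \big(\bigotimes_{j=1}^k\bar X_{i_j}\big)\otimes\big(\text{$X$-Pauli on odd-layer bulk code qubits}\big)$, and the bulk factor is supported on measured qubits. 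For the pairwise $\bar Z_{i_j}\otimes\bar Z_{i_{j+1}}$: with $\alpha_z$ the support of a logical $\bar Z$, I would take $\prod_{i\in\alpha_z,\, t\text{ even},\, i_j< t< i_{j+1}} G_{(i,t)}$, i.e. the product of $\bar Z$-supported generators over the even layers strictly \emph{between} two consecutive surface layers. The telescoping is the same as in \cref{lemma:cc-on-boundary}: vertical $Z$'s cancel in the interior, leaving $\bar Z$ on the two bounding odd layers $i_j, i_{j+1}$ (not on the others, since those are not in this sub-range), the $X$-ancilla cancellation is by the CSS condition, and the residual $X$-Pauli's sit on the bulk even-layer code qubits, all measured.

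Finally I would count: the $X$- and $Z$-checks give $m_x+m_z$ independent stabilizers per surface over $k$ surfaces, the global $\bar X$ gives $k$ more (one per logical qubit — here I am implicitly doing this for each of the $k$ logical qubits of $Q$), and the pairwise $\bar Z$'s give $k(k-1)$ more per logical qubit, totalling $k\big(m_x+m_z\big) + k + k(k-1) = k n + \dots$ wait — more carefully, per logical index one gets $1 + (k-1) = k$ logical generators, and $k(m_x+m_z)$ check generators, summing to $k n$ which matches the dimension count for a stabilizer state on $kn$ qubits; and the listed operators are exactly the stabilizer group of $\ket{\overline{\ghz_k^{\otimes k_{\mathrm{log}}}}}$ across the $k$ surfaces. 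The main obstacle I anticipate is the bookkeeping of \emph{which} graph-state generators to multiply so that the leftover bulk support lands entirely on measured qubits — in particular verifying that no unmeasured surface-layer code qubit receives an unwanted $Z$, which requires the alternating-parity structure to place the cancelling vertical pair symmetrically around each surface; and ensuring the pairwise-$Z$ construction genuinely produces $\bar Z_{i_j}\otimes\bar Z_{i_{j+1}}$ rather than a longer product of surface $\bar Z$'s, which is why the product must be taken over even layers strictly between consecutive surfaces rather than over all even layers.
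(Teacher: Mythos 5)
Your overall strategy is exactly the paper's: exhibit each listed operator as a product of graph-state generators whose residual support on the measured qubits is purely Pauli-$X$, and your constructions for the $Z$ checks, the global $\bigotimes_j \bar X_{i_j}$, and the pairwise $\bar Z_{i_j}\otimes\bar Z_{i_{j+1}}$ (taken over even layers strictly between consecutive surfaces) coincide with the paper's and are correct.

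The one step that fails is your construction of the $X$ parity checks on an \emph{interior} surface $i_j$ (i.e. $1<i_j<2T+1$). The one-sided product $G_{(c,i_j+1)}\cdot\prod_{i\sim c}G_{(i,i_j)}$ does not factorize as you claim: each generator $G_{(i,i_j)}$ contributes $Z_{(i,i_j+1)}$ \emph{and} $Z_{(i,i_j-1)}$ from the two vertical edges, and $G_{(c,i_j+1)}$ only cancels the former. The leftover is a product of $Z$ (not $X$) Paulis on the code qubits of layer $i_j-1$; since those qubits are measured in the $X$ basis, the operator anticommutes with the measurement projector and does not descend to a stabilizer of the post-measurement state. (The one-sided product is legitimate only at the physical boundaries $i_1=1$ and $i_k=2T+1$, which is why the boundary-code lemma in \cref{section:factorize} could use it.) The fix is the symmetric, two-sided product
\begin{equation}
G_{(c,i_j-1)}\cdot G_{(c,i_j+1)}\cdot\prod_{l\sim c}G_{(l,i_j)} \;=\; X_{(c,i_j-1)}\otimes X_{(c,i_j+1)}\bigotimes_{l\sim c}X_{(l,i_j)},
\end{equation}
i.e. precisely the meta-check identity of \cref{lemma:meta-checks} applied with the central layer $i_j$ now unmeasured, which is what the paper uses. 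You gesture at this in your closing caveat (``the cancelling vertical pair symmetrically around each surface''), but the formula you actually wrote is the asymmetric one. Your final dimension count also briefly miscounts the pairwise $\bar Z$ generators as $k(k-1)$ per logical qubit before self-correcting to $k-1$; the corrected tally $k(m_x+m_z)+k\cdot(n-m_x-m_z)=kn$ is right, though the paper does not carry out this count.
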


The first set of stabilizers above ensures that each layer of the post-measurement state is encoded into the LDPC code; the second ensures that the resulting state is stabilized by the encoded stabilizers of the $k$-partite GHZ state. 

\begin{proof}
    Following \cref{sec:stabilizers}, it suffices to show that each of the operators above can be generated through a product of graph state stabilizers, up to a Pauli $X$ operator on the measured qubits. The proof below is a simple generalization of the Bell state case. We first show that the stabilizers of the code at each layer $i_j$ can be written as a product of graph state stabilizers ``in-plane":
    \begin{equation}
        G_{(c, i_j)} = X_{(c, i_j)} \bigotimes_{l\sim c} Z_{(l, i_j)} \text{ and } 
    G_{(c, i_j-1)}\cdot G_{(c, i_j+1)}\cdot \prod_{l\sim c} G_{(l, i_j)} = X_{(c, i_j-1)}\otimes X_{(c, i_j+1)}\bigotimes_{l\sim c} X_{(l, i_j)} 
\end{equation}

This ensures the set (1) of stabilizers above. Next, in analog to the encoded Bell state of \cref{sec:stabilizers}, the tensor product of logical X operators can be generated via

    \begin{equation}
\prod_{\substack{i\in \alpha_x \\ t \text{ odd }}} G_{(i, t)} = \bigotimes_{\substack{i\in \alpha_x \\ t \text{ odd }}}X_{(i, t)} = \bigotimes_j^k \bar{X}_{i_j} \bigotimes_{\substack{i\in \alpha_x \\ t \text{ odd }\setminus \{i_1, \cdots, i_k\}}}X_{(i, t)},
\end{equation}

 \noindent where $\alpha_x\subset [n]$ denotes the support of a logical $\Bar{X}$ on the code $Q$. Finally, the pairwise tensor product of logical $Z$ operators, can be generated via 

    \begin{equation}
        \prod_{\substack{i\in \alpha_z \\ i_{j}<t<i_{j+1} \text{ even }}} G_{(i, t)} = \Bar{Z}_{i_j}\otimes \Bar{Z}_{i_{j+1}} \bigotimes_{\substack{i\in \alpha_z \\ i_{j}<t<i_{j+1} \text{ even }}} X_{(i, t)}, j\in [k]
    \end{equation}

     \noindent where $\alpha_z\subset [n]$ denotes the support of a logical $\Bar{Z}$ on $Q$. 
\end{proof}

\subsection{Fault Tolerance}

To infer the relevant Pauli frame correction, we proceed similarly to \cref{section:rec/rep}. We first leverage the meta-check stabilizers to infer the errors on the bulk using a minimum weight decoder. Using this inferred error, we make a guess for the residual syndrome on the layers $i_1, \cdots, i_k$. To conclude, we infer a Pauli frame from this residual syndrome. The fault tolerance of this scheme follows the proof of \cref{section:FT-clustering}, however, with a subtle modification to the connectivity of the $Z$ syndrome adjacency graph. The $X$ graph is unmodified. For conciseness, we focus on this modification to the $Z$ graph, and simply highlight the main changes to the rest of the proof. 

The subtlety lies in that under the definition of the Z syndrome adjacency graph in \cref{section:FT-clustering} (based on even code layers), the $i_j$-th layer (which is odd) of qubits isn't contained in the graph. Recall that is the residual $Z$ parity check syndromes lying ``in-plane" at layer $i_j$ which define the residual $X$ error $G_{i_j}$ on that instance of the code. While in the Bell-state case a clear solution was to connect the two boundary $\partial$ layers to the endpoints of the graph, in the GHZ case an elegant solution isn't immediately apparent.\footnote{Furthermore, it is apriori unclear how the correct clustering argument should connect clusters to the residual error.} To adjust, we add to the Z syndrome adjacency graph $k$ copies of $n$ code vertices, connected to the Z checks lying at layers $\{i_j\}_{i\in [k]}$ (See \cref{fig:ghzsag}). We refer to the boundary of this new graph to be the $k\cdot n$ additional connected vertices, and the Bulk to be the remaining vertices.

Following \cref{section:FT-clustering}, we proceed by considering the Bulk of this syndrome adjacency graph, and claim that the connected components of clusters of mismatch locations don't span between the un-measured surfaces (cf. \cref{lemma:boundaries_connected}). So long as the distance between any two adjacent layers is $|i_j - i_{j+1}| \geq \Delta$, and the noise rate is sufficiently below threshold $p\leq p_0$, each cluster is constrained to lie on at most one boundary layer $i_j$ with probability all but exponentially suppressed in $(p/p_0)^{\Delta}$. 

Conditioned on this event, one can now relate the weight of the residual error $G_{i_j}\big|_K$ restricted to a cluster $K$ to the weight of $K$ on either side of $i_j$, and in turn (via \cref{lemma:cc_weight}), the total weight of $K$. Via the percolation argument, the residual errors are local stochastic (following \cref{lemma:residual_stochastic}) and can't lead to logical $\bar{Z}_{i_j}\otimes \bar{Z}_{i_{j+1}}$ errors (following \cref{lemma:logical_errors}). 

\begin{figure}
    \centering
    \includegraphics[width=0.5\linewidth]{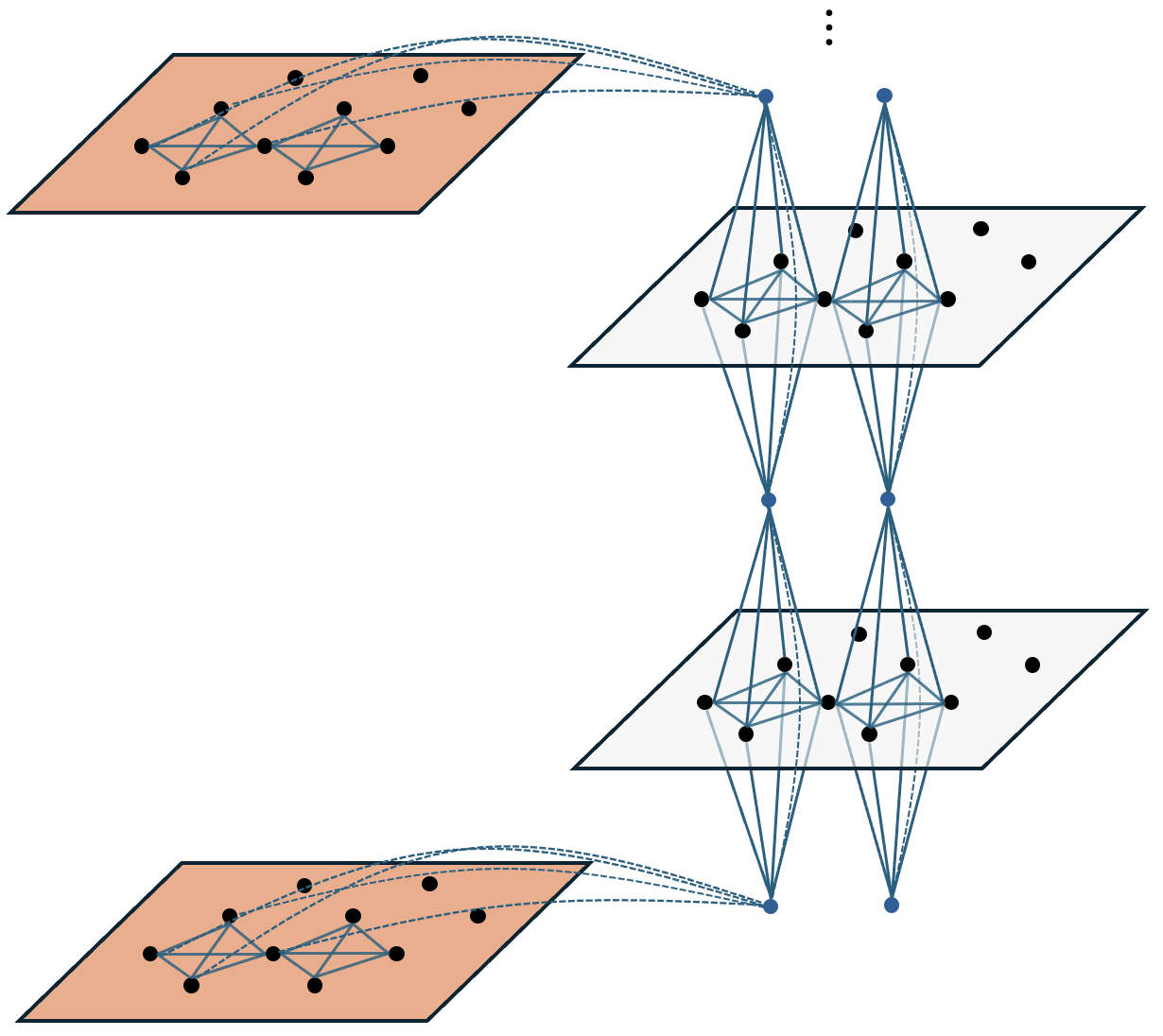}
    \caption{The modified Z syndrome adjacency graph, where the unmeasured odd code layers are connected horizontally to the Z checks on said layers. Pictured are unmeasured nodes corresponding to layers $1, 5$ of the $\atg$ (orange) and measured layers $2, 4$. Not all connections are pictured.}
    \label{fig:ghzsag}
\end{figure}

\section{Preparing CSS Stabilizer States}
\label{sec:cssstateprep}

Suppose $S_Z\subset \{\mathbb{I}, Z\}^{\otimes m}$ and $S_X\subset \{\mathbb{I}, X\}^{\otimes m}$ define a complete commuting set of Pauli operators on $m$ qubits, and assume the weight and degree of each stabilizer in $S_Z$ is upper bounded by a constant $\ell$. Let $m_X = |S_X|$ and $m_Z = |S_Z|$, such that $m = m_X+m_Z$. Let $\ket{\psi}$ be the unique $m$ qubit state which is the $+1$ eigenstate of $S_Z, S_X$. We require a short fact on the preparation of $\ket{\psi}$ using a low depth circuit of $\cnot$ gates with ancillas.

\begin{fact}\label{fact:steanes}
    $\ket{\psi}$ can be prepared by a constant depth circuit with a single round of measurement and Pauli feedforward as follows. 

    \begin{enumerate}
        \item Initialize $m$ data qubits in the $\ket{+}^{\otimes m}$ state, and $m_Z$ ancillas initialized in the $\ket{0}^{\otimes m_Z}$ state.
        \item Apply a $\cnot$ gate between data qubit $i\in [m]$ and ancilla qubit $c\in [m_Z]$ if the $c$-th stabilizer in $S_Z$ is supported on $i$ (the data qubit is control and ancilla qubit is target).
        \item Measure all the ancilla qubits in the $Z$ basis.
        \item Apply a Pauli-$X$ correction based on the measurement outcome. 
    \end{enumerate}
\end{fact}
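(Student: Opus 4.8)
The plan is to track the stabilizer group of the full register through the circuit, exactly as in the textbook preparation of the logical $\ket{\overline{+}}$ state of a CSS code by measuring its $Z$-checks. Write the $c$-th generator of $S_Z$ as $g_c = \prod_{i \sim c} Z_i$, where ``$i \sim c$'' means $g_c$ is supported on data qubit $i$, and let $H_Z \in \mathbb{F}_2^{m_Z \times m}$ be the matrix whose $c$-th row is the indicator vector of $\mathrm{supp}(g_c)$. After Step~1 the joint state of data and ancillas is stabilized by $\{X_i : i \in [m]\} \cup \{Z_c : c \in [m_Z]\}$. Now observe that all the $\cnot$ gates of Step~2 pairwise commute: each gate has its control on a data qubit and its target on an ancilla, so two of them can share only a control or only a target, and in either case they commute. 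Hence $U := \prod_{i \sim c}\cnot_{i \to c}$ is well defined independent of ordering, and its action by conjugation is easy to compute: $U X_i U^\dagger = X_i \prod_{c \sim i} X_c$ and $U Z_c U^\dagger = Z_c\, g_c$. The one structural input I need is that every $g \in S_X$ is left \emph{unchanged} by $U$: writing $g = \prod_{i \in T} X_i$, conjugation gives $U g U^\dagger = g \cdot \prod_c X_c^{\,|T \cap \mathrm{supp}(g_c)|}$, and each exponent $|T \cap \mathrm{supp}(g_c)|$ is even because the $X$-type operator $g$ and the $Z$-type operator $g_c$ commute by hypothesis. So after Step~2 the state is stabilized by the group generated by $\{U X_i U^\dagger\}_i$, $\{Z_c\, g_c\}_c$, and $S_X$.

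In Step~3 we measure each ancilla $c$ in the $Z$ basis with outcome $s_c \in \{0,1\}$. Every element of $S_X$ acts trivially on the ancillas, so it commutes with all the measurements and survives. For each $c$ the operator $Z_c\, g_c$ commutes with the measured observable $Z_c$, so it too survives, and together with the newly acquired stabilizer $(-1)^{s_c} Z_c$ it gives $(-1)^{s_c} g_c$ acting on the data register alone. Thus the post-measurement data state is the unique state stabilized by $S_X \cup \{(-1)^{s_c} g_c : c \in [m_Z]\}$ --- unique because $S_X \cup S_Z$ is a complete independent commuting set, hence supplies $m$ independent generators on $m$ qubits. In Step~4 I choose $v \in \mathbb{F}_2^m$ solving $H_Z v = s$; such $v$ exists because independence of $S_Z$ makes $H_Z$ full row rank. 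Applying $X(v) = \prod_{i : v_i = 1} X_i$ commutes with every element of $S_X$ and anticommutes with $g_c$ precisely when $(H_Z v)_c = s_c = 1$, so it carries each $(-1)^{s_c} g_c$ to $g_c$ while fixing $S_X$. The corrected state is therefore stabilized by $S_X \cup S_Z$, i.e.\ it equals $\ket{\psi}$.

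Finally I would check the depth. Since each generator of $S_Z$ has weight at most $\ell$ (each ancilla touches at most $\ell$ data qubits) and $S_Z$ has degree at most $\ell$ (each data qubit touches at most $\ell$ ancillas), the bipartite Tanner graph of $S_Z$ has maximum degree at most $\ell$; an edge colouring with at most $\ell+1$ colours (Vizing) partitions the $\cnot$ gates of Step~2 into $O(\ell)$ layers, each a matching, so the circuit has depth $O(\ell) = O(1)$, followed by one round of single-qubit measurements and one classically computed Pauli update. No sparsity of $S_X$ is used, so $S_X$ may be arbitrarily dense. The only genuinely non-mechanical ingredients are the identity $U g U^\dagger = g$ for $g \in S_X$ (which is where the $X$/$Z$ commutation of $S_X$ with $S_Z$ enters) and the solvability of $H_Z v = s$ (which is where independence of $S_Z$ enters); everything else is routine stabilizer bookkeeping.
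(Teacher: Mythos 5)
Your proof is correct and is a full stabilizer-formalism verification of exactly the argument the paper sketches in one sentence (the initial state already satisfies $S_X$, and the circuit effects a measurement of the $S_Z$ generators followed by an $X$-type sign correction). The details you supply — commutation of the $\cnot$ layer, invariance of $S_X$ under conjugation via the even-overlap condition, solvability of $H_Z v = s$, and the $O(\ell)$-depth scheduling — all check out.
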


Roughly speaking, this works because the initial state $\ket{+}^{\otimes m}$ already satisfies all $X$ stabilizers, and subsequent operations can be viewed as measuring all $Z$ stabilizers and correcting them. We prove single-shot preparation of the encoded state $\ket{\psi}$ using a fault tolerant version of \cref{fact:steanes}: step 1 uses single-shot state preparation via the ATG, step 2 uses transversal CNOT gates of CSS codes. Crucially, the Pauli frame in step 1 is propagated through the CNOT gates and corrected at the end.

\begin{theorem}
    Fix an integer $T$. There exists a constant threshold noise rate $p^*$, below which there exists a fault-tolerant, single shot logical state preparation procedure for the encoded $\ket{\psi}^{\otimes k}$ state ($\ket{\psi}$ is a $m$-qubit state) in $m$ copies of an arbitrary $[[n, k, d]]$ CSS LDPC code, using $O(n\cdot T)$ ancilla qubits, with success probability at least $1-m\cdot n\cdot T\cdot 2^{-\Omega(\min(d, T))}$
\end{theorem}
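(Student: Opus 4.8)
The plan is to carry out a fault-tolerant, encoded version of \cref{fact:steanes}, using the single-shot preparation primitive of \cref{theorem:main} (for the encoded $\ket{\overline 0}$ and $\ket{\overline +}$ states) in place of raw product-state initialization, and transversal CNOT gates of the CSS code in place of the physical CNOTs. Concretely: (i) run $m$ independent copies of the single-shot primitive to prepare the $m$ ``data'' code blocks in $\ket{\overline +}^{\otimes k}$, and $m_Z$ further copies to prepare $m_Z$ ``ancilla'' code blocks in $\ket{\overline 0}^{\otimes k}$; (ii) apply a transversal CNOT from data block $i$ to ancilla block $c$ whenever the $c$-th generator of $S_Z$ acts on qubit $i$ --- since each generator of $S_Z$ has weight and degree at most $\ell$, the incidence graph has maximum degree $\ell$, so by edge-colouring this is a circuit of $O(\ell)$ layers of depth-one transversal CNOTs; (iii) measure every physical qubit of every ancilla block in the $Z$ basis, run a minimum-weight decoder on each ancilla block to recover its $k$ logical outcomes, and feed these forward to apply the logical-$X$ correction to the data blocks exactly as in step 4 of \cref{fact:steanes}. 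All the classical feedforward --- the frames $\rec$ produced in step (i), their conjugates through the Clifford circuit of step (ii), and the ancilla-decoding corrections of step (iii) --- is aggregated and applied once, at the very end, to the $m$ data blocks, so the overall procedure remains single-shot; it uses $O(m)$ copies of the ATG-based primitive and has constant depth.

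In the noiseless case, step (i) produces $\ket{\overline +}^{\otimes k}$ and $\ket{\overline 0}^{\otimes k}$ exactly; the transversal CNOTs implement a logical CNOT in each of the $k$ logical slots; and steps (ii)--(iii) are precisely the Steane measurement of the generators of $S_Z$ performed logically, which projects the $m$ logical qubits of each slot from $\ket{+}^{\otimes m}$ onto the joint $+1$ eigenspace of $S_Z$ and $S_X$, namely $\ket{\psi}$. Hence the output is $\ket{\overline{\psi}}^{\otimes k}$, matching \cref{fact:steanes}.

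For fault tolerance, \cref{theorem:main} guarantees that each of the $m+m_Z=O(m)$ invocations of the primitive succeeds --- outputting the target encoded state up to a \emph{local stochastic} residual of rate $O(p^{1/2})$ --- except with probability $nT\,(p/p^*)^{\Omega(\min(T,d))}$. Conditioned on this, the state entering step (ii) carries local stochastic noise of rate $O(p^{1/2})$ on the union of all blocks; the depth-$O(\ell)$ transversal-CNOT circuit together with its gate noise (rate $p$) keeps the noise local stochastic at a rate $q=O(p^{c})$ for a constant $c>0$, since constant-depth Clifford circuits map local stochastic noise to local stochastic noise, and a noisy $Z$-basis measurement is an ideal one preceded by an extra $X$-error. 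Each ancilla block is then a logical $Z$-basis state subject to a local stochastic $X$-error of rate $q$; for $q$ below the LDPC error-correction threshold, the minimum-weight decoder returns the correct logical outcomes except with probability $\poly(n)\,2^{-\Omega(d)}$ --- this is the single-layer, static case of the clustering argument of \cite{Gottesman2013FaulttolerantQC, Kovalev2012FaultTO} (equivalently \cref{section:FT-clustering} with one layer and no syndrome errors). Union-bounding the $O(m)$ primitive invocations and the $m_Z\le m$ ancilla decodings gives overall success probability $1-O(mnT)\cdot 2^{-\Omega(\min(d,T))}$, and on this event the aggregated feedforward reduces to a Pauli correction on the data blocks, leaving them in $\ket{\overline{\psi}}^{\otimes k}$ up to a local stochastic residual of rate $O(q)$, as required by \cref{def:sssp}.

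The main obstacle --- and the reason the constant-weight/degree hypothesis on $S_Z$ is needed --- is controlling error propagation once we leave the clean ATG analysis of \cref{section:FT-clustering}: one must verify (a) the transversal-CNOT fan-in is bounded, so the intermediate noise stays local stochastic; (b) the destructive $Z$-measurements of the ancilla blocks are decodable below threshold; and (c) the three sources of feedforward assemble into one consistent Pauli correction on the data blocks. Since every logical operation in steps (ii)--(iii) is Clifford, (c) goes through exactly as in \cref{cor:clifford}, and (a)--(b) reduce to standard facts about constant-depth Clifford circuits and about LDPC error correction against local stochastic noise.
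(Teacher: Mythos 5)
Your proposal is correct and follows essentially the same route as the paper: ATG-based single-shot preparation of the encoded $\ket{\overline{+}}$ data blocks and $\ket{\overline{0}}$ ancilla blocks, transversal CNOTs implementing the Steane-style measurement of $S_Z$, and a single aggregated Pauli feedforward computed by propagating the frames through the Clifford circuit, with the same union bound over the $m+m_Z$ initializations and the ancilla decodings. The only cosmetic differences are that you bound the CNOT circuit depth by $O(\ell)$ via edge-colouring where the paper uses $\le \ell^2$, and you measure the bulk within each primitive while deferring only the correction, whereas the paper defers the bulk measurement itself --- both immaterial.
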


\begin{proof}
    We present a proof by construction. First, $m+m_Z$ copies of the $\ket{\atg}$ are used to implicily prepare the relevant encoded data and ancilla $\ket{\bar{+}}^{\otimes m}$, $\ket{\bar{0}}^{\otimes m_Z}$ qubits (i.e. without measuring the Bulk qubits yet). Subsequently, transversal $\cnot$ gates are used to transversally apply the low-depth circuit guaranteed by \cref{fact:steanes}. The depth of the resulting circuit is a constant, simply that of \cref{fact:steanes} plus that of the $\ket{\atg}$ state preparation. 

    The correctness of this scheme follows from the correctness of \cref{fact:steanes}, the correctness of the $\ket{\atg}$ (\cref{theorem:main}), together with the fact that the transversal $\cnot$ gates apply the circuit independently across the $k$ logical qubits of the CSS code. Indeed, let us first consider the absence of noise. After measuring the Bulk qubits of the $\ket{\atg}$ and the logical ancilla qubits, one can compute the desired Pauli correction as follows. 

    \begin{enumerate}
        \item First computes the relevant Pauli frame for the $\ket{\atg}$ states using the Bulk information. 

        \item Since the $\cnot$ gates are Clifford, one can propagate these Pauli frames through the circuit of \cref{fact:steanes}, to identify the relevant correction after the $\cnot$ gates on both encoded data and ancilla blocks.

        \item Using the logical $Z$ measurements outcomes on all the encoded ancilla states (together with the relevant Pauli frames), one can infer the logical measurement outcomes. 

        \item Finally, compute the feed-forward Pauli correction guaranteed by \cref{fact:steanes}, and apply the corresponding logical Pauli's  to obtain $\ket{\bar{\psi}}$.
    \end{enumerate}

    Note that, if we could apply the corrections of step 2 before the logical $Z$ measurements in step 3, in effect we would be left with the output of the circuit of \cref{fact:steanes} encoded across $m+m_Z$ copies of the LDPC code; and thus, one can infer logical measurement outcomes from the same distribution as that which arises from \cref{fact:steanes}. 

    The presence of noise introduces two complications. First, the correctness of the $\ket{\atg}$ ensures, by a union bound, with probability at least $1-m\cdot n\cdot T\cdot 2^{-\Omega(\min(d, T))}$ all the $m+m_Z\leq 2\cdot m$ logical initialization steps are correct (\cref{theorem:main}). If the residual stochastic noise rate is $q$, then the application of the $\leq \ell^2$ layers of transversal $\cnot$ gates amplifies the effective noise rate to $\leq \ell^2\cdot q$. So long as this new noise rate is below the threshold for the logical measurement, every logical ancilla measurement succeeds with probability $\geq 1- m\cdot 2^{-\Omega(d)}$. 
\end{proof}

% \end{appendices}

\end{document}